\documentclass[a4paper,10pt]{amsart}
\usepackage[foot]{amsaddr}
\usepackage{amssymb}
\usepackage[utf8]{inputenc}
\usepackage{tikz, fp, tikz-cd}
\usepackage[a4paper]{geometry}
\usepackage{mathtools}
\usepackage{multicol}
\usepackage{bbm}
\usepackage{array}
\usepackage{stmaryrd}
\usepackage{csquotes}
\usepackage{environ} %package pour les macros d'édition
\usepackage{listings}
\usepackage{subcaption}
\usepackage{tabularray}
\usepackage{shuffle}
\usepackage[english]{babel}
\usepackage{siunitx}
\usepackage{nth}
\usepackage{hyperref}
\usepackage{soul}
\usepackage{algorithm}
\usepackage{algorithmicx}
\usepackage{algpseudocode, algpseudocodex}
\usepackage{alltt}
\usepackage{comment}

\usetikzlibrary{patterns}
\usetikzlibrary{arrows}
\usetikzlibrary{decorations.pathreplacing}
\usetikzlibrary{automata} 

\title{Effective approach of the Tridendriform Schroeder tree algebra }
\author{P. Catoire}
\address{Institut Montpelliérain Alexander Grothendieck, Université de Montpellier - Place
Eugène Bataillon, 34095 Montpellier Cedex 5, France}
\email{catoire\_research@proton.me}
\author{J. Fromentin}
\address{Université du Littoral Côte d’Opale, UR 2597 LMPA, Laboratoire de Mathématiques Pures
et Appliquées Joseph Liouville, F-62100 Calais, France}
\email{jean.fromentin@univ-littoral.fr}

\DeclareMathOperator{\V}{V}
\DeclareMathOperator{\Vint}{V_{\textrm{int}}}
\DeclareMathOperator{\E}{E}
\DeclareMathOperator{\Eint}{E_{\textrm{int}}}

\DeclareMathOperator{\Prim}{Prim}
\DeclareMathOperator{\ima}{Im}

\DeclareMathOperator{\Id}{Id}
\DeclareMathOperator{\IC}{InitChain}

\DeclareMathOperator{\Coass}{Coass}
\DeclareMathOperator{\Codend}{Codend}
\DeclareMathOperator{\Code}{Code}

\DeclareMathOperator{\Grid}{Grid}

\DeclareMathOperator{\bat}{Sh}
\DeclareMathOperator{\batc}{QSh}

\DeclareMathOperator{\LC}{LCD}
\DeclareMathOperator{\RC}{RCD}
\DeclareMathOperator{\RTL}{RTL}
\DeclareMathOperator{\LTL}{LTL}
\DeclareMathOperator{\GRID}{Gr}

\DeclareMathOperator{\TSym}{TSym}
\DeclareMathOperator{\K}{\mathbb{K}}

\DeclareMathOperator{\PW}{PW}
\DeclareMathOperator{\LPPW}{LPPW}
\DeclareMathOperator{\ew}{\varepsilon} %the empty word

\newcommand\Sch{\mathsf{Sch}}
\newcommand\SchF{\mathsf{FSch}}
\newcommand\Schl{\mathsf{Sch}_\ell}
\newcommand\nn{n}
\newcommand\NN{\mathbb{N}}
\newcommand\pleft{\prec}
\newcommand\pmid{\cdot}
\newcommand\pright{\succ}

\newcommand{\IEM}[2]{\llbracket #1,#2 \rrbracket}

\NewDocumentCommand{\Sage}{}{Sagemath}
\newcommand{\Schw}{Schroeder} %le mot Schroeder pour éviter les fautes de frappes

\NewDocumentCommand{\LCD}{m}{\ensuremath{\LC\left(#1\right)}}
\NewDocumentCommand{\RCD}{m}{\ensuremath{\RC\left(#1\right)}}

\newcounter{thmcount}
\newtheorem{thm}[thmcount]{Theorem}
\newtheorem{prop}[thmcount]{Proposition}
\newtheorem{Cor}[thmcount]{Corollary}
\newtheorem{Lemma}[thmcount]{Lemma}

\theoremstyle{definition}
\newtheorem{exam}[thmcount]{Example}
\newtheorem{defi}[thmcount]{Definition}
\newtheorem{Not}[thmcount]{Notation}
\newtheorem{Rq}[thmcount]{Remark}

\newcommand{\Y}{\raisebox{-0.25\height}{\begin{tikzpicture}[line cap=round,line join=round,>=triangle 45,x=0.2cm,y=0.2cm]
			\draw [line width=.5pt] (0.,0.)-- (0.,1.);
			\draw [line width=.5pt] (0.,1.)-- (-1.,2.);
			\draw [line width=.5pt] (0.,1.)-- (1.,2.);
			\filldraw (0,1) circle(0.15);
			\draw[color=black, fill=white] (-1,2) circle (0.15);
			\draw[color=black, fill=white] (1,2) circle (0.15);
\end{tikzpicture}}}

\newcommand{\YY}{\raisebox{-0.3\height}{\begin{tikzpicture}[line cap=round,line join=round,>=triangle 45,x=0.2cm,y=0.2cm]
			\draw (0,0)--(0,1);
			\draw (0,1)--(-2,3);
			\draw (-1.25,2.25)--(-0.5,3);
			\draw (1.25,2.25)--(0.5,3);
			\draw (0,1)--(2,3);
			\draw[color=black, fill=white] (-2,3) circle (0.15);
			\draw[color=black, fill=white] (2,3) circle (0.15);
			\draw[color=black, fill=white] (-0.5,3) circle (0.15);
			\draw[color=black, fill=white] (0.5,3) circle (0.15);
			\filldraw (0,1) circle(0.15);
			\filldraw (1.25,2.25) circle(0.15);
			\filldraw (-1.25,2.25) circle(0.15);
\end{tikzpicture}}}

\newcommand{\peignedroit}[3]{\raisebox{-0.5\height}{\begin{tikzpicture}[line cap=round,line join=round,>=triangle 45,x=0.25cm,y=0.25cm]
			\draw (0,0)--(0,1);
			\draw (0,1)--(-1,2) node[left]{${#1}$};
			\draw (0,1)--(6,7);
			\draw (2,3)--(1,4) node[left]{${#2}$};
			\draw (3.,4.5) node[rotate=45]{$\cdots$};
			\draw (5,6)--(4,7) node[left]{${#3}$};
			\filldraw (0,1) circle (0.15);
			\filldraw (2,3) circle (0.15);
			\filldraw (5,6) circle (0.15);
			\draw[color=black, fill=white](6,7) circle (0.15);
\end{tikzpicture} }}

\newcommand{\peignegauche}[3]{\raisebox{-0.5\height}{\begin{tikzpicture}[line cap=round,line join=round,>=triangle 45,x=0.25cm,y=0.25cm]
			\draw (0,0)--(0,1);
			\draw (0,1)--(1,2) node[right]{${#1}$};
			\draw (0,1)--(-6,7);
			\draw (-2,3)--(-1,4) node[right]{${#2}$};
			\draw (-3.,4.5) node[rotate=135]{$\cdots$};
			\draw (-5,6)--(-4,7) node[right]{${#3}$};
			\filldraw (0,1) circle (0.15);
			\filldraw (-2,3) circle (0.15);
			\filldraw (-5,6) circle (0.15);
			\draw[color=black, fill=white](-6,7) circle (0.15);
\end{tikzpicture}}}

\newcommand{\unitree}{\begin{tikzpicture}[line cap=round,line join=round,>=triangle 45,x=0.2cm,y=0.2cm]
		\filldraw[color=black,fill=white] (0,0) circle (0.15);		
\end{tikzpicture}}

\newcommand{\balais}{\raisebox{-0.3\height}{\begin{tikzpicture}[line cap=round,line join=round,>=triangle 45,x=0.2cm,y=0.2cm]
			\draw (0,0)--(0,2);
			\draw (0,1)-- (-1,2);
			\draw (0,1)-- (1,2);
			\filldraw[color=black, fill=white] (-1,2) circle(0.15);
			\filldraw[color=black, fill=white] (0,2) circle(0.15);
			\filldraw[color=black, fill=white] (1,2) circle(0.15);
			\filldraw[color=black] (0,1) circle(0.15);
\end{tikzpicture}}}
\newcommand{\balaisg}{\raisebox{-0.3\height}{\begin{tikzpicture}[line cap=round,line join=round,>=triangle 45,x=0.2cm,y=0.2cm]
			\draw (0,0)-- (0,1);
			\draw (0,1)-- (-1,2);
			\draw (-0.5,1.5)-- (0,2);
			\draw (0,1)-- (1,2);
			\filldraw[color=black, fill=white] (-1,2) circle(0.15);
			\filldraw[color=black, fill=white] (0,2) circle(0.15);
			\filldraw[color=black, fill=white] (1,2) circle(0.15);
			\filldraw[color=black] (0,1) circle(0.15);
			\filldraw[color=black] (-0.5,1.5) circle(0.15);
\end{tikzpicture}}}
\newcommand{\balaisd}{\raisebox{-0.3\height}{\begin{tikzpicture}[line cap=round,line join=round,>=triangle 45,x=0.2cm,y=0.2cm]
			\draw (0,0)-- (0,1);
			\draw (0,1)-- (-1,2);
			\draw (0.5,1.5)-- (0,2);
			\draw (0,1)-- (1,2);
			\filldraw[color=black, fill=white] (-1,2) circle(0.15);
			\filldraw[color=black, fill=white] (0,2) circle(0.15);
			\filldraw[color=black, fill=white] (1,2) circle(0.15);
			\filldraw[color=black] (0,1) circle(0.15);
			\filldraw[color=black] (0.5,1.5) circle(0.15);
\end{tikzpicture}}}

\NewEnviron{Jean}{{\color{red} \BODY }}
\NewEnviron{Pierre}{{\color{blue} \BODY }}
\NewEnviron{Loic}{{\color{olive} \BODY }}

\begin{document}

\begin{abstract}
	We introduce a primitive computation problem in the free tridendriform algebra generated by one element which is a Hopf algebra based on \Schw{} trees. We know a non-combinatorial complex iterative way to generate all of them. To understand it clearer, we want to implement this method on a computer. For this purpose, we need to introduce initial chains to implement Schroeder trees and the corresponding algebraic operations to compute the primitive elements and check numerically that they are. In this paper, we detail how we made the problem mathematically understandable for a computer and how we did implement it. 
\end{abstract}

\maketitle

\section{Introduction}

\subsection{Aim and structure of the paper}

The objective of this work is to implement the operations of a Hopf algebra involving \Schw{} trees. The vector space of \Schw{} trees introduced in definition~\ref{defi:Sch_tree} can be endowed with three linear maps $\prec, \succ$ and $\cdot$ making it a free tridendriform algebra structure~\cite{LodayRonco_98,Catoire_2023, Manchon_2020} where the products can be described with quasi-shuffles (theorem~\ref{thm:produit}) . 
Tridendriform algebras arise in different topics in mathematics like free probability~\cite{Celestino_2022, Ebrahimi-Fard_Patras_2018}, the study of Rota-Baxter algebras~\cite{Manchon_RB_2020, Gao_2019} or in other topics like the study of multiple zeta values~\cite{Catoire_2025, Ho00}.
Some computation tools already exist for some algebras like Loday-Ronco algebra~\cite{Sage_dend} on \Sage{}~\cite{Sage_doc,Sage_book} but not for the free tridendriform algebra. 
Hence, implementing the free tridendriform algebra~\cite{Catoire_2023} (over one generator) allows us to implement any tridendriform algebras coding a quotient map by a tridendriform ideal from \Schw{} trees to the desired objects.

This algebra turns out to be a Hopf algebra whose coproduct is given in theorem~\ref{thm:coproduit}. Computing the space of primitive elements of this algebra, we describe at least the largest cocommutative algebra inside it thanks to the Cartier-Quillen-Milnor-Moore theorem~\cite{CartierPatras_2021,May_69,MilnorMoore_65}. An inductive algorithm is given in a previous work~\cite{Catoire_2023} (shown in figure~\ref{fig:algo_idea}) to compute those elements using only $\prec, \succ$ and $\cdot$ operations. 

We want to compute all primitives up to a certain degree in order to eventually spot some combinatorial properties. However, the number of primitive elements is given by big \Schw{} numbers~\cite[referenced A00638]{OEIS} which are increasing quickly. So, we had to implement this algebraic structure in a computer.
To solve this problem, we used a particular representation of \Schw{} trees as \emph{initial chains} of a poset $(\mathcal{P}(\IEM{1}{n},\subseteq)$ (definition~\ref{def:init_chain}) or specific sequences over the alphabet $\{0,1\}$ (definition~\ref{defi:lifetime}). Those representations of trees are called \emph{codes}. Thanks to this point of view, we are able to describe the action of quasi-shuffles on a pair of codes in definition~\ref{defi:quasishuffle_code} such that it describes the initial action of these elements onto a pair of \Schw{} trees (theorem~\ref{Cor:iso_codes_trees}). We present some algorithms in pseudo-code that were used to implement this algebraic structure.

The paper is divided in the following way:
\begin{itemize}
	\item Section~\ref{sec:the_problem} describes and gives the minimum piece of information needed to understand the general mathematical setting in which we are working. We introduce the main objects of this work, \Schw{} trees in definition~\ref{defi:Sch_tree} and quasi-shuffles in definition~\ref{def:quasi_shuffles}. Then, we describe the Hopf algebra structure on $(\K\Sch,\prec, \succ, \cdot)$ and we introduce the algorithm in figure~\ref{fig:algo_idea} which computes its space of primitive elements.
	\item Section~\ref{sec:pov} introduces the concept of levelled \Schw{} trees (definition~\ref{defi:levelled_trees}), initial chains and the map associating to any initial chain a $0$'s  and $1$'s sequence (definition~\ref{def:init_chain}). Then, we choose a way to map any \Schw{} tree to a levelled \Schw{} tree (proposition~\ref{prop:trees_to_levels}) which gives a way to get an initial chain of a part poset from any \Schw{} tree. A sequence obtained by this transformation will be called a \emph{tree code}. Finally, we present a tridendriform structure on tree codes (definition~\ref{defi:quasishuffle_code}) that turns out to be isomorphic to the one on \Schw{} trees (corollary~\ref{Cor:iso_codes_trees}). We also give details how one finds admissible cuts from a tree code, see section~\ref{sec:cuts}.
	\item Section~\ref{sec:algorithms} gives the pseudo-codes of algorithms implemented in {\ttfamily SageMaths}~\cite{github} enabling us to compute the action of a quasi-shuffle onto a pair of trees, to cut a tree given a pruning and to enumerate prunings.
	\item The last section~\ref{sec:coding} explains the technical details we used to perform our computations and the results it produces. In particular, the code in {\ttfamily C++} is available online~\cite{github}.
\end{itemize}

\subsection{Table of notations}

We give a list of most used notations in this work:
\begin{itemize} 
	\item for any $n\in\NN^*, \IEM{1}{n}$ is the set $\{1,\dots, n\}$;
	\item for any set $X$, we denote by $\K X$ the vector space over $\K$ whose basis is $X$. We denote $X^{\star}$ the set of finite words over the alphabet $X$. 
	We also denote by $T(\K X)$ the $\K$-vector space whose basis is $X^{\star}$. For any $n\in\NN,$ we denote $T^{\leq n}(\K X)$ the vector space of words with less than $n$ letters;
	\item for any graded set $X=\bigoplus_{n=0}^{+\infty} X_n,$ for any $n\in\NN$, we define $T(\K X)(n)$ the subspace whose basis is the set $X^{\star}$ where the sum of the degree's of its letters is $n$; 
	\item $\Sch(\nn)$ is the set of Schroeder tree with $\nn$ leaves;
	\item $\SchF=\Sch^{\star}$ is the set of \emph{forests} of Schroeder trees;
	\item $(\K\Sch, \pleft, \pmid, \pright)$ the graded tridendriform algebra on Schroeder tree;
%	\item $\vee$ is the grafting operator of many trees onto a common root;
	\item given a sequence $x=(x_1,\dots, x_n)$ where each  of its element appears at most once and $i\in\IEM{1}{n},$ we denote by $x\setminus (x_i)$ the sequence $x$ without the term $x_i$;
	\item given a rooted tree $t$, we denote by $\V(t)$ and $\E(t)$ its set of vertices and edges. We denote by $\Vint(t)$ and $\Eint(t)$ its set of \emph{internal} vertices and edges
\end{itemize}

\subsection{Acknowledgments}

The authors would like to thank LMPA of ULCO for welcoming the first author during this collaboration.

\section{About the original problem} \label{sec:the_problem}

\subsection{\Schw{} trees and comb representations}

Let us remind the following definition:
\begin{defi}\label{defi:Sch_tree}
	A \emph{\Schw{} tree} is a planar rooted tree such that any vertices which is not a leaf has at least $2$ children.
	This set can be graded with the number of leaves minus $1$. For any $n\in\NN\setminus\{0\}$, we define $\Sch(\nn)$ the set of \Schw{} trees with $n+1$ leaves.
	We define $\Sch(0)=\left\{\unitree\right\}$, where $\unitree$ is the tree whose root is also a leaf, and put: 
	\[
	\Sch\coloneqq \bigcup_{n=0}^{+\infty} \Sch(\nn) \text{ and } \overline{\Sch}=\bigcup_{n=1}^{+\infty} \Sch(\nn).
	\]
	Given any $t\in\Sch,$ we will denote $\Vint(t)$ the set of vertices which are not leaves and $\E(t)$ its set of edges. We call them \emph{internal vertices} of $t$. In the sequel, they are drawn as black disks.
\end{defi}
\begin{exam}\label{exam:Sch3} %Attention, c'est 3 ici, \Sch(n) c'est les arbres à n+1 feuilles.
	The elements of $\Sch(3)$ are
	\vspace{0.5em}
	\begin{center}
		\begin{tikzpicture}[x=1em,y=1em]
			\draw(-1.5,1) -- (0,0);
			\draw(-0.5,1) -- (0,0);
			\draw(0.5,1) -- (0,0);
			\draw(1.5,1) -- (0,0);
			\draw[color=black, fill=white](-1.5,1) circle(0.15);
			\draw[color=black, fill=white](-0.5,1) circle(0.15);
			\draw[color=black, fill=white](0.5,1) circle(0.15);
			\draw[color=black, fill=white](1.5,1) circle(0.15);
			\fill(0,0) circle(0.15) node[below]{$t_1$};
		\end{tikzpicture}
		\hspace{1em}
		\begin{tikzpicture}[x=1em,y=1em]
			\draw(-1,2) -- (0,1);
			\draw(0,2) -- (0,1);
			\draw(1,2) -- (0,1);
			\draw(2,1) -- (1,0);
			\draw(0,1) -- (1,0);
			\draw[color=black, fill=white](-1,2) circle(0.15);
			\draw[color=black, fill=white](0,2) circle(0.15);
			\draw[color=black, fill=white](1,2) circle(0.15);
			\draw[color=black, fill=white](2,1) circle(0.15);
			\fill(1,0) circle(0.15) node[below]{$t_2$};
			\fill (0,1) circle(0.15);
		\end{tikzpicture}
		\hspace{1em}
		\begin{tikzpicture}[x=1em,y=1em]
			\draw(-1,1) -- (0,0);
			\draw(0,2) -- (1,1);
			\draw(1,2) -- (1,1);
			\draw(2,2) -- (1,1);
			\draw(1,1) -- (0,0);
			\draw[color=black, fill=white](-1,1) circle(0.15);
			\draw[color=black, fill=white](0,2) circle(0.15);
			\draw[color=black, fill=white](1,2) circle(0.15);
			\draw[color=black, fill=white](2,2) circle(0.15);
			\fill(0,0) circle(0.15) node[below]{$t_3$};
			\fill (1,1) circle(0.15);
		\end{tikzpicture}
		\hspace{1em}
		\begin{tikzpicture}[x=1em,y=1em]
			\draw(1,1) -- (-0.25,0);
			\draw(-0.25,1) -- (-0.25,0);
			\draw(-1,2) -- (-1.5,1);
			\draw(-2,2) -- (-1.5,1);
			\draw(-1.5,1) -- (-0.25,0);
			\draw[color=black, fill=white](1,1) circle(0.15);
			\draw[color=black, fill=white](-0.25,1) circle(0.15);
			\draw[color=black, fill=white](-1,2) circle(0.15);
			\draw[color=black, fill=white](-2,2) circle(0.15);
			\fill(-0.25,0) circle(0.15) node[below]{$t_4$};
			\fill (-1.5,1) circle(0.15);
		\end{tikzpicture}
		\hspace{1em}
		\begin{tikzpicture}[x=1em,y=1em]
			\draw(-0.25,2) -- (0.25,1);
			\draw(0.75,2) -- (0.25,1);
			\draw(-1,1) -- (0.25,0);
			\draw(0.25,1) -- (0.25,0);
			\draw(1.5,1) -- (0.25,0);
			\draw[color=black, fill=white](-1,1) circle(0.15);
			\draw[color=black, fill=white](-0.25,2) circle(0.15);
			\draw[color=black, fill=white](0.75,2) circle(0.15);
			\draw[color=black, fill=white](1.5,1) circle(0.15);
			\fill(0.25,0) circle(0.15) node[below]{$t_5$};
			\fill (0.25,1) circle(0.15);
		\end{tikzpicture}
		\hspace{1em}
		\begin{tikzpicture}[x=1em,y=1em]
			\draw(-1,1) -- (0.25,0);
			\draw(0.25,1) -- (0.25,0);
			\draw(1,2) -- (1.5,1);
			\draw(2,2) -- (1.5,1);
			\draw(1.5,1) -- (0.25,0);
			\draw[color=black, fill=white](-1,1) circle(0.15);
			\draw[color=black, fill=white](0.25,1) circle(0.15);
			\draw[color=black, fill=white](1,2) circle(0.15);
			\draw[color=black, fill=white](2,2) circle(0.15);
			\fill(0.25,0) circle(0.15) node[below]{$t_6$};
			\fill (1.5,1) circle(0.15);
		\end{tikzpicture}
		\vspace{1em}
	\end{center}
	\begin{center}
		\begin{tikzpicture}[x=1em,y=1em]
			\draw(0,2) -- (0.5,1);
			\draw(1,2) -- (0.5,1);
			\draw(2,2) -- (2.5,1);
			\draw(3,2) -- (2.5,1);
			\draw(0.5,1) -- (1.5,0);
			\draw(2.5,1) -- (1.5,0);
			\draw[color=black, fill=white](0,2) circle(0.15);
			\draw[color=black, fill=white](1,2) circle(0.15);
			\draw[color=black, fill=white](2,2) circle(0.15);
			\draw[color=black, fill=white](3,2) circle(0.15);
			\fill(1.5,0) circle(0.15) node[below]{$t_7$};
			\fill (0.5,1) circle(0.15);
			\fill (2.5,1) circle(0.15);
		\end{tikzpicture}
		\hspace{1em}
		\begin{tikzpicture}[x=1em,y=1em]
			\draw(-1,2) -- (-0.5,1);
			\draw(0,2) -- (-0.5,1);
			\draw(1,1) -- (0.25,0);
			\draw(-0.5,1) -- (0.25,0);
			\draw(0.25,0) -- (1.125, -1);
			\draw(2,0) -- (1.125, -1);
			\draw[color=black, fill=white](-1,2) circle(0.15);
			\draw[color=black, fill=white](0,2) circle(0.15);
			\draw[color=black, fill=white](1,1) circle(0.15);
			\draw[color=black, fill=white](2,0) circle(0.15);
			\fill(1.125,-1) circle(0.15) node[below]{$t_8$};
			\fill (0.25,0) circle(0.15);
			\fill (-0.5,1) circle(0.15);
		\end{tikzpicture}
		\hspace{1em}
		\begin{tikzpicture}[x=1em,y=1em]
			\draw(-1,2) -- (-0.5,1);
			\draw(0,2) -- (-0.5,1);
			\draw(1,1) -- (0.25,0);
			\draw(-0.5,1) -- (0.25,0);
			\draw(0.25,0) -- (-0.75,-1);
			\draw(-1.75,0) -- (-0.75,-1);
			\draw[color=black, fill=white](-1,2) circle(0.15);
			\draw[color=black, fill=white](0,2) circle(0.15);
			\draw[color=black, fill=white](1,1) circle(0.15);
			\draw[color=black, fill=white](-1.75,0) circle(0.15);
			\fill(-0.75,-1) circle(0.15) node[below]{$t_9$};
			\fill (0.25,0) circle (0.15);
			\fill (-0.5,1) circle(0.15);
		\end{tikzpicture}
		\hspace{1em}
		\begin{tikzpicture}[x=1em,y=1em]
			\draw(1,2) -- (0.5,1);
			\draw(0,2) -- (0.5,1);
			\draw(-1,1) -- (-0.25,0);
			\draw(0.5,1) -- (-0.25,0);
			\draw(-0.25,0) -- (0.75,-1);
			\draw(1.75,0) -- (0.75,-1);
			\draw[color=black, fill=white](1,2) circle(0.15);
			\draw[color=black, fill=white](0,2) circle(0.15);
			\draw[color=black, fill=white](-1,1) circle(0.15);
			\draw[color=black, fill=white](1.75,0) circle(0.15);
			\fill(0.75,-1) circle(0.15) node[below]{$t_{10}$};
			\fill (-0.25,0) circle(0.15);
			\fill (0.5,1) circle (0.15);
		\end{tikzpicture}
		\hspace{1em}
		\begin{tikzpicture}[x=1em,y=1em]
			\draw(1,2) -- (0.5,1);
			\draw(0,2) -- (0.5,1);
			\draw(-1,1) -- (-0.25,0);
			\draw(0.5,1) -- (-0.25,0);
			\draw(-0.25,0) -- (-1.125, -1);
			\draw(-2,0) -- (-1.125, -1);
			\draw[color=black, fill=white](1,2) circle(0.15);
			\draw[color=black, fill=white](0,2) circle(0.15);
			\draw[color=black, fill=white](-1,1) circle(0.15);
			\draw[color=black, fill=white](-2,0) circle(0.15);
			\fill(-1.125,-1) circle(0.15) node[below]{$t_{11}$};
			\fill (-0.25,0) circle(0.15);
			\fill (0.5,1) circle(0.15);
		\end{tikzpicture}.
	\end{center}
\end{exam}

Let $t$ be a tree. It can always be seen repsectively as a  right comb or a left comb:
\begin{center}
	\raisebox{-0.5\height}{\begin{tikzpicture}[line cap=round,line join=round,x=1em,y=1 em]
			% Etage 1
			\draw (0,0)--(0,1);
			\filldraw (0,1) circle (0.15);
			%Etage 2
			\draw (0,1)--(-2,2)
			 node[above]{\footnotesize{$t^1_1$}};
			\draw (0,1)--(0,2) node[above]{\footnotesize{$t^1_{n_1}$}};
			\draw (-1,2) node{$\cdots$};
			% Etage 3
			\draw (0,1)--(7,8);
			\filldraw[color=white] (7,8) circle (0.15);
			\draw (3,4)--(1,5) node[above]{\footnotesize{$t^2_1$}};
			\filldraw (3,4) circle (0.15);
			\draw (3,4)--(3,5) node[above]{\footnotesize{$t^2_{n_2}$}};
			\draw (2,5) node{$\cdots$};
			%Etage...
			\draw (4.7,6) node[rotate=45]{$\cdots$};
			%Etage k
			\draw[color=black, fill=white](7,8) circle(0.15);
			\filldraw (6,7) circle (0.15);
			\draw (6,7)--(4,8) node[above]{\footnotesize{$t^k_1$}};
			\draw (6,7)--(6,8) node[above]{\footnotesize{$t^k_{n_k}$}};
			\draw (5,8) node {$\cdots$};
	\end{tikzpicture}} or
	\raisebox{-0.5\height}{\begin{tikzpicture}[line cap=round,line join=round,x=1em,y=1 em]
			% Etage 1
			\draw (0,0)--(0,1);
			\filldraw (0,1) circle (0.15);
			%Etage 2
			\draw (0,1)--(2,2)
			 node[above]{\footnotesize{$t^1_{n_1}$}};
			\draw (0,1)--(0,2) node[above]{\footnotesize{$t^1_{1}$}};
			\draw (1,2) node{$\cdots$};
			% Etage 3
			\draw (0,1)--(-7,8);
			\filldraw[color=white] (-7,8) circle (0.15);
			\draw (-3,4)--(-1,5) node[above]{\footnotesize{$t^2_{n_2}$}};
			\filldraw (-3,4) circle (0.15);
			\draw (-3,4)--(-3,5) node[above]{\footnotesize{$t^2_{1}$}};
			\draw (-2,5) node{$\cdots$};
			%Etage...
			\draw (-4.7,6) node[rotate=45]{$\cdots$};
			%Etage k
			\draw[color=black, fill=white](-7,8) circle(0.15);
			\filldraw (-6,7) circle (0.15);
			\draw (-6,7)--(-4,8) node[above]{\footnotesize{$t^k_{n_k}$}};
			\draw (-6,7)--(-6,8) node[above]{\footnotesize{$t^k_{1}$}};
			\draw (-5,8) node {$\cdots$};
	\end{tikzpicture}},
\end{center}
where $k$ is the number of nodes on the right-most branch of $t$ and for $i\in\IEM{1}{k}, n_i+1$ is the number of sons of the $i$-th node of this branch.

\begin{Not}
	Let $F=t_1\dots t_n$ be a forest composed of $n$ trees. For simplicity, we write:
	\[
	\raisebox{-0.3\height}{\begin{tikzpicture}[line cap=round,line join=round,x=0.3cm,y=0.3cm]
			% Etage 1
			\draw (0,0)--(0,1);
			%Etage 2
			\draw (0,1)--(-1,2) node[left]{$F$};
			\draw (0,1)--(1,2);
			\draw[color=black, fill=white] (1,2) circle (0.15);
			\fill (0,1) circle(0.15);
	\end{tikzpicture}}  \text{ instead of }
		\raisebox{-0.5\height}{\begin{tikzpicture}[line cap=round,line join=round,x=0.5cm,y=0.5cm]
				% Etage 1
				\draw (0,0)--(0,1);
				%Etage 2
				\draw (0,1)--(-2,2) node[left,above]{$t_1$};
				\draw (0,1)--(0,2) node[right,above]{$t_{n}$};
				\draw (-1,2) node{$\cdots$};
				\draw (0,1)--(2,2);
				\draw[color=black, fill=white] (2,2) circle (0.15);
				\fill (0,1) circle(0.15);
		\end{tikzpicture}}
		 \text{ and }
		\raisebox{-0.3\height}{\begin{tikzpicture}[line cap=round,line join=round,x=0.3cm,y=0.3cm]
			% Etage 1
			\draw (0,0)--(0,1);
			%Etage 2
			\draw (0,1)--(1,2) node[right]{$F$};
			\draw (0,1)--(-1,2);
			\draw[color=black, fill=white] (-1,2) circle (0.15);
			\fill (0,1) circle(0.15);
	\end{tikzpicture}}  \text{ instead of }
		\raisebox{-0.5\height}{\begin{tikzpicture}[line cap=round,line join=round,x=0.5cm,y=0.5cm]
				% Etage 1
				\draw (0,0)--(0,1);
				%Etage 2
				\draw (0,1)--(2,2) node[left,above]{$t_{n}$};
				\draw (0,1)--(0,2) node[right,above]{$t_{1}$};
				\draw (1,2) node{$\cdots$};
				\draw (0,1)--(-2,2);
				\draw[color=black, fill=white] (-2,2) circle (0.15);
				\fill (0,1) circle(0.15);
		\end{tikzpicture}}.
	\]
\end{Not}

\begin{defi}With these notations, we notice that any tree $t$ can be seen respectively as:
\begin{align}\label{eq:tree_combs}
	t&=\peignedroit{F_1}{F_2}{F_k}
	& \text{ and } &&  t=\peignegauche{F_{k+1}}{F_{k+2}}{F_{k+l}},
\end{align}
where $F_1,\dots,F_k$ and $F_{k+1},\dots, F_{k+l}$ are non-empty forests. The left representation is called the \emph{left comb} representation of $t$. The right one is called the \emph{right comb} reprensentation.
\end{defi}

\subsection{Our objective}

We describe the structure of bialgebra onto the vector space $\K\Sch$ and we explain how to compute its primitive elements thanks to previous results~\cite{Catoire_2023,euleridem}.

\begin{defi}\label{def:tridend}
	Let $\K$ be a field and $A$ be $\K$-vector space. We say $(A,\prec,\cdot,\succ)$ is a \emph{tridendriform algebra} if $\prec,\cdot$ and $\succ$ are three linear maps from $A\otimes A$ to $A$ such that for any $a,b,c\in A$:
	\begin{align}
		(a\prec b)\prec c&=a\prec(b*c), \label{eq:tri1}\\
		(a\succ b)\prec c&=a\succ(b\prec c), \label{eq:tri2} \\
		(a* b)\succ c&=a\succ(b\succ c),  \label{eq:tri3} \\
		(a\succ b)\cdot c&=a\succ(b\cdot c), \label{eq:tri4} \\
		(a\prec b)\cdot c&=a\cdot(b\succ c), \label{eq:tri5} \\
		(a\cdot b)\prec c&=a\cdot(b\prec c), \label{eq:tri6} \\
		(a\cdot b)\cdot c&=a\cdot (b\cdot c), \label{eq:tri7}
	\end{align}
	where $*$ is the \emph{associative product} of the tridendriform algebra defined for any $a,b\in A$ by $a*b\coloneqq a\prec b + a\cdot b +a\succ b$. 
	We call the products $\succ,\prec,\cdot$ respectively \emph{right}, \emph{left} and \emph{middle}.
\end{defi} 

The free tridendriform algebra is built on the vector space $\K\overline{\Sch}.$ We then add a unit $\unitree$ to the product $*$ by hand to get an algebra over $\K\overline{\Sch}\oplus \K\cdot \unitree=\K\Sch$ with product $*$. We will denote it by $(\K\Sch{},\prec,\cdot,\succ)$ this tridendriform algebra with an additionnal unit. We have a combinatorial interpretation of the product $*$. For this let us introduce:
\begin{defi}[Quasi-shuffle]\label{def:quasi_shuffles}
	Let $k,l\in\NN\setminus \{0\}$. A \emph{$(k,l)$-quasi-shuffle} is a surjective map $\sigma:\IEM{1}{k+l}\twoheadrightarrow\IEM{1}{n}$ for some positive integer $n$
	such that:
	\[
	\sigma(1)<\cdots<\sigma(k) \text{  and  } \sigma(k+1)<\cdots<\sigma(k+l).
	\]
	We will denote $\batc(k,l)$ the set of all $(k,l)$-quasi-shuffles.
\end{defi}
One can easily prove that~\cite[lemma~27]{Palacios_06}: 
\begin{Lemma}\label{lem:Indution_batc}
	Let $k,l\geq 2$. Then, the following sets are in bijections:
	\begin{align*}
		\{\sigma\in\batc(k,l) \,|\, \sigma^{-1}(\{1\})=\{1\}\} &\simeq \batc(k-1,l), \\
		\{\sigma\in\batc(k,l) \,|\, \sigma^{-1}(\{1\})=\{1,k+1\}\} &\simeq \batc(k-1,l-1), \\
		\{\sigma\in\batc(k,l) \,|\, \sigma^{-1}(\{1\})=\{k+1\}\} &\simeq \batc(k,l-1).
	\end{align*}
	For other cases, we have:
	\[ % font issue for the and here
	\batc(1,0)=\left\{\Id_{\{1\}}\right\}=\batc(0,1) \text{ \text{\rm and} } \batc(1,1)=\left\{\Id_{\{1,2\}}, (2,1)\right\}.
	\]
\end{Lemma}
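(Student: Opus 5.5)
The key observation is that for any $\sigma\in\batc(k,l)$ with image $\IEM{1}{n}$, the value $1$ must be attained, and by the monotonicity conditions it can only be attained at the smallest index of each block, namely at $1$ and/or at $k+1$. Indeed, if $\sigma(i)=1$ with $1<i\leq k$, then $\sigma(1)<\sigma(i)=1$, which is impossible; the same argument applies in the second block. Hence $\sigma^{-1}(\{1\})$ is one of $\{1\}$, $\{k+1\}$ or $\{1,k+1\}$. This gives a partition of $\batc(k,l)$ into the three sets on the left-hand sides of the statement.

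For each case I will write down an explicit map and its inverse.

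\emph{Case $\sigma^{-1}(\{1\})=\{1\}$.} Define $\tau:\IEM{1}{k+l-1}\to\IEM{1}{n-1}$ by $\tau(i)=\sigma(i+1)-1$. Since $1$ is attained only at index $1$, all the values $\sigma(i+1)$ are at least $2$, so $\tau$ is well defined. It is surjective onto $\IEM{1}{n-1}$ because $\sigma$ is surjective and only index $1$ is sent to $1$. The first $k-1$ values of $\tau$ and the last $l$ values are each strictly increasing, so $\tau\in\batc(k-1,l)$. The inverse sends $\tau$ to $\sigma(1)=1$ and $\sigma(i+1)=\tau(i)+1$. One checks that this $\sigma$ is increasing on the first block (since $1<\tau(1)+1$) and surjective, and that $1$ is attained only at index $1$.

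\emph{Case $\sigma^{-1}(\{1\})=\{k+1\}$.} Delete index $k+1$ and subtract $1$ from the remaining values, which lands in $\batc(k,l-1)$.

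\emph{Case $\sigma^{-1}(\{1\})=\{1,k+1\}$.} Delete both indices $1$ and $k+1$ and subtract $1$, which lands in $\batc(k-1,l-1)$.

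In each case the verification is identical: surjectivity transfers because the value $1$ is exactly the value removed, monotonicity is preserved under deletion of the first element of a block, and the inverse map prepends the value $1$. The hypothesis $k,l\geq 2$ guarantees that the resulting blocks are nonempty, so the targets are quasi-shuffle sets in the sense of definition~\ref{def:quasi_shuffles}, which requires $k,l\geq 1$. Strictly, the first case only needs $k\geq 2$, and the other cases need the analogous conditions.

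For the base cases, a $(1,0)$-quasi-shuffle is a surjection from $\{1\}$, hence the identity; the same holds for $(0,1)$, interpreting the definition with an empty block. For $(1,1)$, there are two possibilities. Either $\sigma(1)\neq\sigma(2)$, which gives a bijection $\{1,2\}\to\{1,2\}$: the identity or the transposition $(2,1)$. Or $\sigma(1)=\sigma(2)=1$, which is the quasi-shuffle merging the two letters. The statement lists only two elements, so I would need to reconcile this by reading the notation $(2,1)$ carefully, or by noting that the paper's convention may exclude the merged map. I expect this to be the only delicate point. Otherwise the main obstacle is purely bookkeeping: getting the index shifts right when deleting index $k+1$ from the middle of the domain.
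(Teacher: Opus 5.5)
Your argument for the three bijections is correct. It is the standard one; the paper itself gives no proof and only refers to Lemma~27 of Palacios. The value $1$ can only be attained at the first index of each block. Deleting the indices sent to $1$ and subtracting $1$ from the remaining values preserves surjectivity and both monotonicity conditions, and re-inserting the value $1$ at those positions is the inverse. Your remark on the hypotheses is also right. If you allow empty blocks, so that $\batc(0,l)$, $\batc(k,0)$ and $\batc(0,0)$ each consist of a single map, the three bijections hold for all $k,l\geq 1$.

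The one thing to change is your last step. There is nothing to reconcile: your count is correct and the statement is in error. By Definition~\ref{def:quasi_shuffles}, the surjection $\sigma=(1,1)\colon\IEM{1}{2}\twoheadrightarrow\IEM{1}{1}$ is a $(1,1)$-quasi-shuffle, since the monotonicity conditions are vacuous on one-element blocks. Hence $\batc(1,1)=\{(1,2),(2,1),(1,1)\}$ has three elements.

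Neither of your proposed reconciliations works.
\begin{itemize}
\item The notation $(2,1)$ is one-line notation, exactly as in the example $\sigma=(1,3,2,3)$, so it cannot hide the merged map.
\item No convention in the paper excludes $(1,1)$; the paper actually needs it. For $k=l=1$, the middle product in Theorem~\ref{thm:produit} is the sum over $\sigma^{-1}(\{1\})=\{1,2\}$, which is exactly $\sigma=(1,1)$. Excluding it would force $\Y\cdot\Y=0$. That contradicts Theorem~\ref{thm:coassincodend}, where $\theta_1(\Y)=\Y\cdot\Y$ must span $\Prim_{\Codend}(\K\Sch(2))=\left\langle\balais\right\rangle$.
\item The initialization step in the proof of the theorem preceding Corollary~\ref{Cor:iso_codes_trees} uses $\sigma=(1,1)$ explicitly.
\end{itemize}

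Your own bijections, extended to $k=l=1$ with empty blocks allowed, confirm the count: $|\batc(1,1)|=|\batc(0,1)|+|\batc(0,0)|+|\batc(1,0)|=3$. These are the Delannoy numbers. So prove the bijections as you did, and state the corrected base case $\batc(1,1)=\{\Id_{\{1,2\}},(2,1),(1,1)\}$ rather than hedging.
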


Let $t,s$ two trees different from $\unitree$. We consider $t$ as a right comb and $s$ as a left comb:
\begin{align}\label{eq:comb_ts}
	t&=\peignedroit{F_1}{F_2}{F_k}
	& \text{ and } &&  s=\peignegauche{F_{k+1}}{F_{k+2}}{F_{k+l}},
\end{align}
where for all $i\in\IEM{1}{k+l}, F_i$ is a non-empty forest. Here, $k$ represents the number of nodes on the right-most branch of $t$ and $l$ is the number of nodes on the left-most branch of $s$.

\begin{defi}\label{def:quasiaction}
	Let $t$ and $s$ be two trees which respective right comb representation and left comb representation are given in equation~\eqref{eq:comb_ts}. Using the same notations, let $\sigma$ be a $(k,l)$-quasi-shuffle which has for image $\IEM{1}{n}$ with $n$ a non-negative integer. We denote $\sigma(t,s)$ the tree obtained this way:
	\begin{enumerate}
		\item  We first consider the ladder with $n$ nodes:
		\begin{center}
			\raisebox{-0.5\height}{\begin{tikzpicture}[line cap=round,line join=round,>=triangle 45,x=0.3cm,y=0.3cm]
					\begin{scope}{shift={(-3,0)}}
						\draw (0,0)--(0,2.5);
						\draw[dashed] (0,2.5)--(0,5);
						\draw (0,5)--(0,6);
						\filldraw [black] (0,1) circle (2pt) node[anchor=west]{Node $1$};
						\filldraw [black] (0,2) circle (2pt) node[anchor=west]{Node $2$};
						\filldraw [black] (0,5) circle (2pt) node[anchor=west]{Node $n$};
					\end{scope}
			\end{tikzpicture}}.
		\end{center}
		\item For all $i\in\IEM{1}{k},$ we graft $F_i$ as the \emph{left} son at the node $\sigma(i)$.
		\item For all $i\in\IEM{k+1}{k+l},$ we graft $F_i$ as \emph{right} son at the node $\sigma(i)$.
	\end{enumerate}
\end{defi}

\begin{exam}\label{exam:quasiaction}
	Consider the $(2,2)$-quasi-shuffle $\sigma=(1,3,2,3)$.
	
	 Let us take 
	$t=$	\raisebox{-0.3\height}{\begin{tikzpicture}[line cap=round,line join=round,>=triangle 45,x=0.2cm,y=0.2cm]
			% Etage 1
			\draw (0,0)--(0,1);
			%Etage 2
			\draw (0,1)--(-1,2) node[left]{$F_1$};
			% Etage 3
			\draw (0,1)--(2,3);
			\draw (1,2)--(0,3) node[left,above]{$F_2$};
			%marquages des sommets
			\filldraw (0,1) circle (0.15);
			\filldraw (1,2) circle (0.15);
			\draw[color=black, fill=white] (2,3) circle (0.15);
	\end{tikzpicture}} and $s=\raisebox{-0.3\height}{\begin{tikzpicture}[line cap=round,line join=round,>=triangle 45,x=0.2cm,y=0.2cm]
			% Etage 1
			\draw (0,0)--(0,1);
			%Etage 2
			\draw (0,1)--(1,2) node[right]{$F_{3}$};
			% Etage 3
			\draw (0,1)--(-2,3);
			\draw (-1,2)--(0,3) node[right,above]{$F_{4}$};
			%marquages des sommets
			\filldraw (0,1) circle (0.15);
			\filldraw (-1,2) circle (0.15);
			\draw[color=black, fill=white] (-2,3) circle (0.15);
	\end{tikzpicture}}$.
	Then $\sigma(t,s)=$\raisebox{-0.5\height}{\begin{tikzpicture}[line cap=round,line join=round,>=triangle 45,x=0.3cm,y=0.3cm]
			\draw (0,0)--(0,4);
			%Noeud 1
			\draw (0,1)--(-1,2) node[left]{$F_1$};
			%Noeud 2
			\draw (0,2)--(1,3) node[right]{$F_3$};
			%Noeud 3
			\draw (0,3)--(-1,4) node[left]{$F_2$};
			\draw (0,3)--(1,4) node[right]{$F_4$};
			%marquage des noeuds
			\filldraw (0,1) circle (0.15);
			\filldraw (0,2) circle (0.15);
			\filldraw (0,3) circle (0.15);
			\draw[color=black, fill=white] (0,4) circle (0.15);
	\end{tikzpicture}}.
\end{exam} 

\begin{thm}[\cite{Catoire_2023}]\label{thm:produit}
	Let $t,s$ be two trees different from $\unitree$ as described above. Then: 
	\[
	t*s=\sum_{\sigma\in \batc(k,l)}\sigma(t,s).
	\]
	Moreover:
	\begin{gather*}
		t\prec s=\sum_{\substack{\sigma\in\batc(k,l) \\ \sigma^{-1}(\{1\})=\{1\}}} \sigma(t,s), 
		 t\succ s=\sum_{\substack{\sigma\in\batc(k,l) \\ \sigma^{-1}(\{1\})=\{k+1\}}} \sigma(t,s),\\
		t\cdot s=\sum_{\substack{\sigma\in\batc(k,l) \\ \sigma^{-1}(\{1\})=\{1,k+1\}}} \sigma(t,s).
	\end{gather*}
\end{thm}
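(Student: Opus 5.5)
The plan is to prove the formulas by induction on $k+l$. I compare them with the recursive (Loday--Ronco type) description of the tridendriform products on $\K\overline{\Sch}$ from~\cite{Catoire_2023,LodayRonco_98}. Write $t$ as a root whose children are the trees of $F_1$ followed by a right-most subtree $t'$. Here $t'$ is the right comb with forests $F_2,\dots,F_k$, and $t'=\unitree$ if $k=1$. Symmetrically, write $s$ as a root whose children are a left-most subtree $s'$ followed by the trees of $F_{k+1}$. Here $s'$ is the left comb with forests $F_{k+2},\dots,F_{k+l}$, and $s'=\unitree$ if $l=1$. The recursive description then reads:
\begin{itemize}
\item $t\prec s$ is the root carrying $F_1$ with right-most child $t'*s$;
\item $t\succ s$ is the root with left-most child $t*s'$ followed by $F_{k+1}$;
\item $t\cdot s$ is the root carrying $F_1$, then $t'*s'$, then $F_{k+1}$.
\end{itemize}
The unit conventions are $\unitree*u=u*\unitree=u$. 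So it suffices to show that the three quasi-shuffle sums satisfy these same recursions. Then $t*s=t\prec s+t\cdot s+t\succ s$ equals $\sum_{\sigma\in\batc(k,l)}\sigma(t,s)$, because $\sigma^{-1}(\{1\})$ is always one of $\{1\}$, $\{k+1\}$ or $\{1,k+1\}$.

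\textbf{Key step: the $\prec$ case.} I would use the first bijection of Lemma~\ref{lem:Indution_batc}. A $\sigma\in\batc(k,l)$ with $\sigma^{-1}(\{1\})=\{1\}$ corresponds to $\tau\in\batc(k-1,l)$ defined by $\tau(i)=\sigma(i+1)-1$. I will check directly from Definition~\ref{def:quasiaction} that $\sigma(t,s)$ looks as follows:
\begin{itemize}
\item node $1$ of the ladder carries only $F_1$ as its left son;
\item the part of the ladder from node $2$ upward is exactly the ladder of $\tau(t',s)$;
\item in that part, $F_2,\dots,F_k$ are grafted on the left at the nodes $\tau(1),\dots,\tau(k-1)$, and $F_{k+1},\dots,F_{k+l}$ are grafted on the right.
\end{itemize}
Hence $\sigma(t,s)$ is the root carrying $F_1$ with right-most child $\tau(t',s)$. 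Summing over $\tau$ and using the induction hypothesis $t'*s=\sum_\tau\tau(t',s)$ gives the recursion for $\prec$. The $\succ$ case is the mirror argument with the third bijection. The $\cdot$ case uses the second bijection: node $1$ carries both $F_1$ and $F_{k+1}$, and the rest is $\tau(t',s')$ with $\tau\in\batc(k-1,l-1)$.

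\textbf{Main obstacle: the boundary cases.} These are where $t'$ or $s'$ is $\unitree$, and the lemma is only stated for $k,l\ge 2$. Take $k=1$ in the $\prec$ case. Then $\sigma$ with $\sigma^{-1}(\{1\})=\{1\}$ is forced to be $\sigma(1)=1$ and $\sigma(1+j)=j+1$. So $\sigma(t,s)$ is the root carrying $F_1$ whose right-most branch is exactly the left comb of $s$. This matches $t'*s=\unitree*s=s$, using the conventions $\batc(0,l)=\{\Id\}$ and $\Id(\unitree,s)=s$.

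I would treat these degenerate cases, including $k=l=1$ with $\batc(1,1)$ listed in the lemma, separately before running the induction. I would also verify there that the ladder/grafting picture never produces a node with fewer than two children, so every $\sigma(t,s)$ is a genuine Schroeder tree. This holds because each node of the ladder receives at least one non-empty forest and, except the top one, also has a ladder child.
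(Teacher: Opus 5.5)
Your argument is correct and takes essentially the approach the paper points to. The paper gives no proof of its own: it quotes the result and calls it a combinatorial description of Loday--Ronco's inductive formula, which is exactly what you verify, and your induction on $k+l$, split by $\sigma^{-1}(\{1\})$ with $t=F_1\vee t'$ and $s=s'\vee F_{k+1}$, is the same scheme the paper uses to prove $\Code(\sigma(t,s))=\sigma(\Code(t),\Code(s))$.

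Two small cautions. First, do not take $\batc(1,1)$ from the lemma as printed: it omits $(1,1)$, which is precisely the term giving $t\cdot s=F_1\vee\unitree\vee F_2$ when $k=l=1$. Your bijection, with the convention $\batc(0,0)=\{\emptyset\}$, recovers this term correctly. Second, the top node of the ladder does have a ladder child, namely the leaf that closes the ladder, and this is what guarantees that the top node has at least two children.
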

This result is a combinatorial description of an inductive formula given in the previous work of J.L.~Loday and M.~Ronco~\cite{LodayRonco_98}.
Moreover,  $(\K\Sch,\prec,\cdot,\succ)$ can be endowed with a codendriform coproduct~\cite{Catoire_2023,LodayRonco_98} detailled in subsection~\ref{sec:cuts}:
\begin{thm}\label{thm:coproduit}
	We define a map $\Delta:\K\Sch\rightarrow \K\Sch\otimes \K\Sch$ by:
	\begin{align*}
		\Delta(t)=\sum_{c \text{ \normalfont pruning of }t} P^c(t)\otimes R^c(t),
	\end{align*}
	where $R^c(t)$ is the component of $t$ containing its root and $P^c(t)=P^c_1(t)*\cdots *P^c_k(t)$, where $P^c_i(t)$ are the cut trees naturally ordered from left to right. Moreover one can write $\Delta=\Delta_{\leftarrow}+\Delta_{\rightarrow}$ in such a way that $(\K\Sch,\prec,\cdot,\succ,\Delta_{\leftarrow},\Delta_{\rightarrow})$ is a $(3,2)$-dendriform bialgebra.
\end{thm}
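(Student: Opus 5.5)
We have to prove Theorem~\ref{thm:coproduit}: the map $\Delta$ defined by summing over prunings is well defined, and it splits as $\Delta=\Delta_{\leftarrow}+\Delta_{\rightarrow}$ so that $(\K\Sch,\prec,\cdot,\succ,\Delta_{\leftarrow},\Delta_{\rightarrow})$ is a $(3,2)$-dendriform bialgebra. The strategy is to avoid checking the compatibilities tree by tree. Instead we use the freeness of $(\K\Sch,\prec,\cdot,\succ)$: define the two half-coproducts on generators, extend them by the $(3,2)$-compatibility rules, and then show that the combinatorial pruning formula coincides with the map so defined.

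\textbf{Step 1: definition on generators.} Since $\K\overline{\Sch}$ is the free tridendriform algebra on $\Y$, there is at most one pair $(\Delta_{\leftarrow},\Delta_{\rightarrow})$ of linear maps that
\begin{itemize}
\item takes prescribed values on $\Y$ (namely the two halves of $\Y\otimes\unitree+\unitree\otimes\Y$), and
\item satisfies the Loday--Ronco compatibility relations expressing $\Delta_{\leftarrow}(a\prec b)$, $\Delta_{\leftarrow}(a\cdot b)$, $\Delta_{\leftarrow}(a\succ b)$ and the corresponding three for $\Delta_{\rightarrow}$ in terms of $\Delta_{\leftarrow}(a),\Delta_{\rightarrow}(a),\Delta_{\leftarrow}(b),\Delta_{\rightarrow}(b)$ and the products.
\end{itemize}
I would take this pair as the definition, proving existence by induction on degree. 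Every tree $t\neq\unitree$ decomposes uniquely as $t=t_1\vee\cdots\vee t_r$, the grafting of subtrees on a common root, and this grafting is an iterated expression in $\prec,\cdot,\succ$ of the $t_i$ and $\Y$. For instance $t_1\vee t_2=t_1\succ\Y\prec t_2$, and middle children are inserted with $\cdot$. So the compatibility relations give an inductive formula.

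\textbf{Step 2: well-definedness.} The inductive formula must not depend on how $t$ is written in terms of the operations. The natural way to handle this is to use Lemma~\ref{lem:Indution_batc} and Theorem~\ref{thm:produit}: the decomposition of $*$ according to $\sigma^{-1}(\{1\})$ is compatible with the seven axioms~\eqref{eq:tri1}--\eqref{eq:tri7}. It then suffices to check that the compatibility rules respect each axiom. This is a finite, formal check, which is the standard verification for $(3,2)$-dendriform bialgebras.

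\textbf{Step 3: the pruning formula.} Show by induction on the number of leaves that $\Delta=\Delta_{\leftarrow}+\Delta_{\rightarrow}$ equals $\sum_c P^c(t)\otimes R^c(t)$. Write $t=t_1\vee\cdots\vee t_r$. A pruning $c$ of $t$ is one of two things:
\begin{itemize}
\item the total cut, giving $t\otimes\unitree$;
\item a tuple of prunings $c_i$ of the $t_i$, giving $R^c(t)=R^{c_1}(t_1)\vee\cdots\vee R^{c_r}(t_r)$ and $P^c(t)=P^{c_1}(t_1)*\cdots*P^{c_r}(t_r)$.
\end{itemize}
The step then reduces to the identity
\[
\Delta(t_1\vee\cdots\vee t_r)=t\otimes\unitree+\sum (x_1*\cdots*x_r)\otimes(y_1\vee\cdots\vee y_r),
\]
where $\sum x_i\otimes y_i=\Delta(t_i)$. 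This identity follows from the compatibility relations, since $*$ is multiplicative for $\Delta$ in the bialgebra sense. The splitting into $\Delta_{\leftarrow}$ and $\Delta_{\rightarrow}$ corresponds to whether the leftmost (or, depending on convention, rightmost) cut tree attaches through the leftmost branch of the root component.

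The main obstacle is Step 3: identifying which prunings each of the six compatibility terms produces, in particular the terms involving the middle product $\cdot$. There the cut trees from the right factor must be merged with $*$ in the correct left-to-right order. I would first settle the binary case $r=2$ and then iterate using associativity~\eqref{eq:tri7}.
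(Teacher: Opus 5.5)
The paper gives no proof of this statement; it is quoted from \cite{Catoire_2023}. Your proposal therefore has to stand on its own, and it has a genuine gap. A $(3,2)$-dendriform bialgebra is a tridendriform algebra \emph{and} a codendriform coalgebra, tied by six compatibility relations, and your plan only deals with the compatibility relations. Granting Steps 1--3, you obtain $\Delta_{\leftarrow},\Delta_{\rightarrow}$ that satisfy the product--coproduct relations and sum to the pruning coproduct. Nothing in your argument shows that $(\K\Sch,\Delta_{\leftarrow},\Delta_{\rightarrow})$ satisfies the codendriform identities of \cite{Foissy_07}, e.g.\ $(\tilde\Delta_{\leftarrow}\otimes\Id)\circ\tilde\Delta_{\leftarrow}=(\Id\otimes\tilde\Delta)\circ\tilde\Delta_{\leftarrow}$. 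It does not even show that $\Delta$ is coassociative. These properties do not hold ``by construction'' and need a second argument. One route:
\begin{itemize}
\item Summing the relations gives $\Delta(a\odot b)=\Delta(a)\odot\Delta(b)$ for $\odot\in\{\prec,\cdot,\succ\}$, where $(a\otimes b)\odot(c\otimes d)=(a*c)\otimes(b\odot d)$.
\item Summing over $\odot$ gives $\Delta(a*b)=\Delta(a)\Delta(b)$.
\item Hence $(\Delta\otimes\Id)\circ\Delta$ and $(\Id\otimes\Delta)\circ\Delta$ are both morphisms from $\K\overline{\Sch}$ into $\K\Sch^{\otimes3}$ with $(a\otimes b\otimes c)\odot(a'\otimes b'\otimes c')=(a*a')\otimes(b*b')\otimes(c\odot c')$. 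They agree on $\Y$, so they are equal by freeness.
\item The three half-identities then follow by induction on degree: write elements as sums of products $a\odot b$, and use the compatibility relations to reduce each identity for $a\odot b$ to the same identity for $b$ plus coassociativity.
\end{itemize}

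There are also secondary problems.
\begin{itemize}
\item You never write down the six relations or the split, so Steps 1--2 are a template rather than a proof.
\item The justification in Step 2 is misplaced: Lemma~\ref{lem:Indution_batc} and Theorem~\ref{thm:produit} play no role in well-definedness. What is needed is that the rule above makes $\K\overline{\Sch}\otimes\K\Sch\oplus\unitree\otimes\K\overline{\Sch}$ a tridendriform algebra. Use $(a\otimes\unitree)\odot(c\otimes\unitree)=(a\odot c)\otimes\unitree$ and the usual conventions $\unitree\prec b=b\succ\unitree=\unitree\cdot b=b\cdot\unitree=0$, $b\prec\unitree=\unitree\succ b=b$ for $b\neq\unitree$. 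This holds because each of \eqref{eq:tri1}--\eqref{eq:tri7} keeps its variables in the same order and $*$ is associative.
\item Freeness then produces $\Delta$ from $\Delta(\Y)=\Y\otimes\unitree+\unitree\otimes\Y$. The halves come out the same way from $\Delta_{\leftarrow}(a\odot b)=\Delta(a)\odot\Delta_{\leftarrow}(b)$ and $\Delta_{\rightarrow}(a\odot b)=\Delta(a)\odot\Delta_{\rightarrow}(b)$. This is legitimate because every axiom has the same last variable on both sides.
\item With this twisted rule, your Step~3 identity is a two-line computation on $t_1\succ\Y\prec t_2$ and its analogue with $\cdot$, so Step~3 is not the real obstacle. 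Multiplicativity of $*$, which you invoke, says nothing about grafting.
\item The split is not a matter of convention. The formulas above put in $\Delta_{\leftarrow}$ exactly the total cut and the prunings cutting the \emph{rightmost} branch. Theorem~\ref{thm:coassincodend} rules out the leftmost version you list first. Indeed, $(\balaisd-\balaisg)\cdot\Y=t_5-t_4$ (notation of Example~\ref{exam:Sch3}) is coassociative-primitive, each tree having the single non-trivial term $\Y\otimes\balais$. Under a leftmost-branch rule these two terms land in different halves, giving $\tilde\Delta_{\leftarrow}(t_5-t_4)=-\Y\otimes\balais\neq0$.
\end{itemize}
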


From the Cartier-Quillen-Milnor-Moore theorem~\cite{CartierPatras_2021}, we can understand better the cocommutative part of any Hopf algebra from its primitive elements.
Hence, a natural question is to compute the primitive elements of $\K\Sch$. Now let us introduce some needed notations to reach the theorems for this purpose:
\begin{defi}
	Let $(C,{\Delta}_{\leftarrow},{\Delta}_{\rightarrow})$ be a codendriform coalgebra~\cite[definition~2 and 3]{Foissy_07}. We define:
	\begin{gather*}
		\Prim_{\Coass}(C):=\{ c\in C \,|\, \tilde{\Delta}(c)=0 \},
		\Prim_{\Codend}(C):=\{ c\in C \,|\, \tilde{\Delta}_{\leftarrow}(c)=\tilde{\Delta}_{\rightarrow}(c)=0 \}.
	\end{gather*}
	where $\tilde{\Delta}_{\leftarrow}(x)=\Delta_{\leftarrow}(x)-x\otimes 1$ and $\tilde{\Delta}_{\rightarrow}(x)=\Delta_{\rightarrow}(x)-1\otimes x$ for any $x\in C.$ 
\end{defi}

In particular, with respect to the grading of $\Sch$, we have:
\begin{align*}
	& \Prim_{\Coass}(\K\Sch(1))=\left\langle\Y\right\rangle, &\Prim_{\Codend}(\K\Sch(1))=\left\langle\Y\right\rangle, \\
	& \Prim_{\Coass}(\K\Sch(2))=\left\langle \balais, \balaisd - \balaisg  \right\rangle, & \Prim_{\Codend}(\K\Sch(2))=\left\langle \balais \right\rangle.
\end{align*}
\begin{thm}[\cite{Catoire_2023}]\label{thm:coassincodend}
	For all $n\in\NN$, we define:
	\begin{equation*}
		\theta_n:\left\lbrace \begin{array}{rcl}
			\Prim_{\Coass}(\K\Sch(n))& \rightarrow & \Prim_{\Codend}(\K\Sch(n+1)), \\
			a & \mapsto & a\cdot \Y.
		\end{array}\right.
	\end{equation*}
	Then, for all $n\in\NN, \theta_n$ is an isomorphism of vector spaces.
\end{thm}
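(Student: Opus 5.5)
We need a plan for showing that $\theta_n$ is well defined and bijective. The key tools are the compatibilities of the $(3,2)$-dendriform bialgebra structure of theorem~\ref{thm:coproduit}, that is, how $\tilde\Delta_{\leftarrow}$ and $\tilde\Delta_{\rightarrow}$ interact with $\prec,\cdot,\succ$, together with the combinatorial description of $\cdot$ in theorem~\ref{thm:produit}.

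The first step is to check that $\theta_n$ lands in $\Prim_{\Codend}$. Let $a\in\Prim_{\Coass}(\K\Sch(n))$, so $\tilde\Delta(a)=0$; note also $\Y\in\Prim_{\Codend}$. I would write out the compatibility relations between the coproducts and the middle product $\cdot$. These have the Foissy-type form: $\tilde\Delta_{\rightarrow}(a\cdot b)$ and $\tilde\Delta_{\leftarrow}(a\cdot b)$ are sums of terms $a'\otimes(a''\cdot b)$, $(a\, b')\otimes b''$, and similar, involving the reduced coproduct of $a$ and the half-coproducts of $b$. Substituting $\tilde\Delta(a)=0$ and $\tilde\Delta_{\leftarrow}(\Y)=\tilde\Delta_{\rightarrow}(\Y)=0$ should leave only terms such as $a\otimes\Y$ or $\Y\otimes a$. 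These must cancel or be absent; the middle product kills the outer terms since $\cdot$ has no unit. If the abstract relations are awkward, I would argue combinatorially instead. By theorem~\ref{thm:produit}, $t\cdot\Y$ is the sum over quasi-shuffles with $\sigma^{-1}(1)=\{1,k+1\}$. Since $l=1$, this is simply the tree obtained by grafting a leaf as right son onto the root of $t$'s right comb, i.e.\ adding one extra leaf at the right of the root. Prunings of $t\cdot\Y$ are then prunings of $t$, plus possibly the cut of the new leaf edge, which is not admissible in the reduced part. Reading off $\Delta_{\leftarrow}$ and $\Delta_{\rightarrow}$ then gives terms of $\tilde\Delta(a)$ with $\Y$ attached on one side, and these vanish.

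The second step is injectivity, which is clear from this description. The map $t\mapsto t\cdot\Y$ sends distinct basis trees to distinct basis trees: we recover $t$ by deleting the rightmost root child, which is a leaf, and contracting if the root then has only one child. So $\theta_n$ is injective even on all of $\K\Sch(n)$.

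The third step, surjectivity, is the main obstacle. I would compare dimensions. Foissy's theory of codendriform/bidendriform bialgebras shows that $\K\Sch$ is free as a tridendriform-type structure and cofree, and this gives generating-series identities. Concretely, if $P(x)$ and $Q(x)$ are the series of $\Prim_{\Coass}$ and $\Prim_{\Codend}$ and $S(x)$ is the Schroeder series, Milnor–Moore type freeness gives $S=1/(1-P)$ (tensor algebra on primitives). The codendriform rigidity gives a relation expressing $S$ via $Q$, in the dendriform case $S=\frac{1}{1-Q S}$-like. I would try to derive $Q=xP$ from these, which together with injectivity finishes the proof. If the series route fails, I would construct an explicit inverse. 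Given $b\in\Prim_{\Codend}(\K\Sch(n+1))$, show that every tree in its support has a bare rightmost leaf at the root; this should follow from $\tilde\Delta_{\rightarrow}(b)=0$ by looking at terms cutting the rightmost subtree. Then write $b=a\cdot\Y$ and prove $\tilde\Delta(a)=0$ by running the compatibility of the first step backwards, using injectivity of $-\cdot\Y$ on tensor factors.
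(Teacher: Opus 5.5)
The paper does not prove this statement; it quotes it from \cite{Catoire_2023}, so your plan has to stand on its own. Your skeleton is sound, and your key observation is correct. On a basis tree, $t\cdot\Y$ is $t$ with one extra leaf added as the rightmost child of the root. Its non-trivial prunings are exactly those of $t$, since leaf edges are not internal and there is no new cut. Hence $\tilde\Delta(x\cdot\Y)=(\mathrm{id}\otimes(-\cdot\Y))\tilde\Delta(x)$ for every $x$, and injectivity follows as you say.

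In step one, however, the decisive point is left implicit. A non-total cut never prunes the new rightmost leaf. So all of $\tilde\Delta(a\cdot\Y)$ lies in the half-coproduct in which the rightmost leaf stays in the root component, and the other reduced half vanishes identically. This needs $\Delta_\leftarrow,\Delta_\rightarrow$ to be split according to the rightmost leaf, and you must state this. With the paper's unit conventions the rightmost-pruned half is then $\Delta_\leftarrow$, since it contains the total cut $x\otimes\unitree$. With a leftmost-leaf split the claim already fails for $a=\balaisd-\balaisg$.

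Next, the statement is false for $n=0$: $\Prim_{\Coass}(\K\Sch(0))=0$ and $\unitree\cdot\Y=0$, while $\Prim_{\Codend}(\K\Sch(1))=\langle\Y\rangle$. Accordingly your target $Q=xP$ is wrong. Foissy's relation $S=1+QS^2$ is equivalent to your guess $S=1/(1-QS)$. It expresses freeness over $\Prim_{\Codend}$ as a \emph{dendriform} algebra (with $\cdot$ merged into one of the other two products), after checking that the $(3,2)$-structure is bidendriform in his sense. Cofreeness gives $S=1/(1-P)$; this is not Milnor--Moore, which needs cocommutativity. Together these only give $Q=P-P^2$. You then need the Schroeder equation $xS=x+(xS)^2/(1-xS)$, i.e.\ $P=x+xP+P^2$, to reach $Q=x+xP$. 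So the isomorphism holds exactly for $n\geq 1$.

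Your fallback inverse has a genuine hole. Knowing that every tree in the support of $b$ has a bare rightmost leaf at the root does not give $b=a\cdot\Y$. A tree $u\vee\unitree$ with a \emph{binary} root is not in the image of $-\cdot\Y$: your ``contract'' step sends it to $u$, but $u\cdot\Y\neq u\vee\unitree$. Such trees are also invisible to the rightmost-pruned half, which is the one you call $\tilde\Delta_\rightarrow$. They must be removed with the other half. For $b$ of degree at least $2$, the component in $\K\Sch\otimes\K\,\Y$ of the half keeping the rightmost leaf is exactly $\sum_u \lambda_u\, u\otimes\Y$, where $\lambda_u$ is the coefficient of $u\vee\unitree$ in $b$. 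So all these coefficients vanish. The tree $\Y=\unitree\vee\unitree$ is precisely the one this argument cannot remove, which is the $n=0$ exception.

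For trees whose rightmost root child $w$ is internal, ``looking at terms cutting the rightmost subtree'' needs a non-cancellation argument, because $P^c$ is a $*$-product, not a single tree. Project the rightmost-pruned half onto right factors whose root has a leaf as rightmost child; this keeps exactly the prunings through the edge to $w$. Then take the top degree of the right factor. There the only terms are $w\otimes t'$, where $t'$ is $t$ with $w$ replaced by a leaf. Since $t\mapsto(w,t')$ is injective, a downward induction kills these trees.

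Then $b=x\cdot\Y$, and injectivity of $\mathrm{id}\otimes(-\cdot\Y)$ gives $\tilde\Delta(x)=0$. Repaired this way, the inverse route is a complete elementary proof that needs neither Foissy's rigidity theorem nor the explicit Schroeder count, both of which the series route depends on.
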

Let us introduce for any $n\in\NN$, the vector space:
\begin{equation}\label{eq:Wn}
W_n=\Prim_{\Codend}(\K\Sch)_n \oplus \left\langle w \, \middle| \, w\in T^k(\Prim_{\Coass}(\K\Sch))(n) \text{ with } k>1 \right\rangle,
\end{equation}
where $T^k(V)$ is the vector space generated by words of length exactly $k$.
We state a variation of M.~Ronco's theorem~\cite[theorem~4.6]{euleridem} in the following theorem~\ref{thm:génération2} in terms of \emph{brace algebras}. We state it in a simpler way in exchange of a loss of injectivity:
\begin{thm}\label{thm:génération2}
	 One can define  a map $\omega$ using $\pright,\pmid$ and $\pleft$ giving rise for all $n\in\NN$ to:
	\begin{equation}\label{eq:Omega}
		\Omega_n:\left\lbrace\begin{array}{rcl}
			W_n & \twoheadrightarrow & \Prim_{\Coass}(\K\Sch(n)), \\
			x_1\otimes\dots\otimes x_k & \mapsto & \omega(x_1\otimes \dots \otimes x_k).
		\end{array}\right.
	\end{equation}
	This raises to a map $\Omega$ defined over $\oplus_{n=1}^{\infty} W_n$ such that $\Omega|_{W_n}=\Omega_n$. The map $\omega$ is defined for any $k\geq 1, x,y,x_1,\dots, x_k\in\K\Sch$ by:
	\begin{align*}
		&\omega(x)\coloneqq x, \\ 
		&\omega(x_1\otimes \ldots \otimes x_k \otimes y)= \sum_{i=0}^k (-1)^{k-i} \omega_{\prec}(x_1\otimes\ldots \otimes x_i) \succeq y \prec \omega_{\succeq}(x_{i+1}\otimes \ldots \otimes x_k),
	\end{align*}
	where $\omega_\prec$ and $\omega_{\succeq}$ are defined by induction
	\begin{align*}
		&\omega_{\prec}(\ew)=\unitree = \omega_{\succ}(\ew), \\
		&\omega_{\prec}(x_1\otimes \ldots \otimes x_k)= x_1\prec \omega_{\prec}(x_2\otimes \ldots \otimes x_k), \\
		&\omega_{\succeq}(x_1\otimes \ldots \otimes x_k)= \omega_{\succeq}(x_1\otimes \ldots \otimes x_{k-1})\succeq x_k.
	\end{align*}
\end{thm}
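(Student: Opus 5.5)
We need two facts for Theorem~\ref{thm:génération2}: that $\Omega_n$ takes values in $\Prim_{\Coass}(\K\Sch(n))$, and that $\Omega_n$ is surjective. Both come from M.~Ronco's brace-algebra description~\cite[theorem~4.6]{euleridem}. Our plan is to recast that result in the tridendriform language of Theorems~\ref{thm:produit}, \ref{thm:coproduit} and~\ref{thm:coassincodend}, and then drop the injectivity Ronco obtains from a restricted domain.

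\emph{Step 1 (well-definedness).} We prove by induction on the length of the tensor that $\omega(x_1\otimes\cdots\otimes x_k\otimes y)$ is coassociative primitive whenever $y\in\Prim_{\Codend}$ and the $x_i\in\Prim_{\Coass}$; the case of a single factor $x\in\Prim_{\Codend}\subset\Prim_{\Coass}$ is immediate. For the inductive step we use the $(3,2)$-dendriform bialgebra compatibilities of Theorem~\ref{thm:coproduit}. These express $\Delta_{\leftarrow}$ and $\Delta_{\rightarrow}$ of $a\prec b$, $a\succeq b$ (where ${\succeq}={\succ}+{\cdot}$) and $a\cdot b$ in terms of the coproducts of $a$ and $b$. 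Applied to $\omega_\prec(x_1\otimes\cdots\otimes x_i)$, they give $\tilde\Delta$ of this element as a sum of terms $\omega_\prec(x_1\otimes\cdots\otimes x_j)\otimes\omega_\prec(x_{j+1}\otimes\cdots\otimes x_i)$, and symmetrically for $\omega_\succeq$. This holds because left and right combs of primitives behave like the tensor coalgebra on the letters. Substituting into $\tilde\Delta(\omega_\prec(\cdots)\succeq y\prec\omega_\succeq(\cdots))$, the fact that $y$ has vanishing reduced half-coproducts kills every cross term except those in which $y$ stays whole. The remaining terms then cancel in pairs, index $i$ against index $i\pm1$, thanks to the sign $(-1)^{k-i}$. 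That is an alternating-sum telescoping.

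\emph{Step 2 (surjectivity).} We argue by induction on $n$ using Theorem~\ref{thm:coassincodend}. The primitives split into the image of the codendriform primitives, which are $\omega$ of length-one words and are already in $W_n$, and a complement. We identify this complement with Ronco's brace-generated part: the brace operations $\{y;x_1,\dots,x_k\}$ on $\Prim_{\Coass}$ are written exactly as $\omega(x_1\otimes\cdots\otimes x_k\otimes y)$ modulo lower terms. Ronco's theorem~4.6 states that $\Prim_{\Coass}$ is the free brace algebra generated by $\Prim_{\Codend}$. So every element of degree $n$ is a combination of iterated braces. By induction, the inner arguments are themselves in the image of $\Omega$ in lower degree. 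Rewriting nested braces as $\omega$ of longer words with primitive entries then gives a preimage in $W_n$. Injectivity is lost because $W_n$ allows arbitrary primitive letters rather than only free brace monomials.

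\emph{Main obstacle.} The hardest step is the exact cancellation in Step 1, which is where signs and the asymmetry between ${\prec}$ and ${\succeq}$ must match perfectly. We would first verify it for $k=1$, showing $\tilde\Delta(y\prec x - x\succeq y)=0$ directly from the codendriform axioms, and check it numerically on $\Sch(2)$ against the listed primitives before doing the general telescoping.
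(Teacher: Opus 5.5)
\textbf{Gap: Step 1 does not cover the whole domain $W_n$.} In $W_n$, only the length-one part is restricted to $\Prim_{\Codend}$. Words of length $>1$ have \emph{all} their letters, including the last one $y$, in $\Prim_{\Coass}(\K\Sch)$. Your Step 1 assumes $y\in\Prim_{\Codend}$ and uses the separate vanishing of $\tilde\Delta_{\leftarrow}(y)$ and $\tilde\Delta_{\rightarrow}(y)$ to kill cross terms. So it says nothing about, for instance, $\Y\otimes\left(\balaisd-\balaisg\right)\in W_3$. There $y=\balaisd-\balaisg$ is coassociative but not codendriform primitive: both of its reduced half-coproducts are nonzero, and only their sum vanishes.

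Your Step 2 relies on exactly these words. The preimage you give for an iterated brace $\{x_1,\dots,x_k;y\}$ is $x_1\otimes\cdots\otimes x_k\otimes y$, and in nested braces $y$ is typically itself a brace, hence outside $\Prim_{\Codend}$. So you have not shown that $\Omega_n$ lands in $\Prim_{\Coass}$ on the part of $W_n$ your surjectivity argument uses. The paper gives no proof beyond citing Ronco's Theorem~4.6. Its only modification is precisely this enlargement of the domain to arbitrary primitive letters (the source of the lost injectivity), so this is the one point a proof has to handle.

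\textbf{Fix.} Run Step 1 with the full coproduct $\Delta=\Delta_{\leftarrow}+\Delta_{\rightarrow}$ and its compatibility with $\prec$ and $\succeq$, the compatibility Ronco's theorem uses. Then only $\tilde\Delta(y)=0$ is needed.
\begin{itemize}
\item For $k=1$: for all $x,y\in\Prim_{\Coass}$ one has $\tilde\Delta(x\succeq y)=x\otimes y=\tilde\Delta(y\prec x)$ in the paper's $P^c\otimes R^c$ convention, since $\tilde\Delta(x\cdot y)=0$. One can check this directly on $x=\Y$, $y=\balaisd-\balaisg$.
\item For $k=2$: the terms with $y$ whole on the right combine as $(x_1\succeq x_2-x_1*x_2+x_1\prec x_2)\otimes y=0$. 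So the cancellation is pairwise only after splitting $*={\prec}+{\succeq}$.
\end{itemize}

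In Step 2, drop ``modulo lower terms''. Ronco's brace for the dendriform pair $({\prec},{\succeq})$ coincides \emph{exactly} with $\omega$, and that equality is what makes surjectivity immediate. Every brace of primitives is then $\Omega$ of a word in $W_n$, with no rewriting of nested braces into longer words.

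If you prefer to keep $y\in\Prim_{\Codend}$ in the surjectivity argument, use that the free brace algebra on $V=\Prim_{\Codend}$ is spanned by $V$ and the elements $\{T_1,\dots,T_k;v\}$ with $v\in V$. Even so, well-definedness of $\Omega_n$ on all of $W_n$ still needs the full-coproduct argument.
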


\begin{exam}
	To compute an element of $\Prim_{\Coass}(\K\Sch(3))$, we can choose $\Y\otimes \Y \otimes \Y$ or $\left(\balaisd - \balaisg\right) \otimes \Y$ which are both elements of $T^{\leq 3}(\K\Prim_{\Coass}(\K\Sch))(3).$ Hence:
\begin{align*}
	\Omega_3\left(\Y\otimes \Y \otimes \Y\right)&= \Y \prec \left( \Y \succeq \Y\right) - \Y \succeq \Y \prec \Y + \left(\Y \prec \Y\right)\succeq \Y,\\
	\Omega_3\left( \left(\balaisd - \balaisg\right) \otimes \Y\right) &= - \Y\prec \left(\balaisd - \balaisg\right) + \left(\balaisd - \balaisg\right)\succeq \Y.
\end{align*} 	
\end{exam}

So, we have a recipe involving $\theta$ and $\Omega$ to produce the primitives of this algebra by induction as shown in figure~\ref{fig:algo_idea}:
%\begin{figure}[h]
%	\centering
%	\begin{tikzcd}
%		&
%		\Prim_{\Codend}(\K\Sch(1)) \ar[out=-20, in=160, "\Omega_1", swap]{dl} \\ 
%		 \Prim_{\Coass}(\K\Sch(2)) \arrow[r,"\theta_1"] & \Prim_{\Codend}(\K\Sch(2)) \ar[out=-20, in=160, swap, "\Omega_2"]{dl} \\
%		\Prim_{\Codend}(\K\Sch(3)) \arrow[r,"\theta_2"] & \dots 
%	\end{tikzcd}
%	\caption{Diagram of the induction}
%	\label{fig:algo_idea0}
%\end{figure}
\begin{figure}[h]
	\centering
	\begin{tikzcd}
		\Prim_{\Codend}(\K\Sch(1)) \ar[d,"\Omega_1"]{dl} & \Prim_{\Codend}(\K\Sch(2)) \ar[d,"\Omega_2"]{dl} & 
		\Prim_{\Codend}(\K\Sch(3)) \ar[d,"\Omega_3"]{dl} &
		\dots  \\
		\Prim_{\Coass}(\K\Sch(1)) \ar[ur,"\theta_1"] &
		\Prim_{\Coass}(\K\Sch(2)) \ar[ur,"\theta_2"] &
		\Prim_{\Coass}(\K\Sch(3)) \ar[ur,"\theta_3"] & 
	\end{tikzcd}
	\caption{Diagram of the generation of primitives by induction}
	\label{fig:algo_idea}
\end{figure}
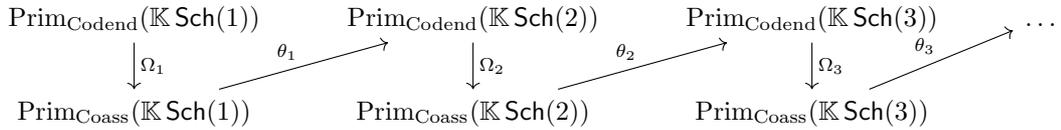

The process of figure~\ref{fig:algo_idea} enables us to compute $\Prim_{\Coass}(\K\Sch)$ by induction starting with ${\Prim_{\Codend}(\K\Sch(1))=\left\langle\Y\right\rangle}$. In this paper, our aim is to implement the algorithm of figure~\ref{fig:algo_idea} and to check our computations. Hence, we look for an interpretation of trees that a computer can understand and that is nicely compatible with the algebraic structure. 

\section{Our point of view of Schroeder trees and their operations}\label{sec:pov}

This section describes our point of view on \Schw{} using levelled \Schw{} trees (definition~\ref{defi:levelled_trees}), initial chains of the poset $(\mathcal{P},\subseteq)$ (definition~\ref{def:init_chain}) and finally grids unsing gray and white squares symbolizing \emph{tree codes} (definition~\ref{defi:lifetime}).

\subsection{Encoding \Schw{} trees}

The idea to represent \Schw{} trees into a computer is to add \emph{levels} giving rise to \emph{levelled \Schw{} trees}. From this intermediate representation, we can interpret a \Schw{} tree as a binary increasing sequence which represents the levels of the tree.

\subsubsection{Levelled Schroeder Tree}

\begin{defi}[\cite{Palacios_06}] \label{defi:levelled_trees}
	A \emph{levelled \Schw{} tree} is pair $(t,l)$ where $t$ is a \Schw{} tree and for one $n\in\NN,l:\Vint(t)\twoheadrightarrow \IEM{1}{n}$ is a map such that for any $(u,v)\in \Vint(t)$ where $u$ is on the path from $v$ of the root of $t$, we have $l(u)>l(v)$. For any $v\in \Vint(t), l(v)$ is called the \emph{level} of $v$. 
	
	Let $n\in\NN.$
	The set of levelled \Schw{} tree is denoted $\Schl$ and the subset of levelled \Schw{} trees with $n$ leaves is denoted $\Schl(\nn)$.
\end{defi}
\begin{Rq}
	A levelled \Schw{} tree is a particular case of \emph{heap ordered trees}~\cite[example~2.3]{Grossman_89}. Moreover, for the sake of readability, all vertices on a same level are horizontally aligned, see example~\ref{exam:Sch}.
\end{Rq}

Let $n\in\NN.$ To each $t\in\Schl(\nn)$ corresponds one $t'\in\Sch(\nn)$ with the surjection forgetting levels
\begin{equation}
	\pi_\nn:\Schl(\nn)\twoheadrightarrow\Sch(\nn).
\end{equation}
Maps $\pi_0$, $\pi_1$ and $\pi_2$ are bijective whereas $\pi_\nn$ is not injective for $n\geq 3$.

\begin{defi}
	Two levelled trees $t$ and $t'$ of $\Schl(\nn)$ are said \emph{equivalent}, denoted $t\equiv t'$, if it represents the same Schroeder tree, that is to say, $\pi_\nn(t)=\pi_\nn(t')$.
\end{defi}

\begin{exam}\label{exam:Sch}
	Referring to the example~\ref{exam:Sch3}, except for $i=7$, the set $\pi_3^{-1}(\{t_i\})$ has only one element.
For example $\pi_3^{-1}(\{t_8\})$ consists of the unique levelled tree:
\begin{center}
	\begin{tikzpicture}[x=1em,y=1em]
		\draw(0,3) -- (0.5,2);
		\draw(1,3) -- (0.5,2);
		\draw(2,3) -- (2,2);
		\draw(3,3) -- (3,2) -- (3,1);
		\draw(0.5,2) -- (1.25,1);
		\draw(2,2) -- (1.25,1);
		\draw(1.25,1) -- (2.125, 0);
		\draw(3,1) -- (2.125, 0);
		\draw[color=black, fill=white](0,3) circle(0.15);
		\draw[color=black, fill=white](1,3) circle(0.15);
		\draw[color=black, fill=white](2,3) circle(0.15);
		\draw[color=black, fill=white](3,3) circle(0.15);
		\fill(2.125,0) circle(0.15);
		\fill (1.25,1) circle(0.15);
		\fill (0.5,2) circle(0.15);
		% niveaux en pointillés
		\draw[dashed] (0,0) -- (3.5,0);
		\draw[dashed] (0,1) -- (3.5,1);
		\draw[dashed] (0,2) -- (3.5,2);
	\end{tikzpicture}
\end{center}
Whereas the set $\pi_4^{-1}(\{t_7\})$ consists of the following two levelled trees:
\begin{center}
	\begin{tikzpicture}[x=1em,y=1em]
		\draw(0,3) -- (0.5,2);
		\draw(1,3) -- (0.5,2);
		\draw(2,3) -- (2,2);
		\draw(3,3) -- (3,2);
		\draw(2,2) -- (2.5,1);
		\draw(3,2) -- (2.5,1);
		\draw(0.5,2) -- (0.5,1);
		\draw(0.5,1) -- (1.5,0);
		\draw(2.5,1) -- (1.5,0);
		\draw[color=black, fill=white](0,3) circle(0.15);
		\draw[color=black, fill=white](1,3) circle(0.15);
		\draw[color=black, fill=white](2,3) circle(0.15);
		\draw[color=black, fill=white](3,3) circle(0.15);
		\fill(1.5,0) circle(0.15);
		\fill (2.5,1) circle(0.15);
		\fill (0.5,2) circle(0.15);
		% niveaux en pointillés
		\draw[dashed] (0,0) -- (3.5,0);
		\draw[dashed] (0,1) -- (3.5,1);
		\draw[dashed] (0,2) -- (3.5,2);
	\end{tikzpicture}
	\hspace{2em}
	\begin{tikzpicture}[x=1em,y=1em]
		\draw(0,3) -- (0,2);
		\draw(1,3) -- (1,2);
		\draw(2,3) -- (2.5,2);
		\draw(3,3) -- (2.5,2);
		\draw(0,2) -- (0.5,1);
		\draw(1,2) -- (0.5,1);
		\draw(2.5,2) -- (2.5,1);
		\draw(0.5,1) -- (1.5,0);
		\draw(2.5,1) -- (1.5,0);
		\draw[color=black, fill=white](0,3) circle(0.15);
		\draw[color=black, fill=white](1,3) circle(0.15);
		\draw[color=black, fill=white](2,3) circle(0.15);
		\draw[color=black, fill=white](3,3) circle(0.15);
		\fill(1.5,0) circle(0.15);
		\fill (0.5,1) circle(0.15);
		\fill (2.5,2) circle(0.15);
		% niveaux en pointillés
		\draw[dashed] (-0.25,0) -- (3.5,0);
		\draw[dashed] (-0.25,1) -- (3.5,1);
		\draw[dashed] (-0.25,2) -- (3.5,2);
	\end{tikzpicture}
\end{center}

\end{exam}

We want to code \Schw{} trees as levelled Schroeder trees. For this purpose, in the next section, we choose a representative levelled \Schw{} tree for each \Schw{} tree.

\subsubsection{Our point of view on \Schw{} trees}
\begin{defi}[Initial chains] \label{def:init_chain}
	Let $\mathcal{P}=(P,\leq)$ be a lattice with a minimum and a maximum element.
	A \emph{chain} $C$ of $\mathcal{P}$ is a sequence $(c_0,c_1,\cdots,c_k)$ of elements of $P$ such that $c_i\leq c_{i+1}$ for any $i\in\IEM{1}{k-1}$.
	Let $C$ be a chain in this poset, we say it is \emph{initial} if $\min(\mathcal{P})$ and $\max(\mathcal{P})$ are in $C$.

	Let $n\in\NN\setminus \{0\}$.
	We define $\IC(\IEM{1}{\nn})$ the set of \emph{initial chain} of the poset $(\mathcal{P}(\IEM{1}{\nn}),\subseteq)$. 
\end{defi}
\begin{exam}
	For instance, in the poset $(\mathcal{P}(\IEM{1}{4}),\subseteq)$, we give some elements of $\IC(4)$:
	\begin{align*}
		  \emptyset \subseteq \IEM{1}{4} , &&
		  \emptyset \subseteq \IEM{1}{3} \subseteq \IEM{1}{4}.
	\end{align*}
\end{exam}

%\begin{figure}[h]
%	\centering
%	\caption{The $\mathcal{P}(\IEM{1}{4})$ poset with labels on edges}
%	\label{fig:labelled_IP4}
%	\begin{tikzpicture}[rotate =180 ]
%		%Noeud du diagramme
%		\node (P0) at (0,0) {$\{1,2,3,4\}$};
%		\node (P11) at (-2,1) {$\{1,2,3\}\{4\}$};
%		\node (P12) at (0,1) {$\{1,2\}\{3,4\}$};
%		\node (P13) at (2,1) {$\{1\}\{2,3,4\}$};
%		\node(P21) at (-2,2) {$\{1,2\}\{3\}\{4\}$};
%		\node(P22) at (0,2) {$\{1\}\{2,3\}\{4\}$};
%		\node(P23) at (2,2) {$\{1\}\{2\}\{3,4\}$};
%		\node(P31) at (0,3) {$\{1\}\{2\}\{3\}\{4\}$};
%		%Arêtes du poset
%		\draw (P0) -- (P11);
%		\draw (P0) -- (P12);
%		\draw (P0) -- (P13);
%		\draw (P11) -- (P21);
%		\draw (P11) -- (P22);
%		\draw (P12) -- (P21);
%		\draw (P12) -- (P23);
%		\draw (P13) -- (P22);
%		\draw (P13) -- (P23);
%		\draw (P21) -- (P31);
%		\draw (P22) -- (P31);
%		\draw (P23) -- (P31);
%	\end{tikzpicture}
%\end{figure}

But as said previously many levelled \Schw{} trees give a same \Schw{} tree. Hence, for each $n\in \NN$ we want to build a section of $\pi_n$ so that to a \Schw{} tree corresponds a \emph{unique} levelled \Schw{}.

\subsubsection{Our choice of section}

We introduce the following $s$ map to get a \emph{normal form} of a tree
\begin{prop}\label{prop:trees_to_levels}
	We denote $s:\Sch \rightarrow \Schl$ the unique map where for any $t\in\Sch$, $s(t)$ is the \emph{unique} levelled \Schw{} tree such that for any internal vertex of $t$, its level is strictly greater than all the internal vertices to its right.
	Then, $s$ is a well-defined section of the map $\pi\coloneqq \displaystyle\bigoplus_{n=0}^{+\infty} \pi_n$.
\end{prop}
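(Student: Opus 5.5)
The plan is to reduce the statement to a linear-extension argument on the set $\Vint(t)$. Fix $t\in\Sch$ with $m=|\Vint(t)|$ internal vertices. A levelling of $t$ is a surjection $l:\Vint(t)\twoheadrightarrow\IEM{1}{p}$ that is strictly decreasing along paths away from the root. Since $l$ is surjective, we have $p\le m$. The condition of the proposition asks that, for any two distinct internal vertices $u,v$ with $v$ to the right of $u$, we have $l(u)>l(v)$. I first make ``to the right'' precise. For two internal vertices neither of which is an ancestor of the other, $v$ is \emph{to the right} of $u$ if, at their lowest common ancestor $w$, the child subtree containing $v$ lies to the right of the one containing $u$. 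These two relations, ancestry and left/right, cover all pairs of distinct internal vertices, and each such pair falls under exactly one of them.

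\textbf{Uniqueness.} Any levelling is injective on pairs related by ancestry, because it decreases strictly along paths. Under the normal-form condition it is also injective on pairs that are left/right of each other. Hence $l$ is injective, so $p=m$ and $l$ is a bijection onto $\IEM{1}{m}$. Moreover $l$ is forced to be a linear order extending the two constraints: $l(u)>l(v)$ whenever $u$ is a strict ancestor of $v$, and $l(u)>l(v)$ whenever $v$ is to the right of $u$. A bijective order-preserving map onto $\IEM{1}{m}$ is unique once the underlying relation is a total order. So uniqueness reduces to showing that the union $\lhd$ of these two relations is a strict total order on $\Vint(t)$.

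\textbf{Existence, the main obstacle.} I must check that $\lhd$ is transitive, which by the previous paragraph also gives totality and antisymmetry. I would exhibit it concretely as the order of a traversal, namely ``right subtrees first, then the vertex, then children right to left''. In other words, $l$ is $m+1$ minus the position of a vertex in the postorder traversal read from right to left, in which children are visited from the rightmost to the leftmost and a parent is listed after all its descendants.

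With this description I verify the two constraints in turn. First, a descendant is listed before its ancestor, so it receives a larger position and hence a smaller level; this gives ancestor levels greater than descendant levels. Second, if $v$ lies in a child subtree to the right of the one containing $u$ at the vertex $w$, then the whole subtree containing $v$ is listed before the whole subtree containing $u$, so $l(u)>l(v)$.

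This shows $l$ is a valid levelling satisfying the condition, which proves existence. Uniqueness then follows because the constraints already determine the relative order of every pair. I would also double-check that in this indexing the root receives level $m$, consistent with the convention $l(u)>l(v)$ for $u$ below $v$.

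\textbf{Section property and well-definedness.} By construction, $\pi(s(t))$ forgets the levels and returns $t$, so $\pi\circ s=\Id$. Because $s(t)$ lies in $\Schl$ of the same degree as $t$, the map $s$ is compatible with the grading and is therefore a section of $\pi=\bigoplus_n\pi_n$. For $\unitree$ there are no internal vertices, and $s(\unitree)$ is the empty levelling.
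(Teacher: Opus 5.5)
Your proof is correct in substance, but it takes a different route from the paper. The paper argues by induction on the number of leaves. Writing $t=t^{(1)}\vee\cdots\vee t^{(k)}$, it places the root at level $1$ and stacks $s(t^{(k)})$, shifted by $1$, directly above it. It then places $s(t^{(k-1)})$ above all levels used so far, and so on up to $s(t^{(1)})$. This is constructive and mirrors the recursive handling of trees in the implementation. However, it never explicitly checks the normal-form condition and does not address uniqueness. Your global argument covers both. Once ``to the right'' is made precise (it must exclude ancestrally related pairs, which the paper leaves implicit), the heap condition and the normal-form condition compare every pair of distinct internal vertices. This forces injectivity, $p=|\Vint(t)|$, and the full relative order. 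So uniqueness is immediate, and existence only requires one traversal that realizes both constraints.

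There are two points to fix.

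First, your explicit formula is reversed as stated. Suppose ``position'' is the rank in the list produced by the postorder with children visited right to left. Then a descendant, listed before its ancestor, has the \emph{smaller} position. So $l=m+1-\text{position}$ gives it the larger level, reverses both of your checks, and puts the root at level $1$. Instead, set $l(v)$ equal to the rank of $v$ in that postorder. Then both checks hold and the root gets level $m$, as you intended. You should also drop the phrase ``right subtrees first, then the vertex, then children right to left'', which does not describe this traversal.

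Second, you followed the literal wording of the definition of levelled trees, in which ancestors have larger levels. The paper's proof and examples use the opposite convention: the root has level $1$ and levels increase away from it. For instance, $C_1$ in the $\varphi_5$ example consists exactly of the root's angles. Under that convention your argument works verbatim with the ancestry constraint reversed. The normal form is then the rank in the right-to-left preorder (root, then the rightmost subtree, \dots, then the leftmost one), which is exactly the paper's stacking. Under the literal reading, your $s(t_7)$ is the other representative of $t_7$, with the right internal child farther from the root than the left one.
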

\begin{proof}
	To prove $s$ is well-defined, we detail a construction of $s(t)$ by induction over the number of leaves of the tree of $t$. Let $t\in\Sch(n)$ with $n\in\NN^*$.
	\begin{description}
		\item[Initialization] for $n=1$, this is obvious as there exists only one representation of $\Y$ as a levelled \Schw{} tree.  
		\item[Heredity] now let us suppose that there exists $n$ such that for any $n'<n$, for any $t\in\Sch(n'), s(t)$ is well defined. We consider $t\in\Sch(n+1)$ and let us build $s(t)$. One can see:
		\[
		t=
\raisebox{-0.5\height}{\begin{tikzpicture}[line cap=round,line join=round,x=0.25cm,y=0.25cm]
		\draw (0,0) -- (0,1);
		\filldraw[color=black] (0,1) circle (1pt);
		% Bouquet
		\draw (0,1) -- (-2,3);
		\draw (0,1) -- (2,3);
		\draw (0,3) node{$\cdots$};
		\draw[left] (-2,3.2) node{$t^{(1)}$};
		\draw[ right] (2,3.2) node{$t^{(k)}$};
		\end{tikzpicture}}= t^{(1)}\vee t^{(2)} \vee \dots \vee t^{(k)}.
		\]
		where  $k\geq 2$ and for any $i\in\IEM{1}{k}, t^{(i)}$ is a \Schw{} tree with at most $n+1$ leaves.
		Hence, we build the levelled \Schw{} tree $s(t)$ as follows:
		\[
		s(t)=
	\raisebox{-0.5\height}{\begin{tikzpicture}[line cap=round,line join=round,x=0.25cm,y=0.25cm]
			\draw (0,0) -- (0,1);
			\filldraw[color=black] (0,1) circle (1pt);
			% Bouquet
			\draw (0,1) -- (-3,3);
			\draw (-3,3) -- (-3,7);
			\draw (0,1) -- (2,3);
			\draw (0,6) node{$\cdots$};
			\draw[above] (-3,7) node{$s\left(t^{(1)}\right)$};
			\draw[ above] (2.5,2.5) node{$s\left(t^{(k)}\right)$};
			% niveaux en pointillés
			\draw[dashed] (-6,1) -- (6,1);
			\draw[dashed] (-6,7) -- (6,7);
			\draw[dashed] (-6,3) -- (6,3);
			\draw[dashed] (-6,5) -- (6,5);
	\end{tikzpicture}}
		\]
	\end{description}
	where all internal vertices of $s\left(t^{(k)}\right)$ has its level shifted by $1$ and for all $i\in\IEM{1}{k}$, all the vertices of $s\left(t^{(i)}\right)$ have their levels shifted by the level of its one plus the greater level obtained previously. 
	Finally by construction, we have $\pi\circ s=\Id_{\Sch}$.
\end{proof}
\begin{exam}
	Consider the tree
	\begin{center}
	\begin{tikzpicture}[x=1em,y=1em]
		\draw(0,2) -- (0.5,1);
		\draw(1,2) -- (0.5,1);
		\draw(2,2) -- (2.5,1);
		\draw(3,2) -- (2.5,1);
		\draw(0.5,1) -- (1.5,0);
		\draw(2.5,1) -- (1.5,0);
		\draw[color=black, fill=white](0,2) circle(0.15);
		\draw[color=black, fill=white](1,2) circle(0.15);
		\draw[color=black, fill=white](2,2) circle(0.15);
		\draw[color=black, fill=white](3,2) circle(0.15);
		\filldraw[color=black] (1.5,0) circle(0.15);
		\fill (2.5,1) circle(0.15);
		\fill (0.5,1) circle(0.15);
	\end{tikzpicture}
\end{center}
	The only tree satisfying this definition among its two representatives of example~\ref{exam:Sch} is:
	\begin{center}
		\begin{tikzpicture}[x=1em,y=1em]
			\draw(0,3) -- (0.5,2);
			\draw(1,3) -- (0.5,2);
			\draw(2,3) -- (2,2);
			\draw(3,3) -- (3,2);
			\draw(2,2) -- (2.5,1);
			\draw(3,2) -- (2.5,1);
			\draw(0.5,2) -- (0.5,1);
			\draw(0.5,1) -- (1.5,0);
			\draw(2.5,1) -- (1.5,0);
			\draw[color=black, fill=white](0,3) circle(0.15);
			\draw[color=black, fill=white](1,3) circle(0.15);
			\draw[color=black, fill=white](2,3) circle(0.15);
			\draw[color=black, fill=white](3,3) circle(0.15);
			\fill(1.5,0) circle(0.15);
			\fill (0.5,2) circle(0.15);
			\fill (2.5,1) circle(0.15);
		% niveaux en pointillés
		\draw[dashed] (0,0) -- (3.5,0);
		\draw[dashed] (0,1) -- (3.5,1);
		\draw[dashed] (0,2) -- (3.5,2);
		\end{tikzpicture}
	\end{center}
\end{exam}

\subsubsection{The encoding}

Using the section $s$, we can identify any \Schw{} tree with a finite rectangular grid of white and gray squares thanks to the map of definition~\ref{defi:lifetime}.

\begin{defi}
	Let $t$ be a planar tree and let $v\in V(t)$. An \emph{angle} of $t$ on a vertex $v$ is a pair ${(x,y)\in V(t)}, x\neq y$ of adjacent vertices to $v$ such that the path from the root to $v$ does not contain $x$ and $y$, and the segment from $x$ to $y$ does not cross the tree. 
	
	Moreover, the set of angles of $t$, denoted $A(t)$, is totally ordered reading the tree from left to right.
	See example~\ref{exam:tree_to_code}. 
\end{defi}

\begin{defi}\label{defi:height}
	Let $t\in\Sch{}$.
	The \emph{height} of a tree $t$ denoted $h(t)$ is the integer $|\Vint(t)|$. 
\end{defi}
\begin{exam}\label{exam:tree_to_code}
	For instance, $h\left(\unitree\right)=0,h\left(\Y\right)=1$ and $h\left(\raisebox{-0.5\height}{\begin{tikzpicture}[x=1em,y=1em]
		\draw(0,2) -- (0.5,1);
		\draw(1,2) -- (0.5,1);
		\draw(2,2) -- (2.5,1);
		\draw(3,2) -- (2.5,1);
		\draw(0.5,1) -- (1.5,0);
		\draw(2.5,1) -- (1.5,0);
		\draw (2.5,1) -- (2.5,2);
		\draw[color=black, fill=white](0,2) circle(0.15);
		\draw[color=black, fill=white](1,2) circle(0.15);
		\draw[color=black, fill=white](2,2) circle(0.15);
		\draw[color=black, fill=white](3,2) circle(0.15);
		\draw[color=black, fill=white](2.5,2) circle(0.15);
		\filldraw[color=black] (1.5,0) circle(0.15);
		\fill (2.5,1) circle(0.15);
		\fill (0.5,1) circle(0.15);
	\end{tikzpicture}}
	\right)=3.$
\end{exam}

\begin{defi}\label{defi:lifetime}
	For any $n\in\NN,$ we introduce a map $\varphi_n: \Schl(n)\rightarrow \IC(\IEM{1}{n})$ defined for any $t\in\Schl(n)$ by the \emph{unique} initial chain $C=\left(C_0,\dots,C_{h(t)}\right)$ describing the lifetime of the angles of $t$, labelling them from left to right by $a_i$ with $i\in\IEM{1}{n}$ and putting $v_i$ the vertex of $a_i$,  in function of the levels of the tree:
	\[
	\forall i \in\IEM{0}{h(t)}, C_i=\left\{ j \in\IEM{1}{n} \,\middle|\, l(v_j)\leq i \text{ where } v_j \text{ is the vertex of }a_j\right\}.
	\]
	We also define $\Grid_n$, the map sending any element $(C_0,\ldots, C_h)$ of $\IC(\IEM{1}{n})$ onto a grid diagram with $h$ rows and $n$ columns of white / gray cells. The bottom row of the grid is for $C_0$ and the top row is for $C_k$. Then, for $(i,j)\in\IEM{1}{h(t)}\times\IEM{1}{n}$ cell at row $i$ and column $j$ is filled in gray if and only if $j$ appears in $C_i$.
    %We also define $\Set_n$, the map sending any element of $\IC(\IEM{1}{n})$ onto a list of length $n$ with $0$'s and $1$'s encoding the element of $\mathcal{P}(\IEM{1}{n})$ considered. More precisely, for any ${X\in\mathcal{P}(\IEM{1}{n})}$:
	%\[
	%\Set_n(X)\coloneq (\mathbbm{1}_X(i))_{i\in\IEM{1}{n}}.
	%\]
\end{defi}
\begin{exam}\label{Eg:tree_code}
	For instance, writing sequences from bottom to top:
	\begin{align}
		\raisebox{-0.3\height}{\begin{tikzpicture}[x=0.3cm, y=0.3cm]
				\draw (0,0)--(0,1);
				\draw (0,1)--(-4,5);
				\draw (0,1)--(4,5);
				\draw (0,1)--(0,3);
				\draw (0,3)--(-1,5);
				\draw (0,3)--(1,5);
				\draw (-3,4)--(-2,5);
				\draw (-3,4)--(-3,5);
				% dessin des noeuds
				\filldraw (0,1) circle (0.15);
				\draw[ color=black, fill=white]  (-4,5) circle (0.15);
				\draw[ color=black, fill=white]  (-3,5) circle (0.15);
				\draw[ color=black, fill=white]  (-2,5) circle (0.15);
				\draw[ color=black, fill=white]  (-1,5) circle (0.15);
				\draw[ color=black, fill=white]  (4,5) circle (0.15);
				\draw[ color=black, fill=white]  (1,5) circle (0.15);
				\filldraw (-3,4) circle (0.15);
				\filldraw (0,3) circle (0.15);		
				% Décorations angulaires
				\draw (-3.5, 5.2) node{\small{$1$}};
				\draw (-2.5, 5.2) node{\small{$2$}};
				\draw (-1.5,5.2) node{\small{$3$}};
				\draw (0, 5.2) node{\small{$4$}};
				\draw (2.5, 5.2) node{\small{$5$}};
                % niveaux
                \draw[dashed] (-4,4) -- (4,4);
                \draw[dashed] (-4,3) -- (4,3);
                \draw[dashed] (-4,1) -- (4,1);
		\end{tikzpicture}} && \overset{\varphi_5}{\longrightarrow} && \begin{array}{c}
			\IEM{1}{5} \\
			\IEM{3}{5} \\
			\{3\}~\{5\} \\
			\emptyset
		\end{array}
		&& 
        \overset{\Grid_5}{\longrightarrow} &&
\begin{tikzpicture}[x=1em,y=1em,baseline = 1.8em]
\small
\filldraw [fill=lightgray] (0,3) rectangle ++ (1,1);
\filldraw [fill=lightgray] (1,3) rectangle ++ (1,1);
\filldraw [fill=lightgray] (2,3) rectangle ++ (1,1);
\filldraw [fill=lightgray] (3,3) rectangle ++ (1,1);
\filldraw [fill=lightgray] (4,3) rectangle ++ (1,1);
\filldraw [fill=lightgray] (2,2) rectangle ++ (1,1);
\filldraw [fill=lightgray] (3,2) rectangle ++ (1,1);
\filldraw [fill=lightgray] (4,2) rectangle ++ (1,1);
\filldraw [fill=lightgray] (2,1) rectangle ++ (1,1);
\filldraw [fill=lightgray] (4,1) rectangle ++ (1,1);
\draw [line width = 1] (0,0) -- ++ (5, 0);
\draw [line width = 1] (0,1) -- ++ (5, 0);
\draw [line width = 1] (0,2) -- ++ (5, 0);
\draw [line width = 1] (0,3) -- ++ (5, 0);
\draw [line width = 1] (0,4) -- ++ (5, 0);
\draw [line width = 1] (0,0) -- ++ (0, 4);
\draw [line width = 1] (1,0) -- ++ (0, 4);
\draw [line width = 1] (2,0) -- ++ (0, 4);
\draw [line width = 1] (3,0) -- ++ (0, 4);
\draw [line width = 1] (4,0) -- ++ (0, 4);
\draw [line width = 1] (5,0) -- ++ (0, 4);
\draw(0.5,4.6) node{$1$};
\draw(1.5,4.6) node{$2$};
\draw(2.5,4.6) node{$3$};
\draw(3.5,4.6) node{$4$};
\draw(4.5,4.6) node{$5$};
\end{tikzpicture}
		%&& \overset{B}{\longrightarrow} && \begin{array}{c}
%			31 \\ 28 \\ 20 \\ 0
%		\end{array}
	\end{align}
\end{exam}

\begin{defi}\label{defi:Code_map}
	We define the \emph{code map}, denoted $\Code$, as the map $\Grid\circ\varphi\circ s:\Sch\to \GRID$ where $\GRID$ is the set of all grids of any size beginin $\GRID$. Let $G\in\Grid$, we put:
	\begin{tikzpicture}
	    
	\end{tikzpicture}g whose first line is gray and its last one is white. 
\end{defi}
\begin{Rq} \noindent
	\begin{itemize}
		\item The height of the tree is the number of lines plus one needed to write its code. 
		\item As each element of $\Schl$ is in bijection with an initial chain. The section $s$ defines a normal form for initial chains that we do not study here.
	\end{itemize}
\end{Rq}

Thanks to this result we can now encode one \Schw{} tree as a list of integers/sequences over $\{0,1\}$ into a computer. Now, we can look at the action of quasi-shuffles on theses encodings from $\ima(\Code )$ for an easy computation in the free tridendriform algebra with one generator.

%\PCt{Je mets une partie pour décrire l'image des $\varphi_n$ que l'on ne conservera peut-être pas.}
%
%\subsubsection{Description of the image of \Schw{} trees}
%
%We can notice that the image of $\varphi\circ s$, where $\varphi=\displaystyle\bigoplus_{n=0}^{+\infty} \varphi_n$, is strictly included inside $\displaystyle \bigoplus_{n=0}^{+\infty} \IC(\IEM{1}{n})$. We give here a description of the image of $\varphi$. For this purpose we need to introduce some definitions:
%\begin{defi}
%	Let $n\in\NN$ and $x\in\mathcal{P}(\IEM{1}{n})$. An \emph{interval} in $x$ is a subset of $x$ of the shape $\IEM{i}{j}$ with $i<j.$
%	We will denote $\Con(x)$ the set of maximal intervals in $x$.
%\end{defi}
%With this connectivity concept, we have:
%\[
%x=\bigcup_{c\in\Con(x)} c.
%\]
%\begin{Rq}
%	Let $n\in\NN$ et $x\in\mathcal{P}(\IEM{1}{n})$.
%	Note that $\Con(x)$ has a total order induced by the order of the minimum of each block. More precisely, for $(c_1,c_2)\in\Con(x)$:
%	\[
%	c_1<c_2 \iff \min(c_1)<\min(c_2).
%	\]
%\end{Rq}
%\begin{defi}
%	Let $n\in\NN$ and consider $C\in\IC(\IEM{1}{n})$. Let $x\in C$ which is not an extrema of $C$. We denote by $\prev{x}$ the previous element in the chain $C$.  We denote by $\suiv{x}$ the next element in the chain $C$. 
%\end{defi}
% 
% \begin{prop}
% 	We have $C\in\ima(\varphi\circ s)$ if and only if:
% 	\[
% 	\forall x\in C\setminus\{\max(C)\}, \forall c\in\Con(x), \forall c'\in \Con(x), c'>c \implies c'\in \suiv{x}.
% 	\]
% \end{prop}
%\begin{proof}
%	Todo
%\end{proof}
%\PCt{Est-ce qu'on parle des suites croissantes pour le poids de Hamming ?} 

\subsection{The tridendriform structure on $\ima(\Code)$}

IN this section, we describe teh action of a quasi-shuffle over a pair of elements of $\ima(\Code)$. In our case, this operation is more complicated than just a concatenation.
\begin{Rq}
	For now, for any $n\in\NN$, we will no longer distinguish elements of $\IC(\IEM{1}{n})$ and its image by the map $\Grid{}$. This enables us to use the notations of both formalism.
\end{Rq}

\subsubsection{Preliminary definitions}

In order to describe properly to the computer what we want to do, we have to identify elements encoding the comb representation of the tree associated to it. 

\begin{defi}[forest code]\label{defi:forest_codes}
	Let $n\in\NN$. A \emph{forest code} $C$ is either an element of $\ima(\Code)$ \emph{or} a sequence such taht there exists $C'\in\ima(\Code)$ %be an initial chain of the set $\IEM{1}{n}$
	 such that $C=C'\setminus \min(C')$.
	The \emph{forest} associated to a forest code is the ordered concatenation of trees we get deleting the root of the tree encoded by $C'$. 
\end{defi}
\begin{Rq}
	The forest code associated to the empty chain, denoted $\varepsilon$, is $\unitree$. 
	Note that if $C$ is a forest code that is not an initial chain, then there exists a unique ordered forest $F$ such that $\Code(T)=C'$ where $T$ is the grafting of all the element of the forest $F$ to a common root. Hence, we extend the definition of the map $\Code$ to forests codes in such a way that $\Code(F)=C.$ 
\end{Rq}
\begin{exam}
	For instance:
	\[
	\Code\left(\balaisd~\Y~\unitree\right)=
	\raisebox{-0.5\height}{% [inline block 0: 26 envs, 20211 chars in 11 pieces, piece 1 here, a bare % at each other -> data_tex | \begin{tikzpicture}[x=1em,y=1em] \filldraw [fill=lightgray] (0,3) rectangle ++ (1,1);...]
~.
	\]
\end{exam}

\begin{Not}
We introduce the \emph{block notation} in $\GRID$. Let $G\in\Grid$:
\begin{itemize}
	\item \raisebox{-0.5\height}{%
} represents the same grid inserting $G$,
\item	\raisebox{-0.5\height}{%
} represents the grid inserting $G$ and copying its last line in the hatched area.
\end{itemize}
\end{Not}

\begin{defi}[left comb decomposition]\label{defi:left_comb_decomposition}
	Let $n\in\NN$ and $C\in\IC(\IEM{1}{n})$.
	Its \emph{left comb decomposition}, denoted \LCD{C}, is obtained inductively the following way:
	\begin{align*}
	\LCD{\varepsilon}=\varepsilon,  &&
	\LCD{C}=( F_1, \LCD{C_1} ),
	\end{align*}
	where in the induction we assumed $A\times (B \times C)\approx (A\times B) \times C$, $F_1$ and $C_1$ are such that
{C=%
} if it is a tree code or  {C = 
    %
} if it is a forest code,
    where $\tilde{F_1}$ and $\tilde{C_1}$ are respectively $F_1$ and $C_1$ keeping only once any instance of a set in the sequence. Moreover, the spotted central column is the \emph{leftmost} column in $C$ ending with a gray square or with a unique white square if the last line is white. 
\end{defi}
\begin{exam}\label{exam:left_comb_decomposition}
	We illustrate the induction of definition~\ref{defi:left_comb_decomposition} as compare the result to the comb decomposition of trees.
	\[
	s=\raisebox{-0.3\height}{%
}, \varepsilon\right).
	\end{align*}
	Associating to it the $3$-uplet of the forests code by inverting $\Code$ and applying it component-wise, we have:
	\[
	\left( \Y{}~\unitree, \balais{}, \unitree\right).
	\]
	We have recovered the comb decomposition of $t$ (after forgetting the rightmost $\unitree$ which always appears during this process) reading the forests from the root to the leftmost leaf. Moreover, this algorithm gives rise to the left comb representation of a grid:
	\[
	S=\raisebox{-0.3\height}{%
}
	\]
	
\end{exam}

\begin{defi}[right comb decomposition] \label{defi:right_comb_decomposition}
		Let $n\in\NN$ and $C\in\IC(\IEM{1}{n})$.
		Its \emph{right comb decomposition}, denoted \RCD{C}, is obtained inductively the following way:
		\begin{align*}
		\RCD{\varepsilon}=\varepsilon, &&
		\RCD{C}=( F_1, \RCD{C_1} ),
		\end{align*}
		where in the induction we assumed $A\times (B \times C)\approx (A\times B) \times C$,$F_1$ and $C_1$ are such that
{C=%
} if it is a tree code or  {C = 
    %
} if it is a forest code,
    where $\tilde{F_1}$ and $\tilde{C_1}$ are respectively $F_1$ and $C_1$ keeping only once any instance of a set in the sequence. Moreover, the spotted central column is the \emph{rightmost} column in $C$ ending with a gray square or with a unique white square if the last line is white.
\end{defi}
\begin{exam}\label{exam:right_comb_decomposition}
	We illustrate the induction of definition~\ref{defi:right_comb_decomposition} as compare the result to the comb decomposition of trees.
	\[
	t=\raisebox{-0.3\height}{%
},\varepsilon\right)
	\end{align*}
	Associating to it the $3$-uplet of the forests code, we have:
	\[
	\left( \Y, \Y~\unitree, \unitree\right).
	\]
	We have recovered the comb decomposition of $t$ (after forgetting the rightmost $\unitree$ which always appears during this process) reading the forests from the root to the rightmost leaf. Moreover, this algorithm gives rise to the right comb representation of a grid:
	\[
	T=\raisebox{-0.3\height}{%
}
	\]
	\end{exam}
\begin{defi}\label{defi:tree_length}
	For any grid $G\in\ima(\Code)$, we define respectively the \emph{right tree length} and the \emph{left tree length} by:
	\[
	\RTL(G)\coloneqq|\RCD{G}|-1 \text{ and }\LTL(G)\coloneqq |\LCD{G}|-1.
	\]
\end{defi}
Now, we have an analogous of the comb decomposition for our grids. We can now describe the action of a quasi-shuffle on a pair of grids from $\ima(\Code)$.

\subsubsection{The tridendriform structure for initial chains}

By analogy with \Schw{} trees, we put:
\begin{defi}\label{defi:quasishuffle_code}
	Let $T\in\ima(\Code)$ with $n$ columns and $S\in\ima(\Code)$ with $m$ columns. Let us denote $k=\RTL(T)$ and $l=\LTL(S)$. We also put $h_T$ and $h_S$ the number of lines of $T$ and $S$. Consider $\sigma\in\batc(k,l)$ which has for image $\IEM{1}{r}$. We denote by $\sigma(T,S)$ the element of $\GRID$ with $n+m$ columns of $h_T+h_S+r-k-l+1$ lines defined by:
	\begin{enumerate}
		\item at the left top corner, fill the rectangle with $n$ columns and $h_T-k+1$ lines with the $h_T-k+1$ first lines of $T$, at the right of this rectangle put a gray rectangle with $n$ columns and $h_T-k$ lines;  
\begin{center}
    \raisebox{-0.3\height}{\begin{tikzpicture}[x=1.2em,y=1.2em,baseline=2.5em]
		\filldraw [fill=lightgray!40] (0,4) rectangle ++ (2,3);
		\filldraw [fill=lightgray!40] (3,2) rectangle ++ (2,3);
		\filldraw [fill=lightgray!40] (6,0) rectangle ++ (2,3);
		\filldraw [fill=lightgray!40] (9,-2) rectangle ++ (2,3);
		\filldraw [fill=lightgray] (2,5) rectangle ++ (3,2);
		\filldraw [fill=lightgray] (2,-2) rectangle ++ (1,7);
		\filldraw [fill=lightgray] (8,-2) rectangle ++ (1,3);
		\filldraw [fill=lightgray] (5,-2) rectangle ++ (1,9);
		\filldraw [fill=lightgray] (6,3) rectangle ++ (5,4);
		\filldraw [fill=lightgray] (8,1) rectangle ++ (3,2);
		\draw[pattern=north west lines, pattern color=gray] (0,-2) rectangle ++ (2,6);
		\draw[pattern=north west lines, pattern color=gray] (3,-2) rectangle ++ (2,4);
		\draw[pattern=north west lines, pattern color=gray] (6,-2) rectangle ++ (2,2);
		%horizontal
		\draw [line width = 1] (0,7) -- ++ (11, 0);
		\draw [line width = 1] (2,6) -- ++ (9, 0);
		\draw [line width = 1] (5,4) -- ++ (6, 0);
		\draw [line width = 1] (0,4) -- ++ (3, 0);
		\draw [line width = 1] (2,5) -- ++ (9, 0);
		\draw [line width = 1] (0,3) -- ++ (3, 0);
		\draw [line width = 1] (0,2) -- ++ (6, 0);
		\draw [line width = 1] (0,1) -- ++ (6, 0);
		\draw [line width = 1] (0,0) -- ++ (6, 0);
		\draw [line width = 1] (0,-1) -- ++ (9, 0);
		\draw [line width = 1] (0,-2) -- ++ (9, 0);
		\draw [line width = 1] (5,3) -- ++ (6, 0);
		\draw [line width = 1] (8,2) -- ++ (3, 0);
		\draw [line width = 1] (8,1) -- ++ (3, 0);
		\draw [line width = 1] (8,0) -- ++ (1, 0);
		%vertical
		\draw [line width = 1] (0,-2) -- ++ (0, 9);
		\draw [line width = 1] (2,-2) -- ++ (0, 9);
		\draw [line width = 1] (3,-2) -- ++ (0, 9);	
		\draw [line width = 1] (5, -2) -- ++ (0, 9);
		\draw [line width = 1] (6, -2) -- ++ (0, 9);
		\draw [line width = 1] (4, -2) -- ++ (0, 4);
		\draw [line width = 1] (11, -2) -- ++ (0, 9);
		%			\draw [line width = 1] (10, -2) -- ++ (0, 1);
		\draw [line width = 1] (10, 1) -- ++ (0, 6);
		\draw [line width = 1] (9, -2) -- ++ (0, 9);
		\draw [line width = 1] (8, -2) -- ++ (0, 9);
		\draw [line width = 1] (7, -2) -- ++ (0, 2);
		\draw [line width = 1] (7, 3) -- ++ (0, 4);
		\draw [line width = 1] (5, -2) -- ++ (0, 3);
		\draw [line width = 1] (4, 5) -- ++ (0, 2);
		\draw [line width = 1] (1,-2) -- ++ (0, 6);
		\draw (1,5.5) node{${F_1}$};
		\draw (4,3.5) node{${F_2}$};
		\draw (7,1.5) node{${\cdots}$};
		\draw (10,-0.5) node{${F_{k}}$};
		% tableau 
		\begin{scope}[shift={(11,-2)}]
					\filldraw [fill=lightgray] (0,0) rectangle ++ (11,9);
					\draw [line width = 1] (0,0) -- ++ (11, 0);
					\draw [line width = 1] (0,1) -- ++ (11, 0);
					\draw [line width = 1] (0,2) -- ++ (11, 0);
					\draw [line width = 1] (0,3) -- ++ (11, 0);
					\draw [line width = 1] (0,4) -- ++ (11, 0);
					\draw [line width = 1] (0,5) -- ++ (11, 0);
					\draw [line width = 1] (0,6) -- ++ (11, 0);
					\draw [line width = 1] (0,7) -- ++ (11, 0);
					\draw [line width = 1] (0,8) -- ++ (11, 0);
					\draw [line width = 1] (0,9) -- ++ (11, 0);
					\draw [line width = 1] (0,0) -- ++ (0, 9);
					\draw [line width = 1] (1,0) -- ++ (0, 9);
					\draw [line width = 1] (2,0) -- ++ (0, 9);
					\draw [line width = 1] (3,0) -- ++ (0, 9);
					\draw [line width = 1] (4,0) -- ++ (0, 9);
					\draw [line width = 1] (5,0) -- ++ (0, 9);
					\draw [line width = 1] (6,0) -- ++ (0, 9);
					\draw [line width = 1] (7,0) -- ++ (0, 9);
					\draw [line width = 1] (8,0) -- ++ (0, 9);
					\draw [line width = 1] (9,0) -- ++ (0, 9);
					\draw [line width = 1] (10,0) -- ++ (0, 9);
					\draw [line width = 1] (11,0) -- ++ (0, 9);
		\end{scope}
\end{tikzpicture}}
\end{center}
		\item for $h$ from $r$ to $1$ with a $-1$ step: 
		\begin{description} 
			\item[Case 1] if $\sigma^{-1}(\{h\})=\{i\}, i\in\IEM{1}{k}$, write white blocks below the code of $F_i$ and on its adjacent right column. Fill the remaining slots of this line with the content of the previous one. Hence, one just concatenate this line to the current grid where $I$ is the column positions for the code of $F_i$ and its right adjacacent column:
			\begin{center}
			
			\begin{tikzpicture}[x=1.2em,y=1.2em,baseline=2.5em]
	%dessins
	\draw[pattern=north west lines, pattern color=gray] (0,0) rectangle ++ (3,1);
	\draw[pattern=north west lines, pattern color=gray] (11,1) rectangle ++ (11,-1);
	% horizontal
	\draw[line width = 1] (0,0) -- ++ (11,0);
	\draw[line width = 1] (0,1) -- ++ (11,0);
	% vertical
	\draw[line width = 1] (0,0) -- ++ (0,1);
	\draw[line width = 1] (1,0) -- ++ (0,1);
	\draw[line width = 1] (2,0) -- ++ (0,1);
	\draw[line width = 1] (3,0) -- ++ (0,1);
	\draw[line width = 1] (4,0) -- ++ (0,1);
	\draw[line width = 1] (5,0) -- ++ (0,1);
	\draw[line width = 1] (6,0) -- ++ (0,1);
	\draw[line width = 1] (7,0) -- ++ (0,1);
	\draw[line width = 1] (8,0) -- ++ (0,1);
	\draw[line width = 1] (9,0) -- ++ (0,1);
	\draw[line width = 1] (10,0) -- ++ (0,1);
	\draw[line width = 1] (11,0) -- ++ (0,1);
	%dessins
	\draw [decorate,decoration={brace,amplitude=5pt,raise=0.1ex}]
	(3,1) -- (6,1) node[midway,yshift=1.1em]{$I$};
	\begin{scope}[shift={(11,0)}]
		% horizontal
		\draw[line width = 1] (0,0) -- ++ (11,0);
		\draw[line width = 1] (0,1) -- ++ (11,0);
		% vertical
		\draw[line width = 1] (0,0) -- ++ (0,1);
		\draw[line width = 1] (1,0) -- ++ (0,1);
		\draw[line width = 1] (2,0) -- ++ (0,1);
		\draw[line width = 1] (3,0) -- ++ (0,1);
		\draw[line width = 1] (4,0) -- ++ (0,1);
		\draw[line width = 1] (5,0) -- ++ (0,1);
		\draw[line width = 1] (6,0) -- ++ (0,1);
		\draw[line width = 1] (7,0) -- ++ (0,1);
		\draw[line width = 1] (8,0) -- ++ (0,1);
		\draw[line width = 1] (9,0) -- ++ (0,1);
		\draw[line width = 1] (10,0) -- ++ (0,1);
		\draw[line width = 1] (11,0) -- ++ (0,1);
	\end{scope}
\end{tikzpicture}
			\end{center}
			\item[Case 2] if $\sigma^{-1}(\{h\})=\{j\}, j\in\IEM{k+1}{k+l}$, denote $h_j$ the number of lines to write $F_j$'s code. From the leftmost square of the line, use the first $n$ columns of the previous line  to fill a rectangle with $n$ columns and $h_j-1$ lines. For the other $m$ columns, write the $h_j-1$ last lines of $S$'s code describing forest $F_j$. Once it is done, copy the previous line and overwrite the content of the columns under $F_j$'s code and its left adjacent column by white squares. Hence, one just concatenates this grid (except for the first time) to the current one :
			\begin{center}
			\begin{tikzpicture}[x=1.2em,y=1.2em,baseline=2.5em]
	%dessins
	\draw[pattern=north west lines, pattern color=gray] (0,0) rectangle ++ (11,3);
	% horizontal
	\draw[line width = 1] (0,0) -- ++ (11,0);
	\draw[line width = 1] (0,1) -- ++ (11,0);
	\draw[line width = 1] (0,2) -- ++ (11,0);
	\draw[line width = 1] (0,3) -- ++ (11,0);
	% vertical
	\draw[line width = 1] (0,0) -- ++ (0,3);
	\draw[line width = 1] (1,0) -- ++ (0,3);
	\draw[line width = 1] (2,0) -- ++ (0,3);
	\draw[line width = 1] (3,0) -- ++ (0,3);
	\draw[line width = 1] (4,0) -- ++ (0,3);
	\draw[line width = 1] (5,0) -- ++ (0,3);
	\draw[line width = 1] (6,0) -- ++ (0,3);
	\draw[line width = 1] (7,0) -- ++ (0,3);
	\draw[line width = 1] (8,0) -- ++ (0,3);
	\draw[line width = 1] (9,0) -- ++ (0,3);
	\draw[line width = 1] (10,0) -- ++ (0,3);
	\draw[line width = 1] (11,0) -- ++ (0,3);
	%dessins
	\begin{scope}[shift={(11,0)}]
	    \draw[pattern=north west lines, pattern color=gray] (6,0) rectangle ++ (5,3);
		\filldraw [fill=lightgray!40] (4,1) rectangle ++ (2,2);
		\filldraw [fill=lightgray] (3,1) rectangle ++ (1,2);
		%horizontal
		\draw [line width = 1] (0,0) -- ++ (11, 0);
		\draw [line width = 1] (0,1) -- ++ (11, 0);
		\draw [line width = 1] (0,2) -- ++ (4, 0);
		\draw [line width = 1] (6,2) -- ++ (5, 0);
		\draw [line width = 1] (0,3) -- ++ (11, 0);
		%vertical
		\draw [line width = 1] (0,0) -- ++ (0, 3);
		\draw [line width = 1] (1,0) -- ++ (0, 3);
		\draw [line width = 1] (2,0) -- ++ (0, 3);
		\draw [line width = 1] (3,0) -- ++ (0, 3);
		\draw [line width = 1] (4,0) -- ++ (0, 3);
		\draw [line width = 1] (5,0) -- ++ (0, 1);
		\draw [line width = 1] (6,0) -- ++ (0, 3);
		\draw [line width = 1] (7,0) -- ++ (0, 3);
		\draw [line width = 1] (8,0) -- ++ (0, 3);
		\draw [line width = 1] (9,0) -- ++ (0, 3);
		\draw [line width = 1] (10,0) -- ++ (0, 3);
		\draw [line width = 1] (11,0) -- ++ (0, 3);
		%dessins
		\draw (5,2) node{$F_j$};
	\end{scope}
\end{tikzpicture}
			\end{center}
			\item[Case 3] if $\sigma^{-1}(\{h\})=\{i,j\}, (i,j)\in\IEM{1}{k}\times\IEM{k+1}{k+l}$, denote $h_j$ the number of lines to write $F_j$'s code. From the leftmost square of the line, use the first $n$ columns of the previous line  to fill a rectangle with $n$ columns and $h_j-1$ lines. For the other $m$ columns, write the $h_j-1$ lines of $S$'s code describing forest $F_j$. Once it is done, copy the previous line and overwrite the content of the columns under $F_j$'s code and its left adjacent column \emph{and} the columns under $F_i$'s code and its right adjacent column with white squares. Hence, one just adds this line to the current grid where $I$ is the column positions for the code of $F_i$ and its right adjacacent column:
			\begin{center}
			\begin{tikzpicture}[x=1.2em,y=1.2em,baseline=2.5em]
	%dessins
	\draw[pattern=north west lines, pattern color=gray] (0,0) rectangle ++ (3,3);
	\draw[pattern=north west lines, pattern color=gray] (3,1) rectangle ++ (3,2);
	% horizontal
	\draw[line width = 1] (0,0) -- ++ (11,0);
	\draw[line width = 1] (0,1) -- ++ (11,0);
	\draw[line width = 1] (0,2) -- ++ (11,0);
	\draw[line width = 1] (0,3) -- ++ (11,0);
	% vertical
	\draw[line width = 1] (0,0) -- ++ (0,3);
	\draw[line width = 1] (1,0) -- ++ (0,3);
	\draw[line width = 1] (2,0) -- ++ (0,3);
	\draw[line width = 1] (3,0) -- ++ (0,3);
	\draw[line width = 1] (4,0) -- ++ (0,3);
	\draw[line width = 1] (5,0) -- ++ (0,3);
	\draw[line width = 1] (6,0) -- ++ (0,3);
	\draw[line width = 1] (7,0) -- ++ (0,3);
	\draw[line width = 1] (8,0) -- ++ (0,3);
	\draw[line width = 1] (9,0) -- ++ (0,3);
	\draw[line width = 1] (10,0) -- ++ (0,3);
	\draw[line width = 1] (11,0) -- ++ (0,3);
	%dessins
		\draw [decorate,decoration={brace,amplitude=5pt,raise=0.1ex}] (3,3) -- (6,3) node[midway,yshift=1.1em]{$I$};
		\begin{scope}[shift={(11,0)}]
	    \draw[pattern=north west lines, pattern color=gray] (6,0) rectangle ++ (5,3);
		\filldraw [fill=lightgray!40] (4,1) rectangle ++ (2,2);
		\filldraw [fill=lightgray] (3,1) rectangle ++ (1,2);
		%horizontal
		\draw [line width = 1] (0,0) -- ++ (11, 0);
		\draw [line width = 1] (0,1) -- ++ (11, 0);
		\draw [line width = 1] (0,2) -- ++ (4, 0);
		\draw [line width = 1] (6,2) -- ++ (5, 0);
		\draw [line width = 1] (0,3) -- ++ (11, 0);
		%vertical
		\draw [line width = 1] (0,0) -- ++ (0, 3);
		\draw [line width = 1] (1,0) -- ++ (0, 3);
		\draw [line width = 1] (2,0) -- ++ (0, 3);
		\draw [line width = 1] (3,0) -- ++ (0, 3);
		\draw [line width = 1] (4,0) -- ++ (0, 3);
		\draw [line width = 1] (5,0) -- ++ (0, 1);
		\draw [line width = 1] (6,0) -- ++ (0, 3);
		\draw [line width = 1] (7,0) -- ++ (0, 3);
		\draw [line width = 1] (8,0) -- ++ (0, 3);
		\draw [line width = 1] (9,0) -- ++ (0, 3);
		\draw [line width = 1] (10,0) -- ++ (0, 3);
		\draw [line width = 1] (11,0) -- ++ (0, 3);
		%dessins
		\draw (5,2) node{$F_j$};
	\end{scope}

\end{tikzpicture}
			\end{center}
		\end{description}
		The element obtained from this operation is denoted $\sigma(T,S)$.
	\end{enumerate} 
\end{defi}

\begin{exam}
		Consider the $(2,2)$-quasi-shuffle $\sigma=(1,3,2,3)$. Let us take 
	$t=$	\raisebox{-0.3\height}{\begin{tikzpicture}[line cap=round,line join=round,>=triangle 45,x=0.2cm,y=0.2cm]
		% Etage 1
		\draw (0,0)--(0,1);
		%Etage 2
		\draw (0,1)--(-1,2) node[left]{$f_1$};
		% Etage 3
		\draw (0,1)--(2,3);
		\draw (1,2)--(0,3) node[left,above]{$f_2$};
		%marquages des sommets
		\filldraw (0,1) circle (0.15);
		\filldraw (1,2) circle (0.15);
		\draw[color=black, fill=white] (2,3) circle (0.15);
		\end{tikzpicture}} and $s=\raisebox{-0.3\height}{\begin{tikzpicture}[line cap=round,line join=round,>=triangle 45,x=0.2cm,y=0.2cm]
		% Etage 1
		\draw (0,0)--(0,1);
		%Etage 2
		\draw (0,1)--(1,2) node[right]{$f_{3}$};
		% Etage 3
		\draw (0,1)--(-2,3);
		\draw (-1,2)--(0,3) node[right,above]{$f_{4}$};
		%marquages des sommets
		\filldraw (0,1) circle (0.15);
		\filldraw (-1,2) circle (0.15);
		\draw[color=black, fill=white] (-2,3) circle (0.15);
		\end{tikzpicture}}$.
	Then $\sigma(t,s)=$\raisebox{-0.5\height}{\begin{tikzpicture}[line cap=round,line join=round,>=triangle 45,x=0.3cm,y=0.3cm]
		\draw (0,0)--(0,4);
		%Noeud 1
		\draw (0,1)--(-1,2) node[left]{$f_1$};
		%Noeud 2
		\draw (0,2)--(1,3) node[right]{$f_3$};
		%Noeud 3
		\draw (0,3)--(-1,4) node[left]{$f_2$};
		\draw (0,3)--(1,4) node[right]{$f_4$};
		%marquage des noeuds
		\filldraw (0,1) circle (0.15);
		\filldraw (0,2) circle (0.15);
		\filldraw (0,3) circle (0.15);
		\draw[color=black, fill=white] (0,4) circle (0.15);
		\end{tikzpicture}}.
	Then, let us denote for all $i\in\IEM{1}{4}, \Code(f_i)=F_i$.
	Hence, the code associated to $t$ and $s$ respectively denoted $T$ and $S$ are of the shape:
\[
T=\raisebox{-0.3\height}{\begin{tikzpicture}[x=1.2em,y=1.2em,baseline=2.5em]
		\filldraw [fill=lightgray!40] (0,4) rectangle ++ (2,3);
		\filldraw [fill=lightgray!40] (3,2) rectangle ++ (2,3);
		\filldraw [fill=lightgray] (2,5) rectangle ++ (3,2);
		\filldraw [fill=lightgray] (2,1) rectangle ++ (1,4);
		\filldraw [fill=lightgray] (5,2) rectangle ++ (1,5);
		\draw[pattern=north west lines, pattern color=gray] (0,1) rectangle ++ (2,3);
		%horizontal
		\draw [line width = 1] (0,7) -- ++ (6, 0);
		\draw [line width = 1] (2,6) -- ++ (4, 0);
		\draw [line width = 1] (2,4) -- ++ (1, 0);
		\draw [line width = 1] (5,4) -- ++ (1, 0);
		\draw [line width = 1] (0,4) -- ++ (2, 0);
		\draw [line width = 1] (2,5) -- ++ (4, 0);
		\draw [line width = 1] (0,3) -- ++ (3, 0);
		\draw [line width = 1] (0,2) -- ++ (6, 0);
		\draw [line width = 1] (0,1) -- ++ (6, 0);
		\draw [line width = 1] (0,0) -- ++ (6, 0);
		\draw [line width = 1] (5,3) -- ++ (1, 0);
		%vertical
		\draw [line width = 1] (0,0) -- ++ (0, 7);
		\draw [line width = 1] (2,0) -- ++ (0, 7);
		\draw [line width = 1] (3,0) -- ++ (0, 7);	
		\draw [line width = 1] (5, 0) -- ++ (0, 7);
		\draw [line width = 1] (6, 0) -- ++ (0, 7);
		\draw [line width = 1] (4, 0) -- ++ (0, 2);
		\draw [line width = 1] (5, 0) -- ++ (0, 1);
		\draw [line width = 1] (4, 5) -- ++ (0, 2);
		\draw [line width = 1] (1,0) -- ++ (0, 4);
		\draw(1,5.5) node{${F_1}$};
		\draw(4,3.5) node{${F_2}$};
\end{tikzpicture}}
\text{ and }
S=\raisebox{-0.3\height}{\begin{tikzpicture}[x=1.2em,y=1.2em,baseline=2.5em]
		\filldraw [fill=lightgray!40] (0,4) rectangle ++ (2,3);
		\filldraw [fill=lightgray!40] (3,1) rectangle ++ (2,3);
		\filldraw [fill=lightgray] (2,4) rectangle ++ (3,3);
		\filldraw [fill=lightgray] (2,1) rectangle ++ (1,3);
		\filldraw [fill=lightgray] (-1,4) rectangle ++ (1,3);
		%horizontal
		\draw [line width = 1] (2,6) -- ++ (3, 0);
		\draw [line width = 1] (-1,4) -- ++ (6, 0);
		\draw [line width = 1] (2,5) -- ++ (3, 0);
		\draw [line width = 1] (0,3) -- ++ (1, 0);
		\draw [line width = 1] (0,2) -- ++ (1, 0);
		\draw [line width = 1] (-1,1) -- ++ (6, 0);
		\draw [line width = 1] (-1,0) -- ++ (6, 0);
		\draw [line width = 1] (-1,7) -- ++ (6, 0);
		\draw [line width = 1] (-1,2) -- ++ (4, 0);
		\draw [line width = 1] (-1,3) -- ++ (4, 0);
		\draw [line width = 1] (-1,5) -- ++ (1, 0);
		\draw [line width = 1] (-1,6) -- ++ (1, 0);
		%vertical
		\draw [line width = 1] (2,0) -- ++ (0, 7);
		\draw [line width = 1] (3,0) -- ++ (0, 7);	
		\draw [line width = 1] (5, 0) -- ++ (0, 7);
		\draw [line width = 1] (-1, 0) -- ++ (0, 7);
		\draw [line width = 1] (4, 0) -- ++ (0, 1);
		\draw [line width = 1] (5, 0) -- ++ (0, 1);
		\draw [line width = 1] (4, 4) -- ++ (0, 3);
		\draw [line width = 1] (1,0) -- ++ (0, 4);
		\draw [line width = 1] (0,0) -- ++ (0, 4);
		\draw(1,5.5) node{${F_4}$};
		\draw(4,2.5) node{${F_3}$};
	\end{tikzpicture}}.
	\]
	Then $\sigma(T,S)$ is the code of the tree $\sigma(t,s)$ given by: 
	\begin{center}
	\begin{tikzpicture}[x=1.2em,y=1.2em,baseline=2.5em]
		\filldraw [fill=lightgray!40] (0,4) rectangle ++ (2,3);
		\filldraw [fill=lightgray!40] (3,2) rectangle ++ (2,3);
		\filldraw [fill=lightgray] (2,5) rectangle ++ (3,2);
		\filldraw [fill=lightgray] (2,-4) rectangle ++ (1,10);
		\filldraw [fill=lightgray] (5,3) rectangle ++ (7,4);
		\filldraw [fill=lightgray] (5,0) rectangle ++ (1,3);
		\draw[pattern=north west lines, pattern color=gray] (0,-4) rectangle ++ (2,8);
		\draw[pattern=north west lines, pattern color=gray] (3,0) rectangle ++ (2,2);
		%horizontal
		\draw [line width = 1] (0,7) -- ++ (12, 0);
		\draw [line width = 1] (2,6) -- ++ (10, 0);
		\draw [line width = 1] (2,4) -- ++ (1, 0);
		\draw [line width = 1] (5,4) -- ++ (7, 0);
		\draw [line width = 1] (2,5) -- ++ (10, 0);
		\draw [line width = 1] (0,3) -- ++ (3, 0);
		\draw [line width = 1] (0,2) -- ++ (6, 0);
		\draw [line width = 1] (0,1) -- ++ (6, 0);
		\draw [line width = 1] (0,0) -- ++ (12, 0);
		\draw [line width = 1] (5,3) -- ++ (7, 0);
		%vertical
		\draw [line width = 1] (0,-5) -- ++ (0, 12);
		\draw [line width = 1] (2,-5) -- ++ (0, 12);
		\draw [line width = 1] (3,-5) -- ++ (0, 12);
		\draw [line width = 1] (5, -5) -- ++ (0, 12);
		\draw [line width = 1] (6, -5) -- ++ (0, 12);
		\draw [line width = 1] (4, -5) -- ++ (0, 7);
		\draw [line width = 1] (5, -5) -- ++ (0, 6);
		\draw [line width = 1] (4, 5) -- ++ (0, 2);
		\draw [line width = 1] (1,-5) -- ++ (0, 9);
		\draw(1,5.5) node{${F_1}$};
		\draw(4,3.5) node{${F_2}$};
		
		\begin{scope}[shift={(7,-4)}]
			\filldraw [fill=lightgray!40] (0,4) rectangle ++ (2,3);
			\filldraw [fill=lightgray!40] (3,1) rectangle ++ (2,3);
			%\filldraw [fill=lightgray] (3,3) rectangle ++ (2,1);
			\filldraw [fill=lightgray] (2,4) rectangle ++ (3,4);
			\filldraw [fill=lightgray] (0,7) rectangle ++ (3,1);
			\filldraw [fill=lightgray] (2,1) rectangle ++ (1,3);
			\filldraw [fill=lightgray] (-1,4) rectangle ++ (1,4);
			%horizontal
			\draw [line width = 1] (-1,7) -- ++ (6, 0);
			\draw [line width = 1] (2,6) -- ++ (3, 0);
			\draw [line width = 1] (2,4) -- ++ (3, 0);
			\draw [line width = 1, double] (-7,4) -- ++ (12, 0);
			\draw [line width = 1] (2,5) -- ++ (3, 0);
			\draw [line width = 1] (0,3) -- ++ (3, 0);
			\draw [line width = 1] (0,2) -- ++ (3, 0);
			\draw [line width = 1, double] (-7,0) -- ++ (12, 0);
			\draw [line width = 1, double] (-7,-1) -- ++ (12, 0);
			\draw [line width = 1] (-7,1) -- ++ (12, 0);
			\draw [line width = 1, double] (-7,1) -- ++ (12, 0);
			\draw [line width = 1] (-7,2) -- ++ (9, 0);
			\draw [line width = 1, double] (-7,3) -- ++ (10, 0);
			\draw [line width = 1] (-7,5) -- ++ (7, 0);
			\draw [line width = 1] (-7,6) -- ++ (7, 0);
			%vertical
			\draw [line width = 1] (0,-1) -- ++ (0, 12);
			\draw [line width = 1] (2,-1) -- ++ (0, 12);
			\draw [line width = 1] (3,-1) -- ++ (0, 12);
			\draw [line width = 1] (5,-1) -- ++ (0, 12);
			\draw [line width = 1] (-1,-1) -- ++ (0, 7);
			\draw [line width = 1] (4,-1) -- ++ (0, 2);
			\draw [line width = 1] (5, 0) -- ++ (0, 1);
			\draw [line width = 1] (4, 4) -- ++ (0, 7);
			\draw [line width = 1] (1,-1) -- ++ (0, 5);
			\draw [line width = 1] (1,7) -- ++ (0, 4);
			\draw(1,5.5) node{${F_4}$};
			\draw(4,2.5) node{${F_3}$};
		\end{scope}
		\end{tikzpicture}
	\end{center}

For more details, the reader can refer to example~\ref{exam:algo_atomic_product} in the algorithmic section. In the particular case, where $T$ are $S$ the grids of respectively from examples~\ref{exam:right_comb_decomposition} and \ref{exam:left_comb_decomposition}, one gets the grid:
\[
		\raisebox{-0.5\height}{\begin{tikzpicture}[x=1em,y=1em]
				\filldraw [fill=lightgray] (0,7) rectangle ++ (1,1);
				\filldraw [fill=lightgray] (1,7) rectangle ++ (1,1);
				\filldraw [fill=lightgray] (2,7) rectangle ++ (1,1);
				\filldraw [fill=lightgray] (3,7) rectangle ++ (1,1);
				\filldraw [fill=lightgray] (4,7) rectangle ++ (1,1);
				\filldraw [fill=lightgray] (5,7) rectangle ++ (1,1);
				\filldraw [fill=lightgray] (6,7) rectangle ++ (1,1);
				\filldraw [fill=lightgray] (7,7) rectangle ++ (1,1);
				\filldraw [fill=lightgray] (8,7) rectangle ++ (1,1);
				\filldraw [fill=lightgray] (9,7) rectangle ++ (1,1);
				\filldraw [fill=lightgray] (10,7) rectangle ++ (1,1);
				\filldraw [fill=lightgray] (1,6) rectangle ++ (1,1);
				\filldraw [fill=lightgray] (2,6) rectangle ++ (1,1);
				\filldraw [fill=lightgray] (3,6) rectangle ++ (1,1);
				\filldraw [fill=lightgray] (4,6) rectangle ++ (1,1);
				\filldraw [fill=lightgray] (5,6) rectangle ++ (1,1);
				\filldraw [fill=lightgray] (6,6) rectangle ++ (1,1);
				\filldraw [fill=lightgray] (7,6) rectangle ++ (1,1);
				\filldraw [fill=lightgray] (8,6) rectangle ++ (1,1);
				\filldraw [fill=lightgray] (9,6) rectangle ++ (1,1);
				\filldraw [fill=lightgray] (10,6) rectangle ++ (1,1);
				\filldraw [fill=lightgray] (1,5) rectangle ++ (1,1);
				\filldraw [fill=lightgray] (3,5) rectangle ++ (1,1);
				\filldraw [fill=lightgray] (4,5) rectangle ++ (1,1);
				\filldraw [fill=lightgray] (5,5) rectangle ++ (1,1);
				\filldraw [fill=lightgray] (6,5) rectangle ++ (1,1);
				\filldraw [fill=lightgray] (7,5) rectangle ++ (1,1);
				\filldraw [fill=lightgray] (8,5) rectangle ++ (1,1);
				\filldraw [fill=lightgray] (9,5) rectangle ++ (1,1);
				\filldraw [fill=lightgray] (10,5) rectangle ++ (1,1);
				\filldraw [fill=lightgray] (1,4) rectangle ++ (1,1);
				\filldraw [fill=lightgray] (3,4) rectangle ++ (1,1);
				\filldraw [fill=lightgray] (4,4) rectangle ++ (1,1);
				\filldraw [fill=lightgray] (5,4) rectangle ++ (1,1);
				\filldraw [fill=lightgray] (8,4) rectangle ++ (1,1);
				\filldraw [fill=lightgray] (9,4) rectangle ++ (1,1);
				\filldraw [fill=lightgray] (10,4) rectangle ++ (1,1);
				\filldraw [fill=lightgray] (1,3) rectangle ++ (1,1);
				\filldraw [fill=lightgray] (8,3) rectangle ++ (1,1);
				\filldraw [fill=lightgray] (9,3) rectangle ++ (1,1);
				\filldraw [fill=lightgray] (10,3) rectangle ++ (1,1);
				\filldraw [fill=lightgray] (1,2) rectangle ++ (1,1);
				\filldraw [fill=lightgray] (8,2) rectangle ++ (1,1);
				\filldraw [fill=lightgray] (10,2) rectangle ++ (1,1);
				\filldraw [fill=lightgray] (1,1) rectangle ++ (1,1);
				\draw [line width = 1] (0,0) -- ++ (11, 0);
				\draw [line width = 1] (0,1) -- ++ (11, 0);
				\draw [line width = 1] (0,2) -- ++ (11, 0);
				\draw [line width = 1] (0,3) -- ++ (11, 0);
				\draw [line width = 1] (0,4) -- ++ (11, 0);
				\draw [line width = 1] (0,5) -- ++ (11, 0);
				\draw [line width = 1] (0,6) -- ++ (11, 0);
				\draw [line width = 1] (0,7) -- ++ (11, 0);
				\draw [line width = 1] (0,8) -- ++ (11, 0);
				\draw [line width = 1] (0,0) -- ++ (0, 8);
				\draw [line width = 1] (1,0) -- ++ (0, 8);
				\draw [line width = 1] (2,0) -- ++ (0, 8);
				\draw [line width = 1] (3,0) -- ++ (0, 8);
				\draw [line width = 1] (4,0) -- ++ (0, 8);
				\draw [line width = 1] (5,0) -- ++ (0, 8);
				\draw [line width = 1] (6,0) -- ++ (0, 8);
				\draw [line width = 1] (7,0) -- ++ (0, 8);
				\draw [line width = 1] (8,0) -- ++ (0, 8);
				\draw [line width = 1] (9,0) -- ++ (0, 8);
				\draw [line width = 1] (10,0) -- ++ (0, 8);
				\draw [line width = 1] (11,0) -- ++ (0, 8);
				%%%%% Mise en valeurs des forêts
				\draw [double,line width = 1] (0,8) rectangle ++ (1,-2);
				\draw [double,line width = 1] (2,7) rectangle ++ (2,-2);
				\draw [double,line width = 1] (6,6) rectangle ++ (2,-2);
				\draw [double,line width = 1] (9,4) rectangle ++ (2,-2);
		\end{tikzpicture}} \leftrightarrow \raisebox{-0.3\height}{\begin{tikzpicture}[x=1em,y=1em,baseline=2.5em]
	%bords de l'arbre
	\draw (0,0) -- (0,1);
	\draw (0,1) -- (-6,7);
	\draw (0,1) -- (6,7);
		% dessins de l'arbre
	\draw (-5,7) -- (-5.5,6.5);
	\draw (-4,7) -- (-3.5,6.5);
	\draw (-3,7) -- (-3.5,6.5);
	\draw (-3.5,6.5) -- (0,1);
	\draw (-2,7) -- (-1.59,3.5);
	\draw (-1,7) -- (-1.59,3.5);
	\draw (1,7) -- (2,6);
	\draw (2,7) -- (2,6);
	\draw (3,7) -- (2,6);
	\draw (2,6) -- (-1.59,3.5);
	\draw (4,7) -- (4.5,6.5);
	\draw (5,7) -- (4.5,6.5);
	\draw (4.5,6.5) -- (0,1);
	%noeuds de l'arbre
	\filldraw[fill=black] (0,1) circle (2pt);
	\filldraw[fill=white] (-6,7) circle (2pt);
	\filldraw[fill=white] (-5,7) circle (2pt);
	\filldraw[fill=white] (-4,7) circle (2pt);
	\filldraw[fill=white] (-3,7) circle (2pt);
	\filldraw[fill=white] (-2,7) circle (2pt);
	\filldraw[fill=white] (-1,7) circle (2pt);
	\filldraw[fill=white] (6,7) circle (2pt);
	\filldraw[fill=white] (5,7) circle (2pt);
	\filldraw[fill=white] (4,7) circle (2pt);
	\filldraw[fill=white] (3,7) circle (2pt);
	\filldraw[fill=white] (2,7) circle (2pt);
	\filldraw[fill=white] (1,7) circle (2pt);
	%sommets internes
	\filldraw [fill=black] (-1.59,3.5) circle (2pt);
	\filldraw [fill=black] (-5.5,6.5) circle (2pt);
	\filldraw [fill=black] (2,6) circle (2pt);
	\filldraw [fill=black] (-3.5,6.5) circle (2pt);
	\filldraw [fill=black] (4.5,6.5) circle (2pt);
\end{tikzpicture}
}
		\]
\end{exam}

With this definition, we are now able to define three operators in $\ima(\Code)$.
\begin{defi}\label{defi:operations_codes}
	Let $T, S\in \ima(\Code)$ such that ${k=|\RTL(T)|}$ and ${l=|\LTL(S)|}$. We define:
	\begin{gather*}
		T \prec S= \sum_{\substack{\sigma\in\batc(k,l) \\ \sigma^{-1}(\{1\})=\{1\}}} \sigma(T,S), \quad
		T \succ S= \sum_{\substack{\sigma\in\batc(k,l) \\ \sigma^{-1}(\{1\})=\{k+1\}}} \sigma(T,S), \\
		T \cdot S= \sum_{\substack{\sigma\in\batc(k,l) \\ \sigma^{-1}(\{1\})=\{1,k+1\}}} \sigma(T,S).
	\end{gather*}
\end{defi}

 Definition~\ref{defi:quasishuffle_code} gives us a mathematical background to implement programs in section~\ref{sec:algorithms}.
We now show that this structure is a tridendriform structure on $\ima(\Code)$ as it is isomorphic to the free tridendriform algebra of \Schw{} trees.

\begin{thm} 
	Let $t,s$ be two \Schw{} trees with shapes given in equation~\eqref{eq:tree_combs}. Let us put $k$ the right comb length of $t$ and $l$ the left comb length of $s$. Let $\sigma\in\batc(k,l)$. Then:
	\[
	\Code(\sigma(t,s))= \sigma(\Code(t),\Code(s)).
	\]
\end{thm}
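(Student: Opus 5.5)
We compare the two grids row by row, working with the levelled normal form rather than with grids directly. Since $\Code=\Grid\circ\varphi\circ s$ and $\Grid$ is a bijection onto its image, it suffices to show that the initial chain $\varphi(s(\sigma(t,s)))$ coincides, line by line, with the chain produced by Definition~\ref{defi:quasishuffle_code}. The plan is first to describe $s(\sigma(t,s))$ explicitly in terms of $s$ applied to the forests $F_1,\dots,F_{k+l}$, and then to read off the lifetimes of its angles.

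\textbf{Step 1: normal form of $\sigma(t,s)$.} The defining property of $s$ (Proposition~\ref{prop:trees_to_levels}) is that every internal vertex is strictly higher than all internal vertices to its right. In $\sigma(t,s)$ the spine nodes $1,\dots,r$ are the nodes of the ladder of Definition~\ref{def:quasiaction}. I will unroll the recursive construction of the proof of Proposition~\ref{prop:trees_to_levels}. The levels are then assigned as follows.
\begin{itemize}
\item Spine node $1$ gets level $1$.
\item Going up the ladder from $h=r$ down to $h=1$, the right forest $F_j$ attached at node $h$, if any, occupies a block of consecutive levels whose relative order is that of $s(F_j)$.
\item The spine node $h$ sits just above the levels of everything to its right.
\item The left forests $F_i$ are placed above all of this, in the order $F_k,\dots,F_1$, each block again a shifted copy of $s(F_i)$.
\end{itemize}
Applying the same description to $t$ alone (where $l=0$) and to $s$ alone (where $k=0$) shows that the first $h_T-k+1$ lines of $T$ encode exactly the levels of $F_1,\dots,F_k$. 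It also shows that the lines of $S$ split into consecutive blocks: the block for $F_j$ has $h_j-1$ lines, and it sits just below the line where the spine node carrying $F_j$ dies.

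\textbf{Step 2: angles.} The angles of $\sigma(t,s)$ are those of the $F_i$'s together with the spine angles. The angles of $t$ (the first $n$ columns) and of $s$ (the last $m$ columns) keep their left-to-right order. The spine angle between $F_i$ and its right neighbour in $t$ now lives at node $\sigma(i)$, and likewise for the left neighbour of $F_j$ in $s$.

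A column is gray at row $c$ exactly when its vertex has level $\le c$. So, scanning the levels of $\sigma(t,s)$ downwards, each spine node $h$ turns white precisely the columns of the forests attached to it together with the adjacent spine angle. Those columns are $I$ in Case~1, the columns of $F_j$ plus its left neighbour in Case~2, and both sets in Case~3. The block of $h_j-1$ levels just above node $h$ reproduces the lines of $S$ describing $F_j$, while the $T$-columns stay frozen at their current values. This is exactly the instruction ``copy the previous line'' in the definition.

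\textbf{Step 3: counting.} A line count checks the total height: $(h_T-k+1)+\sum_j(h_j-1)+r$ should equal $h_T+h_S+r-k-l+1$, using $h_S=l+\sum_j(h_j-1)$. This identity follows from Step~1 applied to $s$.

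The main obstacle is Step~1, namely proving rigorously that the recursive levelling of $s$ on $\sigma(t,s)$ produces precisely this block layout. The difficulty is that $F_i$ and $F_j$ sharing a node in Case~3 must both sit correctly relative to each other. I would handle it by induction on $r$ using Lemma~\ref{lem:Indution_batc}: remove spine node $1$ and compare with $\sigma'(t',s')$ for the reduced shuffle. The three cases of that lemma then correspond exactly to Cases~1--3 at $h=1$, so only the bottom line has to be checked at each inductive step.
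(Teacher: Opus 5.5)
Your proposal is correct, and at its core it uses the same induction as the paper: peel off spine node~$1$ according to the three cases of Lemma~\ref{lem:Indution_batc}, and compare with the reduced shuffle $\sigma'$ acting on $(t',s)$, $(t,s')$ or $(t',s')$.

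The organization differs, however. The paper inducts on $k+l$ with the theorem itself as hypothesis. In each case it shows that $\Code(\sigma(t,s))$ and $\sigma(\Code(t),\Code(s))$ are the same gluing of the code of $F_1$ (and/or $F_{k+1}$) onto $\Code(\sigma'(\dots))$, resp.\ $\sigma'(\Code(\dots),\Code(\dots))$, and closes with the hypothesis. You use induction only on the tree side, to unroll the recursive definition of $s$ into an explicit level order, from bottom to top:
\[
\text{node } 1,\ R_1,\ \text{node } 2,\ R_2,\ \dots,\ \text{node } r,\ R_r,\ L_r,\ \dots,\ L_1,
\]
where $R_h$ and $L_h$ are the forests grafted right and left at node $h$. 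You then compare globally, row by row, with the loop ``for $h$ from $r$ to $1$'' of Definition~\ref{defi:quasishuffle_code}.

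What each route buys:
\begin{itemize}
\item Your route matches the iterative form of that definition. It also matches Algorithm~\ref{A:AtomicProduct}, whose phases L, R, L+R, I are exactly this order read bottom-up.
\item Your route never needs the recursive decomposition of the grid $\sigma(T,S)$ through $\sigma'$. The paper only asserts that decomposition ``by definition of the action of $\sigma$''.
\item The paper's route is shorter and keeps everything local.
\end{itemize}

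Two small slips to fix:
\begin{itemize}
\item The block of $F_j$ sits just \emph{above} the line where the angles of its own spine node turn white. It is just below the corresponding line for the next spine node.
\item The inductive step must check more than the bottom line. It must check everything written at $h=1$, including the $h_j-1$ lines of $F_j$ in Cases~2 and~3, together with the placement of the rows of $F_1$ at the top.
\end{itemize}
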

\begin{proof}
In order to prove this theorem, we will proceed by induction over the sum of $k$ and $l$. If $l=0$ or $k=0$, this means that $s=\unitree$ or $t=\unitree$. We initialize the induction for $0,1$ and $2$.
\begin{description}
	\item[Initialization] if $\max(k,l)=1$ and $k+l=1$, then the identity shuffle is acting on this pair of tree and one of them is $\unitree$. By definition~\ref{defi:quasishuffle_code}, we get back the tree code of the non-unit element.
	We get to the case $k+l=2$. If $k=0$ or $l=0$, then the result is obvious. So, we can assume $k,l\neq 0$ hence $\sigma=(1,1)$, hence:
	\[
	t=\raisebox{-0.3\height}{\begin{tikzpicture}[x=0.25cm,y=0.25cm]
		\filldraw[color=black] (0,1) circle (0.15);
		\draw (0,0) -- (0,1);
		\draw (0,1) -- (1,2);
		\draw (0,1) -- (-1,2);
		\draw (-1,2) node[left] {$f_1$};
	\end{tikzpicture}}, 
	s=\raisebox{-0.3\height}{\begin{tikzpicture}[x=0.25cm,y=0.25cm]
		\draw (0,0) -- (0,1);
		\filldraw[color=black] (0,1) circle (0.15);
		\draw (0,1) -- (1,2);
		\draw (0,1) -- (-1,2);
		\draw (1,2) node[right] {$f_2$};
	\end{tikzpicture}} \text{ and } \sigma(t,s)=\raisebox{-0.3\height}{\begin{tikzpicture}[x=0.25cm,y=0.25cm]
	\draw (0,0) -- (0,1);
	\draw (0,1) -- (1,2);
	\draw (0,1) -- (-1,2);
	\filldraw[color=black] (0,1) circle (0.15);
	\draw (1,2) node[right] {$f_2$};
	\draw (-1,2) node[left] {$f_1$};
	\end{tikzpicture}}
	\]
	Denoting $\Code(f_1)=F_1$ and $\Code(f_2)=F_2$, their respective codes are:
	\[
	T=	\begin{tikzpicture}[x=1.2em,y=1.2em,baseline=2.5em]
    	\filldraw [fill=lightgray!40] (0,1) rectangle ++ (2,3);
    	\filldraw [fill=lightgray] (2,1) rectangle ++ (1,3);
    	%horizontal
    	\draw [line width = 1] (0,1) -- ++ (3, 0);
    	\draw [line width = 1] (2,2) -- ++ (1, 0);
    	\draw [line width = 1] (2,3) -- ++ (1, 0);
    	\draw [line width = 1] (0,4) -- ++ (3, 0);
    	\draw [line width = 1] (0,0) -- ++ (3, 0);
    	%vertical
    	\draw [line width = 1] (0,1) -- ++ (0, 3);
    	\draw [line width = 1] (2,1) -- ++ (0, 3);
    	\draw [line width = 1] (3,1) -- ++ (0, 3);
    	\draw [line width = 1] (0, 0) -- ++ (0, 1);
    	\draw [line width = 1] (1, 0) -- ++ (0, 1);
    	\draw [line width = 1] (2, 0) -- ++ (0, 1);
    	\draw [line width = 1] (3, 0) -- ++ (0, 1);
    	\draw(1,2.5) node{${F_1}$};
    	\end{tikzpicture}, S=	\begin{tikzpicture}[x=1.2em,y=1.2em,baseline=2.5em]
	\filldraw [fill=lightgray!40] (1,1) rectangle ++ (2,3);
	\filldraw [fill=lightgray] (0,1) rectangle ++ (1,3);
	%horizontal
	\draw [line width = 1] (0,1) -- ++ (3, 0);
	\draw [line width = 1] (0,2) -- ++ (1, 0);
	\draw [line width = 1] (0,3) -- ++ (1, 0);
	\draw [line width = 1] (0,0) -- ++ (3, 0);
	\draw [line width = 1] (0,1) -- ++ (3, 0);
	\draw [line width = 1] (0,4) -- ++ (3, 0);
	%vertical
	\draw [line width = 1] (0,1) -- ++ (0, 3);
	\draw [line width = 1] (1,1) -- ++ (0, 3);
	\draw [line width = 1] (3,1) -- ++ (0, 3);
	\draw [line width = 1] (0, 0) -- ++ (0, 1);
	\draw [line width = 1] (1, 0) -- ++ (0, 1);
	\draw [line width = 1] (2, 0) -- ++ (0, 1);
	\draw [line width = 1] (3, 0) -- ++ (0, 1);
	\draw(2,2.5) node{${F_2}$};
	\end{tikzpicture}
 \text{ and } \sigma(T,S)=\raisebox{-0.3\height}{	\begin{tikzpicture}[x=1.2em,y=1.2em,baseline=2.5em]
		\filldraw [fill=lightgray!40] (5,1) rectangle ++ (2,3);
		\filldraw [fill=lightgray!40] (1,3) rectangle ++ (2,3);
		\filldraw [fill=lightgray] (3,4) rectangle ++ (4,2);
		\filldraw [fill=lightgray] (3,1) rectangle ++ (2,3);
    		\draw[pattern=north west lines, pattern color=gray] (1,1) rectangle ++ (2,2);
		%vertical
		\draw [line width = 1] (1,0) -- ++ (0, 6);
		\draw [line width = 1] (3,0) -- ++ (0, 6);
		\draw [line width = 1] (4,0) -- ++ (0, 6);	
		\draw [line width = 1] (5,0) -- ++ (0, 6);
		\draw [line width = 1] (7,0) -- ++ (0, 6);
		\draw [line width = 1] (2,0) -- ++ (0, 3);
		\draw [line width = 1] (6,4) -- ++ (0, 2);
		\draw [line width = 1] (6,1) -- ++ (0, -1);
		%horizontal
		\draw [line width = 1] (1,0) -- ++ (6, 0);
		\draw [line width = 1] (1,1) -- ++ (6, 0);
		\draw [line width = 1] (1,6) -- ++ (6, 0);
		\draw [line width = 1] (1,2) -- ++ (4, 0);	
		\draw [line width = 1] (1,3) -- ++ (4, 0);
		\draw [line width = 1] (3,6) -- ++ (4, 0);
		\draw [line width = 1] (3,5) -- ++ (4, 0);
		\draw [line width = 1] (3,4) -- ++ (4, 0);
		%decorations
		\draw(2,4.5) node{$F_1$};
		\draw (6,2.5) node{$F_2$};
	\end{tikzpicture}}
	\]
	Note that the code of $\sigma(T,S)$ has for associated tree $\sigma(t,s)$. 
	\item[Heredity] Suppose that there exists $r\in\NN^*$ such that for all couple $(k,l)$ with $k+l\leq r$, we have for any $t,s$ whose tree lengths are respectively $k$ and $l$:
	\[
	\sigma(\Code(t),\Code(s))=\Code(\sigma(t,s)).
	\]
	Let us prove that this property is true for the next rank. Let $t,s$ be two trees with right comb length $k$ and left comb length $l$, let $\sigma\in\bat(k,l)$ such that $k+l=r+1$. Note that if $k=0$ or $l=0$, the result is obvious by definition~\ref{defi:operations_codes}. There are three possible cases left, in those cases, we use the notations of the proof of lemma~\ref{lem:Indution_batc}:
	\begin{description} 
		\item[First case] $\sigma^{-1}(\{1\})=\{1\}$:
		Then: 
		\begin{align*}
			\Code(\sigma(t,s))&=\Code\left(\raisebox{-0.3\height}{\begin{tikzpicture}[x=0.25cm,y=0.25cm]
				\draw (0,0) -- (0,1);
				\draw (0,1) -- (-1,2);
				\draw (0,1) -- (1,2);
				\filldraw[color=black] (0,1) circle (0.15);
				\draw (-1,2) node[left] {$f_1$};
				\draw (1,2) node[right] {$\sigma'(t',s)$};
			\end{tikzpicture}}\right)\text{ where } t=\raisebox{-0.3\height}{\begin{tikzpicture}[x=0.25cm,y=0.25cm]
			\draw (0,0) -- (0,1);
			\draw (0,1) -- (-1,2);
			\draw (0,1) -- (1,2);
			\filldraw[color=black] (0,1) circle (0.15);
			\draw (-1,2) node[left] {$f_1$};
			\draw (1,2) node[right] {$t'$};
		\end{tikzpicture}}, \\
		&=\raisebox{-0.5\height}{\begin{tikzpicture}[x=1.2em,y=1.2em,baseline=2.5em]
		\filldraw [fill=lightgray!40] (4,1) rectangle ++ (3,3);
		\filldraw [fill=lightgray!40] (1,3) rectangle ++ (2,3);
		\filldraw [fill=lightgray] (3,4) rectangle ++ (4,2);
		\filldraw [fill=lightgray] (3,0) rectangle ++ (1,4);
		\draw[pattern=north west lines, pattern color=gray] (1,1) rectangle ++ (2,2);
		%vertical
		\draw [line width = 1] (1,0) -- ++ (0, 6);
		\draw [line width = 1] (3,0) -- ++ (0, 6);
		\draw [line width = 1] (4,0) -- ++ (0, 6);
		\draw [line width = 1] (7,0) -- ++ (0, 6);
		\draw [line width = 1] (2,0) -- ++ (0, 3);
		\draw [line width = 1] (6,4) -- ++ (0, 2);
		\draw [line width = 1] (6,1) -- ++ (0, -1);
		\draw [line width = 1] (5,1) -- ++ (0, -1);
		\draw [line width = 1] (5,4) -- ++ (0, 2);
		%horizontal
		\draw [line width = 1] (1,0) -- ++ (6, 0);
		\draw [line width = 1] (1,6) -- ++ (6, 0);
		\draw [line width = 1] (1,0) -- ++ (4, 0);
		\draw [line width = 1] (1,1) -- ++ (3, 0);
		\draw [line width = 1] (1,2) -- ++ (3, 0);	
		\draw [line width = 1] (1,3) -- ++ (3, 0);
		\draw [line width = 1] (3,6) -- ++ (4, 0);
		\draw [line width = 1] (3,5) -- ++ (4, 0);
		\draw [line width = 1] (3,4) -- ++ (4, 0);
		\draw [line width = 1] (3,3) -- ++ (1, 0);
		%decorations
		\draw(2,4.5) node{${F_1}$};
		\draw (5.5,2.5) node{$T'$};
\end{tikzpicture}}
 \text{ where } T'=\Code(\sigma'(t',s)).
		\end{align*}
		By definition of the action of $\sigma$ over a pair of forests codes:
		\begin{align*}
			\sigma(\Code(t),\Code(s))&=\raisebox{-0.5\height}{\begin{tikzpicture}[x=1.2em,y=1.2em,baseline=2.5em]
		\filldraw [fill=lightgray!40] (4,1) rectangle ++ (3,3);
		\filldraw [fill=lightgray!40] (1,3) rectangle ++ (2,3);
		\filldraw [fill=lightgray] (3,4) rectangle ++ (4,2);
		\filldraw [fill=lightgray] (3,1) rectangle ++ (1,3);
		\draw[pattern=north west lines, pattern color=gray] (1,1) rectangle ++ (2,2);
		%vertical
		\draw [line width = 1] (1,0) -- ++ (0, 6);
		\draw [line width = 1] (3,0) -- ++ (0, 6);
		\draw [line width = 1] (4,0) -- ++ (0, 6);
		\draw [line width = 1] (7,0) -- ++ (0, 6);
		\draw [line width = 1] (2,0) -- ++ (0, 3);
		\draw [line width = 1] (6,4) -- ++ (0, 2);
		\draw [line width = 1] (6,1) -- ++ (0, -1);
		\draw [line width = 1] (5,1) -- ++ (0, -1);
		\draw [line width = 1] (5,4) -- ++ (0, 2);
		%horizontal
		\draw [line width = 1] (1,0) -- ++ (6, 0);
		\draw [line width = 1] (1,6) -- ++ (6, 0);
		\draw [line width = 1] (1,0) -- ++ (4, 0);
		\draw [line width = 1] (1,1) -- ++ (3, 0);
		\draw [line width = 1] (1,2) -- ++ (3, 0);	
		\draw [line width = 1] (1,3) -- ++ (3, 0);
		\draw [line width = 1] (3,6) -- ++ (4, 0);
		\draw [line width = 1] (3,5) -- ++ (4, 0);
		\draw [line width = 1] (3,4) -- ++ (4, 0);
		\draw [line width = 1] (3,3) -- ++ (1, 0);
		%decorations
		\draw(2,4.5) node{${F_1}$};
		\draw (5.5,2.5) node{${'T}$};
\end{tikzpicture}} \text{ where }{'T}=\sigma'(\Code(t'),\Code(s)) \\
			&= \sigma(\Code(t), \Code(s)),
		\end{align*}
		Applying the induction hypothesis because $\RTL(t')+\LTL(s)\leq r, T'='T$ so we get the equality needed.
	\item[Second case] $\sigma^{-1}(\{1\})=\{k+1\}$ %, we denote $\sigma'$ the unique element of $\batc(k,l-1)$ such that:
%		\[
%		\forall n\in\IEM{1}{k+l}, \sigma(n)=\begin{cases}
%			k+1 &\text{ if } n=k+1, \\
%			\sigma'(n)+1 & \text{ if }n\leq k, \\
%			\sigma'(n-1)+1 & \text{otherwise.}
%		\end{cases}
%		\]
			is similar to the first one. So it is left to the reader.
		\item[Third case] $\sigma^{-1}(\{1\})=\{1,k+1\}$.
		Then:
		\begin{align*}
			\Code(\sigma(t,s))&= \Code\left(\raisebox{-0.3\height}{\begin{tikzpicture}[x=0.25cm, y=0.25cm]
				\draw (0,0) -- (0,3);
				\draw (0,1) -- (-2,3);
				\draw (0,1) -- (2,3);
				\draw (0,3) node[above]{$\sigma'(t',s')$};
				\draw (-2,3) node[left]{$f_1$};
				\draw (2,3) node[right]{$f_{k+1}$};	
		        \filldraw[color=black] (0,1) circle (0.15);
			\end{tikzpicture}}\right) \text{ where } s=\raisebox{-0.3\height}{\begin{tikzpicture}[x=0.25cm,y=0.25cm]
			\draw (0,0) -- (0,1);
			\draw (0,1) -- (1,2);
			\draw (0,1) -- (-1,2);
		    \filldraw[color=black] (0,1) circle (0.15);
			\draw (-1,2) node[left]{$s'$};
			\draw (1,2) node[right]{$f_{k+1}$};
			\end{tikzpicture}} \\
			&=\raisebox{-0.5\height}{\begin{tikzpicture}[x=1.2em,y=1.2em,baseline=2.5em]
		\filldraw [fill=lightgray!40] (4,5) rectangle ++ (3,3);
		\filldraw [fill=lightgray!40] (1,7) rectangle ++ (2,3);
		\filldraw [fill=lightgray!40] (8,2) rectangle ++ (2,3);
		\filldraw [fill=lightgray] (8,5) rectangle ++ (2,5);
		\filldraw [fill=lightgray] (7,2) rectangle ++ (1,8);
		\filldraw [fill=lightgray] (4,8) rectangle ++ (3,2);
		\filldraw [fill=lightgray] (3,2) rectangle ++ (1,8);
		\draw[pattern=north west lines, pattern color=gray] (1,2) rectangle ++ (2,5);
		%vertical
		\draw [line width = 1] (1,1) -- ++ (0, 9);
		\draw [line width = 1] (3,1) -- ++ (0, 9);
		\draw [line width = 1] (4,1) -- ++ (0, 9);
		\draw [line width = 1] (7,1) -- ++ (0, 9);
		\draw [line width = 1] (8,1) -- ++ (0, 9);
		\draw [line width = 1] (10,1) -- ++ (0, 9);
		\draw [line width = 1] (2,1) -- ++ (0, 6);
		\draw [line width = 1] (6,8) -- ++ (0, 2);
		\draw [line width = 1] (6,1) -- ++ (0, 4);
		\draw [line width = 1] (9,1) -- ++ (0, 1);
		\draw [line width = 1] (5,1) -- ++ (0, 4);
		\draw [line width = 1] (5,8) -- ++ (0, 2);
		\draw [line width = 1] (9,5) -- ++ (0, 5);
		%horizontal
		\draw [line width = 1] (1,3) -- ++ (7, 0);
		\draw [line width = 1] (1,4) -- ++ (7, 0);
		\draw [line width = 1] (1,10) -- ++ (9, 0);
		\draw [line width = 1] (1,2) -- ++ (9, 0);
		\draw [line width = 1] (1,1) -- ++ (9, 0);
		\draw [line width = 1] (1,4) -- ++ (3, 0);
		\draw [line width = 1] (1,5) -- ++ (3, 0);
		\draw [line width = 1] (1,6) -- ++ (3, 0);	
		\draw [line width = 1] (1,7) -- ++ (3, 0);
		\draw [line width = 1] (3,10) -- ++ (7, 0);
		\draw [line width = 1] (3,9) -- ++ (7, 0);
		\draw [line width = 1] (3,8) -- ++ (7, 0);
		\draw [line width = 1] (7,7) -- ++ (3, 0);
		\draw [line width = 1] (7,5) -- ++ (3, 0);
		\draw [line width = 1] (7,6) -- ++ (3, 0);
		%decorations
		\draw(2,8.5) node{${F_1}$};
		\draw (5.5,6.5) node{$T'$};
		\draw (9,3.5) node{$F_{k+1}$};
		\end{tikzpicture}} \text{ where } T'=\Code(\sigma'(t',s'))
		\end{align*}

		Applying the induction hypothesis to $t',s'$ and $\sigma'$ as $\RTL(t')+\LTL(s')\leq r$, we deduce $\Code(\sigma'(t',s'))=\sigma'(\Code(t'),\Code(s')).$ Then applying the construction case number $3$ in definition~\ref{defi:quasishuffle_code}, one has:
		\[
		\Code(\sigma(t,s))=\sigma(\Code(t),\Code(s)).
		\]
	\end{description}
	Hence by the induction principle, we have proved the theorem. \qedhere
\end{description}	
\end{proof}
Finally, the map $\Code$ is also linear hence:
\begin{Cor}\label{Cor:iso_codes_trees}
	The $\Code$ map is an isomorphism of tridendriform algebras into its image.
\end{Cor}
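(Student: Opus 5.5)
The plan is to split the statement into three parts: $\Code$ is injective, its linear extension intertwines the three products, and therefore its image is closed under the operations of definition~\ref{defi:operations_codes}, which then form a tridendriform structure transported from $\K\Sch$.

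\emph{Injectivity.} I would prove this first, using the factorisation $\Code=\Grid\circ\varphi\circ s$ of definition~\ref{defi:Code_map}:
\begin{itemize}
\item By proposition~\ref{prop:trees_to_levels}, $s$ is a section of $\pi$, hence injective.
\item $\varphi$ is injective on levelled trees. Given the chain $(C_0,\dots,C_h)$, the levels of the angle vertices $v_j$ are recovered as $l(v_j)=\min\{i \mid j\in C_i\}$. A levelled \Schw{} tree with $n$ angles is determined by the word of these angle levels: the root is the unique vertex of maximal level, and one splits recursively at its angles. This is the standard levelled-tree/word correspondence of \cite{Palacios_06}.
\item $\Grid$ is injective by construction, since it is just a drawing of the chain.
\end{itemize}
Extending linearly, $\Code:\K\Sch\to\K\GRID$ sends a basis to linearly independent vectors. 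It is therefore a linear isomorphism onto $\K\,\ima(\Code)$.

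\emph{Compatibility with the products.} Take $t,s\neq\unitree$ with $k$ the right comb length of $t$ and $l$ the left comb length of $s$. Before using the theorem I must check that these agree with the grid quantities, namely $k=\RTL(\Code(t))$ and $l=\LTL(\Code(s))$. This is exactly what examples~\ref{exam:right_comb_decomposition} and~\ref{exam:left_comb_decomposition} illustrate: the decompositions $\RCD{\cdot}$ and $\LCD{\cdot}$ recover the forests $F_i$ of the comb representations~\eqref{eq:tree_combs}. I would make this rigorous by induction on the height, peeling off the root level in each case.

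Granting this, the indexing sets in theorem~\ref{thm:produit} and definition~\ref{defi:operations_codes} coincide. For $\prec$, linearity and the theorem give
\[
\Code(t\prec s)=\sum_{\substack{\sigma\in\batc(k,l)\\ \sigma^{-1}(\{1\})=\{1\}}}\Code(\sigma(t,s))=\sum_{\substack{\sigma\in\batc(k,l)\\ \sigma^{-1}(\{1\})=\{1\}}}\sigma(\Code(t),\Code(s))=\Code(t)\prec\Code(s).
\]
The products $\succ$ and $\cdot$ are handled identically. The cases involving the unit $\unitree$ are dealt with by the convention that $\Code(\unitree)=\varepsilon$ acts as the unit, as in the initialization step of the theorem.

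\emph{Transport of the axioms.} Extending bilinearly, $\Code$ is a bijective morphism for all three operations from $\K\Sch$ onto $\K\,\ima(\Code)$. In particular $\K\,\ima(\Code)$ is closed under $\prec,\cdot,\succ$. The relations \eqref{eq:tri1}--\eqref{eq:tri7} hold in $\K\Sch$ because it is the free tridendriform algebra, so they hold on codes by applying $\Code$ to both sides. Hence $\Code$ is an isomorphism of tridendriform algebras onto its image.

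The main obstacle is the identification $k=\RTL(\Code(t))$ and $l=\LTL(\Code(s))$, together with checking that the comb decomposition of grids recovers exactly the forest codes $\Code(F_i)$. The theorem above implicitly assumes this, but the paper only illustrates it on examples.
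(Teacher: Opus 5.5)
Your proposal is correct and follows the paper's proof. In both, the core step is to apply the theorem $\Code(\sigma(t,s))=\sigma(\Code(t),\Code(s))$ termwise, using linearity, to each of the three sums defining $\prec$, $\cdot$ and $\succ$. The paper leaves several of your extra checks implicit, and you handle them soundly: injectivity of $\Code$ through the factorisation $\Grid\circ\varphi\circ s$, the identification $k=\RTL(\Code(t))$ and $l=\LTL(\Code(s))$ (which the theorem already presupposes for its right-hand side to make sense), and closure of the image together with transport of the axioms.
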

\begin{proof}
	To prove this corollary, one just needs to show that $\Code$ is a tridendriform morphism from $(\K\Sch,\prec,\succ,\cdot)$ into $(\ima(\Code),\prec, \succ, \cdot)$. Let $t,s\in\Sch$ such that $t$ has right comb length $k$ and $s$ has left comb length $l$. We show $\Code$ is a morphism for the middle product:
	\begin{align*}
		\Code(t\cdot s)&=\sum_{\substack{\sigma\in\batc(k,l) \\ \sigma^{-1}{\{1\}}=\{1,k+1\}}} \Code(\sigma(t,s)) \\
		&=\sum_{\substack{\sigma\in\batc(k,l) \\ \sigma^{-1}{\{1\}}=\{1,k+1\}}} \sigma(\Code(t),\Code(s)) \\
		&=\Code(t) \cdot\Code(s).
	\end{align*}
	The other cases are similar.
\end{proof}
Now we are sure the two descriptions are isomorphic. Before getting into coding, we describe what is a cut on our codes and its relationship with some \emph{packed words}. This allows us to check our computations of primitive elements.

\subsection{Short description of the coproduct}\label{sec:coproduct_tree_codes}

We now describe without much details how to interpret the coproduct operation of theorem~\ref{thm:coproduit} for tree codes. Let us remind what is a pruning:
\begin{defi}
	Let $t$ be a \Schw{} tree. A \emph{pruning} of $t$ is a non-empty choice in $\Eint(t)$, the set of internal edges of $t$, such that any path from a leaf of the tree to its root meets at most one chosen edge. A pruning is called a \emph{single cut} when it chooses a unique edge.
	Hence, a pruning is be composed of many single cuts. 
	Given a pruning $c$, it splits the tree into many connected components removing cut edges. The component with the root of the starting tree is denoted $R^c(t)$ and the other components are denoted $P^c_1(t),\dots, P^c_k(t)$.
	We also consider following two elements as prunings:
	\begin{itemize}
		\item the \emph{empty cut} giving $R^c(t)=t$ and $P^c(t)=\unitree$;
		\item the \emph{total cut} giving $R^c(t)=\unitree$ and $P^c(t)=t$.
	\end{itemize}    
\end{defi}
\begin{exam}\label{exam:cut_tree}
	For instance, symbolizing a cut edge by an horizontal line, one has:
	\[
	\raisebox{-0.3\height}{\begin{tikzpicture}[line cap=round,line join=round,>=triangle 45,x=0.2cm,y=0.2cm]
			\draw (0,0)--(0,1);
			% bords de l'arbre
			\draw (0,1)--(-4,5);
			\draw (0,1)--(4,5);
			% bras droit
			\draw (2,3) -- (0,5);
			\draw (3,4) -- (2,5);
			\draw (3,4) -- (3,5);
			\draw (3,4) -- (4,5);
			%bras gauche
			\draw (-3.5,4.5) -- (-3,5);
			\draw (-2.5,3.5) -- (-1,5);
			\draw (-1.5,4.5) -- (-2,5);
			%coupe
			\draw (-2.5,3) -- (-1.5,3);
			\draw (2,3.5) -- (3,3.5);
			% marquages des sommets
			\draw[color=black, fill=white] (-4,5) circle (0.15);
			\draw[color=black, fill=white] (-2,5) circle (0.15);
			\draw[color=black, fill=white] (0,5) circle (0.15);
			\draw[color=black, fill=white] (2,5) circle (0.15);
			\draw[color=black, fill=white] (3,5) circle (0.15);
			\draw[color=black, fill=white] (4,5) circle (0.15);
			\draw[color=black, fill=white] (-3,5) circle (0.15);
			\draw[color=black, fill=white] (-1,5) circle (0.15);
			\filldraw (0,1) circle(0.15);
			\filldraw (-3.5,4.5) circle(0.15);
			\filldraw (-2.5,3.5) circle(0.15);
			\filldraw (-1.5,4.5) circle(0.15);
			\filldraw (3,4) circle(0.15);
			\filldraw (2,3) circle(0.15);
	\end{tikzpicture}} \longrightarrow R^c(t)= \balaisd{} \text{ and } P^c_1(t)=\YY{}, P^c_2(t)=\balais{}.
	\]
\end{exam}

In other words, a pruning consists of withdrawing an interval of angles such that both adjacent angles (if they exist) are still living when they all vanished. Moreover, the withdrawn intervals should not overlap to get an admissible cut.
Translating this in term of tree codes, doing a pruning of a tree code $T= \Code(t)$ where $t$ has $l$ angles, is choosing a subset of $\IEM{1}{l}$ where for each interval $I$ of this subset, when the tree code restricted to the columns of $I$ is for the first time a line of white squares, the two adjacent columns $\min(I)-1$ and $\max(I)+1$ (if they exist) have a black square on this line.

\begin{exam}
	Consider the tree and its cut in example~\ref{exam:cut_tree}, its tree code is given below and the intervals associated to the cut are in evidence. Then:
	\[
	\raisebox{-0.5\height}{\begin{tikzpicture}[x=1em,y=1em]
\filldraw [fill=lightgray] (0,6) rectangle ++ (1,1);
\filldraw [fill=lightgray] (1,6) rectangle ++ (1,1);
\filldraw [fill=lightgray] (2,6) rectangle ++ (1,1);
\filldraw [fill=lightgray] (3,6) rectangle ++ (1,1);
\filldraw [fill=lightgray] (4,6) rectangle ++ (1,1);
\filldraw [fill=lightgray] (5,6) rectangle ++ (1,1);
\filldraw [fill=lightgray] (6,6) rectangle ++ (1,1);
\filldraw [fill=lightgray] (1,5) rectangle ++ (1,1);
\filldraw [fill=lightgray] (2,5) rectangle ++ (1,1);
\filldraw [fill=lightgray] (3,5) rectangle ++ (1,1);
\filldraw [fill=lightgray] (4,5) rectangle ++ (1,1);
\filldraw [fill=lightgray] (5,5) rectangle ++ (1,1);
\filldraw [fill=lightgray] (6,5) rectangle ++ (1,1);
\filldraw [fill=lightgray] (1,4) rectangle ++ (1,1);
\filldraw [fill=lightgray] (3,4) rectangle ++ (1,1);
\filldraw [fill=lightgray] (4,4) rectangle ++ (1,1);
\filldraw [fill=lightgray] (5,4) rectangle ++ (1,1);
\filldraw [fill=lightgray] (6,4) rectangle ++ (1,1);
\filldraw [fill=lightgray] (3,3) rectangle ++ (1,1);
\filldraw [fill=lightgray] (4,3) rectangle ++ (1,1);
\filldraw [fill=lightgray] (5,3) rectangle ++ (1,1);
\filldraw [fill=lightgray] (6,3) rectangle ++ (1,1);
\filldraw [fill=lightgray] (3,2) rectangle ++ (1,1);
\filldraw [fill=lightgray] (4,2) rectangle ++ (1,1);
\filldraw [fill=lightgray] (3,1) rectangle ++ (1,1);
\draw [line width = 1] (0,0) -- ++ (7, 0);
\draw [line width = 1] (0,1) -- ++ (7, 0);
\draw [line width = 1] (0,2) -- ++ (7, 0);
\draw [line width = 1] (0,3) -- ++ (7, 0);
\draw [line width = 1] (0,4) -- ++ (7, 0);
\draw [line width = 1] (0,5) -- ++ (7, 0);
\draw [line width = 1] (0,6) -- ++ (7, 0);
\draw [line width = 1] (0,7) -- ++ (7, 0);
\draw [line width = 1] (0,0) -- ++ (0, 7);
\draw [line width = 1] (1,0) -- ++ (0, 7);
\draw [line width = 1] (2,0) -- ++ (0, 7);
\draw [line width = 1, double] (3,0) -- ++ (0, 7);
\draw [line width = 1] (4,0) -- ++ (0, 7);
\draw [line width = 1, double] (5,0) -- ++ (0, 7);
\draw [line width = 1] (6,0) -- ++ (0, 7);
\draw [line width = 1] (7,0) -- ++ (0, 7);
\end{tikzpicture}}\longleftrightarrow  
	\raisebox{-0.3\height}{\begin{tikzpicture}[line cap=round,line join=round,>=triangle 45,x=0.2cm,y=0.2cm]
		\draw (0,0)--(0,1);
		% bords de l'arbre
		\draw (0,1)--(-4,5);
		\draw (0,1)--(4,5);
		% bras droit
		\draw (2,3) -- (0,5);
		\draw (3,4) -- (2,5);
		\draw (3,4) -- (3,5);
		\draw (3,4) -- (4,5);
		%bras gauche
		\draw (-3.5,4.5) -- (-3,5);
		\draw (-2.5,3.5) -- (-1,5);
		\draw (-1.5,4.5) -- (-2,5);
		%coupe
		\draw (-2.5,3) -- (-1.5,3);
		\draw (2,3.5) -- (3,3.5);
		% marquages des sommets
		\draw[color=black, fill=white] (-4,5) circle (0.15);
		\draw[color=black, fill=white] (-2,5) circle (0.15);
		\draw[color=black, fill=white] (0,5) circle (0.15);
		\draw[color=black, fill=white] (2,5) circle (0.15);
		\draw[color=black, fill=white] (3,5) circle (0.15);
		\draw[color=black, fill=white] (4,5) circle (0.15);
		\draw[color=black, fill=white] (-3,5) circle (0.15);
		\draw[color=black, fill=white] (-1,5) circle (0.15);
		\filldraw (0,1) circle(0.15);
		\filldraw (-3.5,4.5) circle(0.15);
		\filldraw (-2.5,3.5) circle(0.15);
		\filldraw (-1.5,4.5) circle(0.15);
		\filldraw (3,4) circle(0.15);
		\filldraw (2,3) circle(0.15);
		\end{tikzpicture}}, R^c(t) \leftrightarrow \raisebox{-0.5\height}{\begin{tikzpicture}[x=1em,y=1em]
\filldraw [fill=lightgray] (0,2) rectangle ++ (1,1);
\filldraw [fill=lightgray] (1,2) rectangle ++ (1,1);
\filldraw [fill=lightgray] (0,1) rectangle ++ (1,1);
\draw [line width = 1] (0,0) -- ++ (2, 0);
\draw [line width = 1] (0,1) -- ++ (2, 0);
\draw [line width = 1] (0,2) -- ++ (2, 0);
\draw [line width = 1] (0,3) -- ++ (2, 0);
\draw [line width = 1] (0,0) -- ++ (0, 3);
\draw [line width = 1] (1,0) -- ++ (0, 3);
\draw [line width = 1] (2,0) -- ++ (0, 3);
\end{tikzpicture}}, P_1^c(t) \leftrightarrow \raisebox{-0.5\height}{\begin{tikzpicture}[x=1em,y=1em]
\filldraw [fill=lightgray] (0,3) rectangle ++ (1,1);
\filldraw [fill=lightgray] (1,3) rectangle ++ (1,1);
\filldraw [fill=lightgray] (2,3) rectangle ++ (1,1);
\filldraw [fill=lightgray] (1,2) rectangle ++ (1,1);
\filldraw [fill=lightgray] (2,2) rectangle ++ (1,1);
\filldraw [fill=lightgray] (1,1) rectangle ++ (1,1);
\draw [line width = 1] (0,0) -- ++ (3, 0);
\draw [line width = 1] (0,1) -- ++ (3, 0);
\draw [line width = 1] (0,2) -- ++ (3, 0);
\draw [line width = 1] (0,3) -- ++ (3, 0);
\draw [line width = 1] (0,4) -- ++ (3, 0);
\draw [line width = 1] (0,0) -- ++ (0, 4);
\draw [line width = 1] (1,0) -- ++ (0, 4);
\draw [line width = 1] (2,0) -- ++ (0, 4);
\draw [line width = 1] (3,0) -- ++ (0, 4);
\end{tikzpicture}}
, P^c_2(t)\leftrightarrow \raisebox{-0.5\height}{\begin{tikzpicture}[x=1em,y=1em]
\filldraw [fill=lightgray] (0,1) rectangle ++ (1,1);
\filldraw [fill=lightgray] (1,1) rectangle ++ (1,1);
\draw [line width = 1] (0,0) -- ++ (2, 0);
\draw [line width = 1] (0,1) -- ++ (2, 0);
\draw [line width = 1] (0,2) -- ++ (2, 0);
\draw [line width = 1] (0,0) -- ++ (0, 2);
\draw [line width = 1] (1,0) -- ++ (0, 2);
\draw [line width = 1] (2,0) -- ++ (0, 2);
\end{tikzpicture}}
	\]
\end{exam} 

\subsection{Links with packed words}

\subsubsection{Description of left priority packed words}

\begin{defi}
	We define $\LPPW$ the subset of the set of packed words $\PW$ called \emph{Left Priority Packed Word}. This set is generated by the empty packed word, denoted $\ew$, with a construction rule $c$ of any arity greater than $2$. It is defined for any $w_1,\dots, w_k\in\LPPW, k\geq 2$ and putting $N=\sum_{i=1}^{k} \max(w_i)+1$ by:
	\[
	c(w_1,\dots,w_k) = w_1~N~ \left(w_2 + \max(w_1)\right)~N \dots N~ \left(w_k+\sum_{i=1}^{k-1} \max(w_i)\right).
	\]
\end{defi}
	\begin{exam}
		Consider $w_1=132$ and $w_2=211$, hence $c(w_1,w_2)=1326544$.
	\end{exam}
	\begin{defi}
	    Let $t\in\Schl, n=h(t)$ and consider $(a_i)_{i\in\IEM{1}{k}}$ the total ordering of $A(t)$. We define a map $\tilde{W}:\Schl \rightarrow \PW$ mapping $t$ to a packed word $w=w_1\dots w_k$ where $w_i$ is $n-l(v_i)$ where $v_i$ is the vertex of the angle $a_i.$ We also define $W=\tilde{W}\circ s.$
	\end{defi}
\begin{Rq}
	The map $\tilde{W}$ is the \emph{snake walk} of a levelled \Schw{} tree.
\end{Rq}

\begin{exam}
	For instance:
	\[
	\tilde{W}\left(\raisebox{-0.5\height}{\begin{tikzpicture}[x=0.3cm, y=0.3cm]
			%structure de l'arbre
			\draw (0,0) -- (0,1);
			\draw (0,1) -- (3,4);
			\draw (0,1) -- (-3,4);
			\draw (-2.5,3.5) -- (-2,4);
			\draw (1,2) -- (-1,4);
			\draw (1,2) -- (1,4);
			\draw (-0.5,3.5) -- (0,4);	
			%les cercles
			\filldraw[color=black,fill=white] (3,4) circle (0.15);
			\filldraw[color=black,fill=white] (1,4) circle (0.15);
			\filldraw[color=black,fill=white] (0,4) circle (0.15);
			\filldraw[color=black,fill=white] (-1,4) circle (0.15);
			\filldraw[color=black,fill=white] (-2,4) circle (0.15);
			\filldraw[color=black,fill=white] (-3,4) circle (0.15);
			\filldraw (0,1) circle (0.15);
			\filldraw (-2.5,3.5) circle (0.15);
			\filldraw (1,2) circle (0.15);
			\filldraw (-0.5,3.5) circle (0.15);
			%niveaux
			\draw[dashed] (-3,1)--(3,1);
			\draw[dashed] (-3,2)--(3,2);
			\draw[dashed] (-3,3.5)--(3,3.5);
	\end{tikzpicture}}\right) = 13122 \text{ and }
W\left(\raisebox{-0.5\height}{\begin{tikzpicture}[x=0.3cm, y=0.3cm]
		%structure de l'arbre
		\draw (0,0) -- (0,1);
		\draw (0,1) -- (3,4);
		\draw (0,1) -- (-2.5,3.5);
		\draw (-2.5,3.5) -- (-2.5,4);
		\draw (-2.5,4) -- (-2,5);
		\draw (-2.5,4) -- (-3,5);
		\draw (1,2) -- (-1,4);
		\draw (1,2) -- (1,4);
		\draw (-0.5,3.5) -- (0,4);
		%last level
		\draw (0,4)--(0,5);
		\draw (-1,4)--(-1,5);
		\draw (1,4)--(1,5);
		\draw (3,4)--(3,5);
		%les cercles
		\filldraw[color=black,fill=white] (3,5) circle (0.15);
		\filldraw[color=black,fill=white] (1,5) circle (0.15);
		\filldraw[color=black,fill=white] (0,5) circle (0.15);
		\filldraw[color=black,fill=white] (-1,5) circle (0.15);
		\filldraw[color=black,fill=white] (-2,5) circle (0.15);
		\filldraw[color=black,fill=white] (-3,5) circle (0.15);
		\filldraw (0,1) circle (0.15);
		\filldraw (-2.5,4.) circle (0.15);
		\filldraw (1,2) circle (0.15);
		\filldraw (-0.5,3.5) circle (0.15);
\end{tikzpicture}}\right) = 14233.
	\]
\end{exam}

It turns out \Schw{} trees are in bijection with $\LPPW$:
\begin{prop}
	The languages $\left(\unitree, \vee\right)$ and $\left(\ew, c\right)$ are isomorphic. Hence,  $W$ is a bijection between $\LPPW$ and	 $\Sch$ .
\end{prop}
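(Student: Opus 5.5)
The plan is to show that $W$ intertwines the two constructors, that is, $W(\unitree)=\ew$ and
\[
W\bigl(t^{(1)}\vee\cdots\vee t^{(k)}\bigr)=c\bigl(W(t^{(1)}),\dots,W(t^{(k)})\bigr)\qquad (k\geq 2).
\]
Then I will show that both $(\unitree,\vee)$ and $(\ew,c)$ are \emph{free}: every element other than the generator decomposes in exactly one way as an image of the constructor. A map between two free structures of the same signature that sends generator to generator and constructor to constructor is a bijection, by induction on the number of constructor applications. Applied to $W$, this gives the proposition.

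\textbf{Step 1: compatibility with the constructors.} I argue by induction on $h(t)$, using the explicit recursive construction of $s(t)$ from the proof of proposition~\ref{prop:trees_to_levels}. The angles of $t=t^{(1)}\vee\cdots\vee t^{(k)}$ read from left to right are:
\begin{itemize}
\item the angles of $t^{(1)}$, then a root angle, then the angles of $t^{(2)}$, then a root angle, and so on, ending with the angles of $t^{(k)}$;
\item there are exactly $k-1$ root angles, placed between consecutive subtrees.
\end{itemize}
In $s(t)$ the subtrees occupy disjoint consecutive bands of levels, stacked according to the rule ``strictly greater than everything to the right'', and the root is extremal. Consequently:
\begin{itemize}
\item the letters coming from $t^{(i)}$ are those of $W(t^{(i)})$, translated by a constant that depends only on the sizes of the bands of the other subtrees, namely $\sum_{j<i}\max W(t^{(j)})$;
\item all $k-1$ root angles carry the same extremal letter $N=\sum_i \max W(t^{(i)})+1$.
\end{itemize}
This is exactly the shape of $c(W(t^{(1)}),\dots,W(t^{(k)}))$. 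The work here is bookkeeping: I must check that the stacking order of the bands in $s$ matches the order of the translations in $c$, in the orientation used in the examples, where for instance the root receives the maximal letter $4$ in $14233$. I expect this orientation and shift check to be the main obstacle. I would first verify it on the two displayed examples and then write the general level computation.

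\textbf{Step 2: freeness of $(\ew,c)$.} Let $w\in\LPPW$ with $w\neq\ew$. Then $w=c(w_1,\dots,w_k)$ for some $k\geq2$.
\begin{itemize}
\item The letter $N$ is strictly larger than every letter of every translated block. Hence $N=\max(w)$, and its occurrences are precisely the $k-1$ separators. So $k$ is determined as one plus the number of occurrences of $\max(w)$, and the translated blocks are the maximal factors of $w$ between these occurrences; they may be empty.
\item Each block is packed, so the translated blocks use consecutive, disjoint ranges of values. The translation applied to block $i$ is therefore recovered as the running sum of the maxima of the blocks before it, and subtracting it gives back $w_i$.
\end{itemize}
Thus the decomposition is unique. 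The analogous unique decomposition of a Schroeder tree $t\neq\unitree$ as $t^{(1)}\vee\cdots\vee t^{(k)}$, obtained by removing the root, is immediate.

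\textbf{Step 3: conclusion.} Injectivity of $W$ follows by induction on the number of constructor applications. If $W(t)=W(t')$, the unique decomposition of Step 2 forces the root arities to agree and $W(t^{(i)})=W(t'^{(i)})$ for every $i$, so $t=t'$ by induction. Surjectivity onto $\LPPW$ follows the same way. Every element of $\LPPW$ is generated from $\ew$ by $c$. By induction each $w_i$ equals $W(t^{(i)})$ for some tree $t^{(i)}$. Then $c(w_1,\dots,w_k)=W(t^{(1)}\vee\cdots\vee t^{(k)})$ by Step 1.
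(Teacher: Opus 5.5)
Your proposal is correct and follows the paper's own argument, which consists only of the remark that $\vee$ and $c$ are both injective construction rules. Your Step~2 is exactly that injectivity claim, including reading off the arity $k$ from the number of occurrences of $\max(w)$, and your Step~1 makes explicit the intertwining $W(t^{(1)}\vee\cdots\vee t^{(k)})=c(W(t^{(1)}),\dots,W(t^{(k)}))$ that the paper leaves implicit.

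Calibrating the conventions against the examples, as you plan, is the right move, since the paper's literal formula for $\tilde W$ is off by one. The level computation goes through with the root at level $1$ and $w_i=h(t)+1-l(v_i)$ in $s(t)$. It gives the translation $\sum_{j<i}\max W(t^{(j)})$ for the $i$-th subtree and the letter $N$ on the root angles, as you claim.
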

\begin{proof}
	One just needs to see that $\vee$ and $c$ are both injective construction rules.
\end{proof}

\subsubsection{Enumeration of prunings in $\LPPW$} \label{sec:cuts}

%\begin{defi}
%	We define \emph{prunings} of an element $w\in\LPPW$ denoted $P(w)$ using induction over $\LPPW$:
%	\begin{align*}
%		P(\emptyset)=\{\emptyset\}, && P(c(w_1,\dots,w_k))=P(w_1)\times P(w_2)\times \dots \times P(w_k) \cup \{ \tot \}. 
%	\end{align*}
%	where $\tot$ symbolizes the total cut.  
%\end{defi}
%This corresponds to the natural prunings for trees, however this definition is not really kind to use. To get a nicer definition, we enumerate single cuts of left priority packed word.
\begin{defi}
	A \emph{single cut} of $w\in\LPPW$ is a factor of $w$ (i.e a sequence of consecutive letters of $w$) where adjacent letters of this factor are strictly greater than all the letters of this factor.
\end{defi}

Noticing that a single cut deletes an interval of angles, we deduce a procedure to enumerate every single cuts of a tree $t$ with $k$ angles given $W(t)=w\in\LPPW$:
\begin{itemize}
	\item for $i\in\IEM1k$, get the smaller factor $f$ of $w$ containing $w_i$;
	\item from this factor, get the angles associated to the position of the original letters in $w$;
	\item this factor $f$ corresponds to the \emph{unique} single cut of $t$ withdrawing this list of angles.
\end{itemize}

\begin{Rq}
    We do not prove rigorously, the correspondence between the single cuts for left priority packed word and the single cuts of trees for reasons of brevity. 
\end{Rq}

\begin{exam}
	Consider the Schroeder tree of example~\ref{exam:cut_tree} whose image by $W$ is $1326544$.
%	\[
%	C_0 = \emptyset, \ C_1 = \{4\}, \ C_2 = \{4,5\}, \ C_3 = \{4,5,6,7\},\  C_4 = \{2,4,5,6,7\}, \ C_5 = \IEM{2}7, \ C_6=\IEM{1}7
%	\]
%	with complements 
%	\[
%	C'_0 = \IEM17, \ C'_1 = \IEM13\cup\IEM57, \ C'_2 = \IEM13\cup \IEM67,\ C'_3 = \IEM13, C'_4= \{1,3\}\,\  C'_5 = \{1\},\ C_6=\emptyset
%	\]
	Hence, applying the procedure detailed above, one gets $6$ simple cuts according to the next table
	\[
	\begin{tblr}{colspec={cccccccc}, vlines,
		vline{8}={white},
		cell{4}{7}={r=1,c=2}{c}}
		\text{Angle} & 1 & 2 & 3 & 4 & 5 & 6 & 7 \\
		\text{Factors} & 1 & 132 & 2 &1326544& 544& 44 &44 \\
		\text{Cut angles} & \{1\} & \{1,2,3\} & \{3\}& \IEM{1}{7} & \{5,6,7\}& \{6,7\} & \{6,7\} \\
		\text{Single cut} &
		\raisebox{-0.5\height}{\begin{tikzpicture}[line cap=round,line join=round,>=triangle 45,x=0.175cm,y=0.175cm]
				%cut of 1
			\draw (0,0)--(0,1);
			% bords de l'arbre
			\draw (0,1)--(-4,5);
			\draw (0,1)--(4,5);
			% bras droit
			\draw (2,3) -- (0,5);
			\draw (3,4) -- (2,5);
			\draw (3,4) -- (3,5);
			\draw (3,4) -- (4,5);
			%bras gauche
			\draw (-3.5,4.5) -- (-3,5);
			\draw (-2.5,3.5) -- (-1,5);
			\draw (-1.5,4.5) -- (-2,5);
			%coupe
			\draw (-3.5,4) -- (-2.5,4);
			% marquages des sommets
			\draw[color=black, fill=white] (-4,5) circle (0.15);
			\draw[color=black, fill=white] (-2,5) circle (0.15);
			\draw[color=black, fill=white] (0,5) circle (0.15);
			\draw[color=black, fill=white] (2,5) circle (0.15);
			\draw[color=black, fill=white] (3,5) circle (0.15);
			\draw[color=black, fill=white] (4,5) circle (0.15);
			\draw[color=black, fill=white] (-3,5) circle (0.15);
			\draw[color=black, fill=white] (-1,5) circle (0.15);
			\filldraw (0,1) circle(0.15);
			\filldraw (-3.5,4.5) circle(0.15);
			\filldraw (-2.5,3.5) circle(0.15);
			\filldraw (-1.5,4.5) circle(0.15);
			\filldraw (3,4) circle(0.15);
			\filldraw (2,3) circle(0.15);
		\end{tikzpicture}}
		&
		\raisebox{-0.5\height}{\begin{tikzpicture}[line cap=round,line join=round,>=triangle 45,x=0.175cm,y=0.175cm]
				%cut of 123
			\draw (0,0)--(0,1);
			% bords de l'arbre
			\draw (0,1)--(-4,5);
			\draw (0,1)--(4,5);
			% bras droit
			\draw (2,3) -- (0,5);
			\draw (3,4) -- (2,5);
			\draw (3,4) -- (3,5);
			\draw (3,4) -- (4,5);
			%bras gauche
			\draw (-3.5,4.5) -- (-3,5);
			\draw (-2.5,3.5) -- (-1,5);
			\draw (-1.5,4.5) -- (-2,5);
			%coupe
			\draw (-2.1,2.5) -- (-1.1,2.5);
			% marquages des sommets
			\draw[color=black, fill=white] (-4,5) circle (0.15);
			\draw[color=black, fill=white] (-2,5) circle (0.15);
			\draw[color=black, fill=white] (0,5) circle (0.15);
			\draw[color=black, fill=white] (2,5) circle (0.15);
			\draw[color=black, fill=white] (3,5) circle (0.15);
			\draw[color=black, fill=white] (4,5) circle (0.15);
			\draw[color=black, fill=white] (-3,5) circle (0.15);
			\draw[color=black, fill=white] (-1,5) circle (0.15);
			\filldraw (0,1) circle(0.15);
			\filldraw (-3.5,4.5) circle(0.15);
			\filldraw (-2.5,3.5) circle(0.15);
			\filldraw (-1.5,4.5) circle(0.15);
			\filldraw (3,4) circle(0.15);
			\filldraw (2,3) circle(0.15);
		\end{tikzpicture}}	&  
	\raisebox{-0.5\height}{\begin{tikzpicture}[line cap=round,line join=round,>=triangle 45,x=0.175cm,y=0.175cm]
			%cut of 3
	\draw (0,0)--(0,1);
	% bords de l'arbre
	\draw (0,1)--(-4,5);
	\draw (0,1)--(4,5);
	% bras droit
	\draw (2,3) -- (0,5);
	\draw (3,4) -- (2,5);
	\draw (3,4) -- (3,5);
	\draw (3,4) -- (4,5);
	%bras gauche
	\draw (-3.5,4.5) -- (-3,5);
	\draw (-2.5,3.5) -- (-1,5);
	\draw (-1.5,4.5) -- (-2,5);
	%coupe
	\draw (-2.5,4) -- (-1.5,4);
	% marquages des sommets
	\draw[color=black, fill=white] (-4,5) circle (0.15);
	\draw[color=black, fill=white] (-2,5) circle (0.15);
	\draw[color=black, fill=white] (0,5) circle (0.15);
	\draw[color=black, fill=white] (2,5) circle (0.15);
	\draw[color=black, fill=white] (3,5) circle (0.15);
	\draw[color=black, fill=white] (4,5) circle (0.15);
	\draw[color=black, fill=white] (-3,5) circle (0.15);
	\draw[color=black, fill=white] (-1,5) circle (0.15);
	\filldraw (0,1) circle(0.15);
	\filldraw (-3.5,4.5) circle(0.15);
	\filldraw (-2.5,3.5) circle(0.15);
	\filldraw (-1.5,4.5) circle(0.15);
	\filldraw (3,4) circle(0.15);
	\filldraw (2,3) circle(0.15);
\end{tikzpicture}}
			&
			\raisebox{-0.5\height}{\begin{tikzpicture}[line cap=round,line join=round,>=triangle 45,x=0.175cm,y=0.175cm]
					%cut of all
					\draw (0,0)--(0,1);
					% bords de l'arbre
					\draw (0,1)--(-4,5);
					\draw (0,1)--(4,5);
					% bras droit
					\draw (2,3) -- (0,5);
					\draw (3,4) -- (2,5);
					\draw (3,4) -- (3,5);
					\draw (3,4) -- (4,5);
					%bras gauche
					\draw (-3.5,4.5) -- (-3,5);
					\draw (-2.5,3.5) -- (-1,5);
					\draw (-1.5,4.5) -- (-2,5);
					%coupe
					\draw (-0.5,0.5) -- (0.5,0.5);
					% marquages des sommets
					\draw[color=black, fill=white] (-4,5) circle (0.15);
					\draw[color=black, fill=white] (-2,5) circle (0.15);
					\draw[color=black, fill=white] (0,5) circle (0.15);
					\draw[color=black, fill=white] (2,5) circle (0.15);
					\draw[color=black, fill=white] (3,5) circle (0.15);
					\draw[color=black, fill=white] (4,5) circle (0.15);
					\draw[color=black, fill=white] (-3,5) circle (0.15);
					\draw[color=black, fill=white] (-1,5) circle (0.15);
					\filldraw (0,1) circle(0.15);
					\filldraw (-3.5,4.5) circle(0.15);
					\filldraw (-2.5,3.5) circle(0.15);
					\filldraw (-1.5,4.5) circle(0.15);
					\filldraw (3,4) circle(0.15);
					\filldraw (2,3) circle(0.15);
			\end{tikzpicture}}	&
		\raisebox{-0.5\height}{\begin{tikzpicture}[line cap=round,line join=round,>=triangle 45,x=0.175cm,y=0.175cm]
				%cut of 567
				\draw (0,0)--(0,1);
				% bords de l'arbre
				\draw (0,1)--(-4,5);
				\draw (0,1)--(4,5);
				% bras droit
				\draw (2,3) -- (0,5);
				\draw (3,4) -- (2,5);
				\draw (3,4) -- (3,5);
				\draw (3,4) -- (4,5);
				%bras gauche
				\draw (-3.5,4.5) -- (-3,5);
				\draw (-2.5,3.5) -- (-1,5);
				\draw (-1.5,4.5) -- (-2,5);
				%coupe
				\draw (0.5,2) -- (1.5,2);
				% marquages des sommets
				\draw[color=black, fill=white] (-4,5) circle (0.15);
				\draw[color=black, fill=white] (-2,5) circle (0.15);
				\draw[color=black, fill=white] (0,5) circle (0.15);
				\draw[color=black, fill=white] (2,5) circle (0.15);
				\draw[color=black, fill=white] (3,5) circle (0.15);
				\draw[color=black, fill=white] (4,5) circle (0.15);
				\draw[color=black, fill=white] (-3,5) circle (0.15);
				\draw[color=black, fill=white] (-1,5) circle (0.15);
				\filldraw (0,1) circle(0.15);
				\filldraw (-3.5,4.5) circle(0.15);
				\filldraw (-2.5,3.5) circle(0.15);
				\filldraw (-1.5,4.5) circle(0.15);
				\filldraw (3,4) circle(0.15);
				\filldraw (2,3) circle(0.15);
		\end{tikzpicture}} 
			&
			\raisebox{-0.5\height}{\begin{tikzpicture}[line cap=round,line join=round,>=triangle 45,x=0.175cm,y=0.175cm]
					%cut of 67
					\draw (0,0)--(0,1);
					% bords de l'arbre
					\draw (0,1)--(-4,5);
					\draw (0,1)--(4,5);
					% bras droit
					\draw (2,3) -- (0,5);
					\draw (3,4) -- (2,5);
					\draw (3,4) -- (3,5);
					\draw (3,4) -- (4,5);
					%bras gauche
					\draw (-3.5,4.5) -- (-3,5);
					\draw (-2.5,3.5) -- (-1,5);
					\draw (-1.5,4.5) -- (-2,5);
					%coupe
					\draw (2,3.5) -- (3,3.5);
					% marquages des sommets
					\draw[color=black, fill=white] (-4,5) circle (0.15);
					\draw[color=black, fill=white] (-2,5) circle (0.15);
					\draw[color=black, fill=white] (0,5) circle (0.15);
					\draw[color=black, fill=white] (2,5) circle (0.15);
					\draw[color=black, fill=white] (3,5) circle (0.15);
					\draw[color=black, fill=white] (4,5) circle (0.15);
					\draw[color=black, fill=white] (-3,5) circle (0.15);
					\draw[color=black, fill=white] (-1,5) circle (0.15);
					\filldraw (0,1) circle(0.15);
					\filldraw (-3.5,4.5) circle(0.15);
					\filldraw (-2.5,3.5) circle(0.15);
					\filldraw (-1.5,4.5) circle(0.15);
					\filldraw (3,4) circle(0.15);
					\filldraw (2,3) circle(0.15);
			\end{tikzpicture}}
	\end{tblr}
	\]
\end{exam}

It defines properly how to do a pruning of an element of $\LPPW$ as a pruning is a set of non-intersecting single cuts. 

\section{Algorithms}\label{sec:algorithms}

In this section, we give a description of the algorithms we used to implement the problem in {\ttfamily C++} to get a faster execution code than one done with {\ttfamily Python} or {\ttfamily \Sage{}}. This code is available online on GitHUb~\cite{github}.  

\subsection{Quasi-shuffle enumeration}

In the code available on GitHub~\cite{github}, one can find a procedure \lstinline[language=C++]|Quasishuffle| that returns the list of all quasi-shuffle of a given type $(n,m)$. For this, we use the result from lemma~\ref{lem:Indution_batc} to get an inductive algorithm computing all quasi-shuffles of a given type.

\subsection{The class \Schw{} tree} \label{ss:schroeder_tree}

In an algorithmic context, a \Schw{} tree $t$ will be coded as an object \texttt{t} with data
\begin{itemize}
    \item an integer $\texttt{t.}n$ for the number of angles of $t$;
    \item an integer $\texttt{t.}h$ for the height of $t$;
    \item an array $\texttt{t.c}$, where $\texttt{t.c}[i]$ is for the set $C_i$ of the init chain $\varphi_n(t)$.
\end{itemize}

This defines the class of \Schw{} trees endowed with basics functions needed to perform any computation involving the product and coproduct of the free tridendriform algebra of \Schw{} trees.

 First, algorithm~\ref{A:GetForest} {\ttfamily GetForest} enables us to get back the code of a forest given its angle's list comparing a line with the one above to see if the forest code is finished.

\begin{algorithm}[h]
	\small
	\caption{\small Return the forest of \texttt{t} rooted at level $l$ and obtained with angles in $\IEM{a_l}{a_r}$.}
	\label{A:GetForest}
	\begin{algorithmic}[1]
		\Function{GetForest}{\texttt{t}, $l$, $a_l
			$, $a_r$}
		\State $\tt{c} \gets$ array of sets with capacity $\texttt{t.}h + 1 - l$
		\State $h \gets -1$
		\State $A \gets \IEM{a_l}{a_r}$
		\For {$j$ from $l$ to $\texttt{t.}h$}
		\State $X\gets -(a_l-1) + (\texttt{t.c}[j]\cap A)$
		\If{$h = -1$ or $X \not= \texttt{c}[h]$}
		\State $h \gets h + 1$
		\State $\texttt{c}[h] \gets X$
		\EndIf
		\EndFor
		\State \textbf{return} Forest \texttt{f} with $\texttt{f.}n = \text{card}(A)$, $\texttt{f.}h= h$ and $\texttt{f.c} = \texttt{c}$.
		\EndFunction
	\end{algorithmic}
\end{algorithm}
Second, algorithms~\ref{A:LeftComb} and~\ref{A:RightComb} give the respective left/right comb structures of a tree from its code. They are reading the comb forests from the root of the tree to its leftmost/rightmost leaf.  Hence, those algorithms give the same output as in examples~\ref{exam:left_comb_decomposition} and~\ref{exam:right_comb_decomposition} without the extra $\unitree$.

\begin{algorithm}[h]
\small
\caption{\small Return the left comb decomposition of the tree \texttt{t}}\label{A:LeftComb}
\begin{algorithmic}[1]
\Function{LeftCombDecomposition}{\texttt{t}}
\State $L \gets $ empty list
\State $A \gets \IEM{1}{\texttt{t.}n}$
\For {$l$ from $1$ to $\texttt{t.}h$}
\State $X \gets \texttt{t.c}[l]\cap A$
\If {$X \not= \emptyset$}
\State $F \gets \textsc{GetForest}(\texttt{t}, l, \min(X) + 1, \max(A))$
\State Add $F$ at begin of $L$.
\State $A \gets \IEM{1}{\min(X) - 1}$
\EndIf
\EndFor
  \State \textbf{return} $L$
\EndFunction
\end{algorithmic}
\end{algorithm}

\begin{algorithm}[h]
\small
\caption{\small Return the right comb decomposition of the tree \texttt{t}}\label{A:RightComb}
\begin{algorithmic}[1]
\Function{RightCombDecomposition}{\texttt{t}}
\State $R \gets $ empty list
\State $A \gets \IEM{1}{\texttt{t.}n}$
\For {$l$ from $1$ to $\texttt{t.}h$}
\State $X \gets \texttt{t.c}[i]\cap A$
\If {$X \not= \emptyset$}
\State $F \gets \textsc{GetForest}(\texttt{t}, l, \min(A), \max(X) - 1)$
\State Add $F$ at end of $R$.
\State $A \gets \IEM{\max(X) + 1}{\texttt{t}.n}$
\EndIf
\EndFor
  \State \textbf{return} $R$
\EndFunction
\end{algorithmic}
\end{algorithm}

Fiannly, one of the most important algorithm of this work is {\ttfamily AtomicProduct}. It describes exactly the procedure described in definition~\ref{defi:operations_codes}. For this purpose, we will read grids from the bottom to the top. Variables $h_t$ and $h_s$ are the current line of the grid of respectively {\ttfamily t} or {\ttfamily s} that we should read next, meanwhile $h$ is the current line in which we should write in the result grid {\ttfamily c}. Lines 24-27 is the copy of the code of {\ttfamily t} at the left top corner of {\ttfamily c}. Lines 9-21 is managing the different cases detailed in definition~\ref{defi:quasishuffle_code}. We apply this algorithm on the code in example~\ref{exam:algo_atomic_product} and we add the current height for copying the code of respectively $t$ and $s$, $h_t$ and $h_s$ with its different regimes.

\begin{algorithm}[h]
\small
\caption{\small Return atomic product of two Schroeder tree for a given quasi shuffle.}
\label{A:AtomicProduct}
\begin{algorithmic}[1]
\Function{AtomicProduct}{$\texttt{t}$, $\texttt{s}$, $\sigma$}
\State $L \gets \textsc{RightCombDecomposition}(\texttt{t})$ \Comment{Indexed from $1$}
\State $R \gets
\textsc{LeftCombDecomposition}(\texttt{s})$ \Comment{Indexed from $1$}
\State $k \gets$ size of $L$;  $l \gets$ size of $R$; $r\gets \max(\sigma)$
\State $h_t \gets 0$; $h_s \gets 0$
\State $f_L \gets 1$; $f_R \gets 1$
\State $\texttt{c}\gets$ array of size $(\texttt{t.}h+\texttt{s.}h + r - k - l)$ \Comment{Indexed from $0$}
\State $h\gets 0$; $\texttt{c}[0] = \emptyset$
\For {$i_r$ from $1$ to $r$}
\BeginBox
\Statex \hspace{1cm}(L) Behaviour of a left grafting
\If {$\sigma(k + f_R) \not= i_r$ }  \Comment{Case $\sigma^{-1}(i_r)= \{f_L\}$}
\State $h_t\gets h_t+1$; $f_L \gets \min(f_L + 1, k)$
\State $h\gets h+1$; $\texttt{c}[h] = \texttt{t.c}[h_t] \sqcup (\texttt{t.}n + \texttt{s.}c[h_s])$
\EndBox
\Else\Comment{We have $\sigma(k + f_R) = i_r$}
\BeginBox 
\Statex \hspace{0.5cm} (L+R) Behaviour for a simultaneous left and right graftings
\If{$\sigma(f_L) = i_r$ } \Comment{Case $\sigma^{-1}(i_r)= \{f_L, k + f_R\}$}
\State $h_t\gets h_t+1$; $f_L \gets \min(f_L + 1, k)$
\EndBox
\EndIf
\BeginBox
\Statex \hspace{1.5cm} (R) Behaviour for a right grafting
\For{$j$ from $0$ to $L[f_L].h$} \Comment{Insert code of $L[f_L]$}
\State $h_s \gets h_s + 1$
\State $h\gets h+1$; $\texttt{c}[h] = \texttt{t.c}[h_t] \sqcup (\texttt{t.}n + \texttt{s.}c[h_s])$
\EndFor
\State $f_R \gets \min(f_R +1, l)$
\EndBox
\EndIf
\EndFor

\BeginBox
\Statex  \hspace{1.5cm} (I) Initial behaviour of the algorithm
\While {$h_t < \texttt{t}.h$}\Comment{Copy forests from \texttt{t}}
\State $h_t \gets h_t + 1$
\State $h\gets h+1$; $\texttt{c}[h] = \texttt{t.c}[h_t] \sqcup (\texttt{t.}n + \texttt{s.}c[h_s])$\Comment{Here $h_s = \texttt{s.}h$}
\EndBox
\EndWhile
\State \textbf{return} Tree \texttt{p} with $\texttt{p.}n = \texttt{t.}{n} + \texttt{s.}{n}$, $\texttt{p.}h= h$ and $\texttt{f.c} = \texttt{c}$.
\EndFunction
\end{algorithmic}
\end{algorithm}

\begin{exam}\label{exam:algo_atomic_product}
We detail the application of algorithm~\ref{A:AtomicProduct} onto the product of $T$ and $S$ given below:
\[
T=\raisebox{-0.3\height}{\begin{tikzpicture}[x=1.2em,y=1.2em,baseline=2.5em]
		\filldraw [fill=lightgray!40] (0,4) rectangle ++ (2,3);
		\filldraw [fill=lightgray!40] (3,2) rectangle ++ (2,3);
		\filldraw [fill=lightgray] (2,5) rectangle ++ (3,2);
		\filldraw [fill=lightgray] (2,1) rectangle ++ (1,4);
		\filldraw [fill=lightgray] (5,2) rectangle ++ (1,5);
		\draw[pattern=north west lines, pattern color=gray] (0,1) rectangle ++ (2,3);
		%horizontal
		\draw [line width = 1] (-2,7) -- ++ (8, 0);
		\draw [line width = 1] (2,6) -- ++ (4, 0);
		\draw [line width = 1] (2,4) -- ++ (1, 0);
		\draw [line width = 1] (5,4) -- ++ (1, 0);
		\draw [line width = 1] (-2,4) -- ++ (4, 0);
		\draw [line width = 1] (2,5) -- ++ (4, 0);
		\draw [line width = 1] (-2,3) -- ++ (5, 0);
		\draw [line width = 1] (-2,2) -- ++ (8, 0);
		\draw [line width = 1] (-2,1) -- ++ (8, 0);
		\draw [line width = 1] (-2,0) -- ++ (8, 0);
		\draw [line width = 1] (5,3) -- ++ (1, 0);
		\draw [line width = 1] (-2,5) -- ++ (2, 0);
		\draw [line width = 1] (-2,6) -- ++ (2, 0);
		%vertical
		\draw [line width = 1] (0,0) -- ++ (0, 7);
		\draw [line width = 1] (2,0) -- ++ (0, 7);
		\draw [line width = 1] (3,0) -- ++ (0, 7);	
		\draw [line width = 1] (5, 0) -- ++ (0, 7);
		\draw [line width = 1] (6, 0) -- ++ (0, 7);
		\draw [line width = 1] (4, 0) -- ++ (0, 2);
		\draw [line width = 1] (5, 0) -- ++ (0, 1);
		\draw [line width = 1] (4, 5) -- ++ (0, 2);
		\draw [line width = 1] (1,0) -- ++ (0, 4);
		\draw(1,5.5) node{${F_1}$};
		\draw(4,3.5) node{${F_2}$};
		% couvrement blanc
		\filldraw [fill=white, color=white] (-0.575,-0.2) rectangle ++(0.52,7.4);
		% Numéros de lignes
		\draw (-1,7.5) node{\textcolor{black}{$h_T$}};
		\draw (-1,0.5) node{\textcolor{lightgray}{$0$}};
		\draw (-1,1.5) node{\textcolor{lightgray}{$1$}};
		\draw (-1,2.5) node{\textcolor{lightgray}{$2$}};
		\draw (-1,3.5) node{\textcolor{lightgray}{$3$}};
		\draw (-1,4.5) node{\textcolor{lightgray}{$4$}};
		\draw (-1,5.5) node{\textcolor{lightgray}{$5$}};
		\draw (-1,6.5) node{\textcolor{lightgray}{$6$}};
\end{tikzpicture}}
\text{ and }
S=\raisebox{-0.3\height}{\begin{tikzpicture}[x=1.2em,y=1.2em,baseline=2.5em]
		\filldraw [fill=lightgray!40] (0,4) rectangle ++ (2,3);
		\filldraw [fill=lightgray!40] (3,1) rectangle ++ (2,3);
		\filldraw [fill=lightgray] (2,4) rectangle ++ (3,3);
		\filldraw [fill=lightgray] (2,1) rectangle ++ (1,3);
		\filldraw [fill=lightgray] (-1,4) rectangle ++ (1,3);
		%horizontal
		\draw [line width = 1] (2,6) -- ++ (5, 0);
		\draw [line width = 1] (-1,4) -- ++ (8, 0);
		\draw [line width = 1] (2,5) -- ++ (5, 0);
		\draw [line width = 1] (0,3) -- ++ (1, 0);
		\draw [line width = 1] (0,2) -- ++ (1, 0);
		\draw [line width = 1] (-1,1) -- ++ (8, 0);
		\draw [line width = 1] (-1,0) -- ++ (8, 0);
		\draw [line width = 1] (-1,7) -- ++ (8, 0);
		\draw [line width = 1] (-1,2) -- ++ (4, 0);
		\draw [line width = 1] (-1,3) -- ++ (4, 0);
		\draw [line width = 1] (-1,5) -- ++ (1, 0);
		\draw [line width = 1] (-1,6) -- ++ (1, 0);
		\draw [line width = 1] (5,3) -- ++ (2, 0);
		\draw [line width = 1] (5,2) -- ++ (2, 0);
		%vertical
		\draw [line width = 1] (2,0) -- ++ (0, 7);
		\draw [line width = 1] (3,0) -- ++ (0, 7);	
		\draw [line width = 1] (5, 0) -- ++ (0, 7);
		\draw [line width = 1] (-1, 0) -- ++ (0, 7);
		\draw [line width = 1] (4, 0) -- ++ (0, 1);
		\draw [line width = 1] (5, 0) -- ++ (0, 1);
		\draw [line width = 1] (4, 4) -- ++ (0, 3);
		\draw [line width = 1] (1,0) -- ++ (0, 4);
		\draw(1,5.5) node{${F_4}$};
		\draw(4,2.5) node{${F_3}$};
		% Couvrement blanc
		\filldraw [fill=white, color=white] (5.075,-0.2) rectangle ++(0.5,7.4);
		\draw [line width = 1] (0,0) -- ++ (0, 7);
		% Numéros de lignes
		\draw (6,7.5) node{\textcolor{black}{$h_S$}};
		\draw (6,0.5) node{\textcolor{darkgray}{$0$}};
		\draw (6,1.5) node{\textcolor{darkgray}{$1$}};
		\draw (6,2.5) node{\textcolor{darkgray}{$2$}};
		\draw (6,3.5) node{\textcolor{darkgray}{$3$}};
		\draw (6,4.5) node{\textcolor{darkgray}{$4$}};
		\draw (6,5.5) node{\textcolor{darkgray}{$5$}};
		\draw (6,6.5) node{\textcolor{darkgray}{$6$}};
	\end{tikzpicture}}
	\]
	with the quasishuffle $\sigma=(1,3,2,3)$:
	\begin{center}
	\begin{tikzpicture}[x=1.2em,y=1.2em,baseline=2.5em]
		\filldraw [fill=lightgray!40] (0,4) rectangle ++ (2,3);
		\filldraw [fill=lightgray!40] (3,2) rectangle ++ (2,3);
		\filldraw [fill=lightgray] (2,5) rectangle ++ (3,2);
		\filldraw [fill=lightgray] (2,-4) rectangle ++ (1,10);
		\filldraw [fill=lightgray] (5,3) rectangle ++ (7,4);
		\filldraw [fill=lightgray] (5,0) rectangle ++ (1,3);
		\draw[pattern=north west lines, pattern color=gray] (0,-4) rectangle ++ (2,8);
		\draw[pattern=north west lines, pattern color=gray] (3,0) rectangle ++ (2,2);
		%horizontal
		\draw [line width = 1] (-2,7) -- ++ (16, 0);
		\draw [line width = 1] (2,6) -- ++ (12, 0);
		\draw [line width = 1] (2,4) -- ++ (1, 0);
		\draw [line width = 1] (5,4) -- ++ (10, 0);
		\draw [line width = 1] (-2,6) -- ++ (2, 0);
		\draw [line width = 1] (-2,4) -- ++ (2, 0);
		\draw [line width = 1] (2,5) -- ++ (12, 0);
		\draw [line width = 1] (-2,5) -- ++ (2, 0);
		\draw [line width = 1] (-2,3) -- ++ (5, 0);
		\draw [line width = 1] (-2,2) -- ++ (8, 0);
		\draw [line width = 1] (-2,1) -- ++ (8, 0);
		\draw [line width = 1] (-2,0) -- ++ (16, 0);
		\draw [line width = 1] (5,3) -- ++ (9, 0);
		%vertical
		\draw [line width = 1] (0,-5) -- ++ (0, 12);
		\draw [line width = 1] (2,-5) -- ++ (0, 12);
		\draw [line width = 1] (3,-5) -- ++ (0, 12);
		\draw [line width = 1] (5, -5) -- ++ (0, 12);
		\draw [line width = 1] (6, -5) -- ++ (0, 12);
		\draw [line width = 1] (4, -5) -- ++ (0, 7);
		\draw [line width = 1] (5, -5) -- ++ (0, 6);
		\draw [line width = 1] (4, 5) -- ++ (0, 2);
		\draw [line width = 1] (1,-5) -- ++ (0, 9);
		\draw(1,5.5) node{${F_1}$};
		\draw(4,3.5) node{${F_2}$};
		
		\begin{scope}[shift={(7,-4)}]
			\filldraw [fill=lightgray!40] (0,4) rectangle ++ (2,3);
			\filldraw [fill=lightgray!40] (3,1) rectangle ++ (2,3);
			%\filldraw [fill=lightgray] (3,3) rectangle ++ (2,1);
			\filldraw [fill=lightgray] (2,4) rectangle ++ (3,4);
			\filldraw [fill=lightgray] (0,7) rectangle ++ (3,1);
			\filldraw [fill=lightgray] (2,1) rectangle ++ (1,3);
			\filldraw [fill=lightgray] (-1,4) rectangle ++ (1,4);
			%horizontal
			\draw [line width = 1] (-1,7) -- ++ (6, 0);
			\draw [line width = 1] (2,6) -- ++ (5, 0);
			\draw [line width = 1] (2,4) -- ++ (3, 0);
			\draw [line width = 1, double] (-7,4) -- ++ (12, 0);
			\draw [line width = 1] (-9,4) -- ++ (2, 0);
			\draw [line width = 1] (5,4) -- ++ (2, 0);
			\draw [line width = 1] (2,5) -- ++ (5, 0);
			\draw [line width = 1] (0,3) -- ++ (3, 0);
			\draw [line width = 1] (0,2) -- ++ (3, 0);
			\draw [line width = 1, double] (-7,0) -- ++ (12, 0);
			\draw [line width = 1, double] (-7,-1) -- ++ (12, 0);
			\draw [line width = 1] (-9,0) -- ++ (2, 0);
			\draw [line width = 1] (-9,-1) -- ++ (2, 0);
			\draw [line width = 1] (5,0) -- ++ (2, 0);
			\draw [line width = 1] (5,-1) -- ++ (2, 0);
			\draw [line width = 1] (-9,1) -- ++ (12, 0);
			\draw [line width = 1, double] (-7,1) -- ++ (12, 0);
			\draw [line width = 1] (-9,1) -- ++ (2, 0);
			\draw [line width = 1] (5,1) -- ++ (2, 0);
			\draw [line width = 1] (-9,2) -- ++ (9, 0);
			\draw [line width = 1, double] (-7,3) -- ++ (10, 0);
			\draw [line width = 1] (-9,3) -- ++ (2, 0);
			\draw [line width = 1] (-7,5) -- ++ (7, 0);
			\draw [line width = 1] (-7,6) -- ++ (7, 0);
			\draw [line width = 1] (5,2) -- ++ (2, 0);
			\draw [line width = 1] (5,3) -- ++ (2, 0);
			%vertical
			\draw [line width = 1] (0,-1) -- ++ (0, 12);
			\draw [line width = 1] (2,-1) -- ++ (0, 12);
			\draw [line width = 1] (3,-1) -- ++ (0, 12);
			\draw [line width = 1] (5,-1) -- ++ (0, 12);
			\draw [line width = 1] (-1,-1) -- ++ (0, 7);
			\draw [line width = 1] (4,-1) -- ++ (0, 2);
			\draw [line width = 1] (5, 0) -- ++ (0, 1);
			\draw [line width = 1] (4, 4) -- ++ (0, 7);
			\draw [line width = 1] (1,-1) -- ++ (0, 5);
			\draw [line width = 1] (1,7) -- ++ (0, 4);
			\draw(1,5.5) node{${F_4}$};
			\draw(4,2.5) node{${F_3}$};
		\end{scope}
		% recouvrements blancs
		\filldraw [fill=white, color=white] (12.075,-6.2) rectangle ++ (0.5,14.4);
		\filldraw [fill=white, color=white] (-0.575,-6.2) rectangle ++ (0.5,14.4);
		% Numéros de lignes
		\draw (13,-4.5) node{\textcolor{darkgray}{$0$}};
		\draw (13,-3.5) node{\textcolor{darkgray}{$0$}};
		\draw (13,-2.5) node{\textcolor{darkgray}{$1$}};
		\draw (13,-1.5) node{\textcolor{darkgray}{$2$}};
		\draw (13,-0.5) node{\textcolor{darkgray}{$3$}};
		\draw (13,0.5) node{\textcolor{darkgray}{$4$}};
		\draw (13,1.5) node{\textcolor{darkgray}{$5$}};
		\draw (13,2.5) node{\textcolor{darkgray}{$6$}};
		\draw (13,3.5) node{\textcolor{darkgray}{$6$}};
		\draw (13,4.5) node{\textcolor{darkgray}{$6$}};
		\draw (13,5.5) node{\textcolor{darkgray}{$6$}};
		\draw (13,6.5) node{\textcolor{darkgray}{$6$}};
		\draw (13,7.5) node{\textcolor{black}{$h_S$}};
		\draw (-1,7.5) node{\textcolor{black}{$h_T$}};
		\draw (-1,-4.5) node{\textcolor{lightgray}{$0$}};
		\draw (-1,-3.5) node{\textcolor{lightgray}{$1$}};
		\draw (-1,-2.5) node{\textcolor{lightgray}{$1$}};
		\draw (-1,-1.5) node{\textcolor{lightgray}{$1$}};
		\draw (-1,-0.5) node{\textcolor{lightgray}{$2$}};
		\draw (-1,0.5) node{\textcolor{lightgray}{$2$}};
		\draw (-1,1.5) node{\textcolor{lightgray}{$2$}};
		\draw (-1,2.5) node{\textcolor{lightgray}{$2$}};
		\draw (-1,3.5) node{\textcolor{lightgray}{$3$}};
		\draw (-1,4.5) node{\textcolor{lightgray}{$4$}};
		\draw (-1,5.5) node{\textcolor{lightgray}{$5$}};
		\draw (-1,6.5) node{\textcolor{lightgray}{$6$}};
		% étapes de l'algorithme
		\draw [line width=1] (14,-5) --++ (0,12);
		\draw [line width=1] (16,-5) --++ (0,12);
		\draw [line width=1] (14,-5) --++ (2,0);
		\draw [line width=1] (14,-4) --++ (2,0);
		\draw [line width=1] (14,-3) --++ (2,0);
		\draw [line width=1] (14,-2) --++ (2,0);
		\draw [line width=1] (14,-1) --++ (2,0);
		\draw [line width=1] (14,0) --++ (2,0);
		\draw [line width=1] (14,1) --++ (2,0);
		\draw [line width=1] (14,2) --++ (2,0);
		\draw [line width=1] (14,3) --++ (2,0);
		\draw [line width=1] (14,4) --++ (2,0);
		\draw [line width=1] (14,5) --++ (2,0);
		\draw [line width=1] (14,6) --++ (2,0);
		\draw [line width=1] (14,7) --++ (2,0);
		\draw [line width=1] (15,-4.5) node{L};
		\draw [line width=1] (15,-3.5) node{R};
		\draw [line width=1] (15,-2.5) node{R};
		\draw [line width=1] (15,-1.5) node{R};
		\draw [line width=1] (15,-0.5) node{L+R};
		\draw [line width=1] (15,0.5) node{L+R};
		\draw [line width=1] (15,1.5) node{L+R};
		\draw [line width=1] (15,2.5) node{I};
		\draw [line width=1] (15,3.5) node{I};
		\draw [line width=1] (15,4.5) node{I};
		\draw [line width=1] (15,5.5) node{I};
		\draw [line width=1] (15,6.5) node{I};
		\draw [line width=1] (15,7.5) node{Algo};
		% découpage de la grille
		\draw [line width=2, color=green!50!black] (6,-5) --++ (0,12);
		\draw [line width=2, color=green!50!black] (0,2) --++ (6,0);
		\draw [line width=2, color=green!50!black] (6,3) --++ (6,0);
	\end{tikzpicture}
	\end{center}
\end{exam}

\subsection{Algorithms for cuts}

In order to enumerate the list of single cuts of a tree, we use the \emph{left priority packed word} associated to a tree and we use the description of single cuts introduced in section~\ref{sec:coproduct_tree_codes}. Algorithm~\ref{A:PackedWord} gives this word looking at the new black squares appearing between a line and the one above it (line 4--6).

\begin{algorithm}[h]
\small
\caption{\small Return the left priority packed word associated to a Schroeder tree.}
\label{A:PackedWord}
\begin{algorithmic}[1]
\Function{PackedWord}{$\texttt{t}$}
\State \texttt{w} $\gets$ array of size $\texttt{t.}n$ \Comment{Indexed from $1$}
\For {$l$ from $0$ to $\texttt{t}.h - 1$}
\State $D \gets \texttt{t.c}[l + 1] \setminus \texttt{t.c}[l]$
\For {$d\in D$}
\State $\texttt{w}[d] \gets \texttt{t}.h - l$
\EndFor
\EndFor
\State \textbf{return} $w$
\EndFunction
\end{algorithmic}
\end{algorithm}

 Algorithm~\ref{A:SingleCuts} is enumerating the list of single cuts of a tree. For this, we look first at the angles vanishing at a given level (line 6-8), then we look for the subfactor that contains all these angles such that its adjacent columns contain a larger integer than the maximum of this subfactor (line 9--13). 
 This subfactor contains the locations of the angles that a single cut of the tree will withdraw.
\begin{algorithm}[h]
\small
\caption{\small Return all the single cuts associated to a Schroeder tree.}
\label{A:SingleCuts}
\begin{algorithmic}[1]
\Function{SingleCuts}{$\texttt{t}$}
\State $w\gets$\textsc{PackedWord}(\texttt{t}) \Comment{Indexed from $1$}
\State $sc \gets \emptyset$
\State $m\gets \max(w)$
\For {$l$ from $1$ to $m$}
\State $A\gets w^{-1}[l]$
\State $a_l \gets \min(A)$
\State $a_r \gets \max(A)$
\While {$a_l > 1$ and $w[a_l - 1] < l$}
\State $a_l\gets a_l - 1$
\EndWhile
\While {$a_r < \texttt{t.}n$ and $w[a_r + 1] < l$}
\State $a_r\gets a_r + 1$
\EndWhile
\State $sc \gets sc  \sqcup \{\IEM{a_l}{a_r}\}$
\EndFor
\State \textbf{return} $sc$
\EndFunction
\end{algorithmic}
\end{algorithm}

We perform an exploration of the prunings of a Schroeder tree $t$ using a stack \texttt{stack} which is a container equipped with four operations:
\begin{itemize}
    \item \texttt{stack.empty} testing if the stack is empty;
    \item \texttt{stack.push} putting an element at the top of the stack;
    \item \texttt{stack.top} returning the element at the top of the stack;
    \item \texttt{stack.pop} removing the element at the top of the stack.
\end{itemize}

Algorithm~\ref{A:PruningEnumeration} gives an enumeration of all the prunings. The {\ttfamily Treat} function (line~6) symbolizes a task to perform. To enumerate properly prunings, we begin with the empty cut $\emptyset$, we treat this cut and we add onto {\ttfamily stack} all the single cuts whose minimum cutten angle is smaller than the current treated angle (line 4-9). At the end, we have seen all possible admissible prunings.

	\begin{algorithm}[h]
\small
\caption{\small Prunings enumeration of a Schroeder tree.}
\label{A:PruningEnumeration}
\begin{algorithmic}[1]
\Procedure{PruningEnumeration}{$\texttt{t}$}
\State $sc\gets \textsc{SingleCuts}(\texttt{t})$
\State \texttt{stack.push}($\emptyset$)\Comment{The empty pruning}
\While {not \texttt{stack.empty}()}
\State $p\gets \texttt{stack.pop}()$
\State \texttt{Treat}(\texttt{t}, $p$) \Comment{Perform some task for the pruning $p$ of \texttt{t}}
\For{$c$ in $sc$}
\If {$\max(p) < \min(c)$} \Comment{With convention $\max(\emptyset)=-\infty$}
\State \texttt{stack.push}($p\sqcup \{c\}$)
\EndIf
\EndFor
\EndWhile
\EndProcedure
\end{algorithmic}
\end{algorithm}

 To implement the coproduct, one can just define the {\ttfamily Treat} procedure such that it gets trees generated by the pruning, to get the coproduct $\Delta$ (theorem~\ref{thm:coproduit}).

\section{Library \texttt{FTridend}} \label{sec:coding}

We have designed a \texttt{C++} library, called \texttt{FTridend}, to work with the tridendriform tree algebra.
The source code of this library is available on GitHub \cite{github}.
In this section we give some usefull details of the implementation of \texttt{FTridend}.

\subsection{Schroeder tree}

As introduced in section \ref{ss:schroeder_tree} a Schroeder tree $t$ is represented in a machine by its number of angles $n$, its height $h$ and its initial chain $\varphi_n(t) = (C_0, \ldots C_h)$ where each~$C_i$ is a subset of $\IEM{1}{n}$.

The initial chain $C$ is naturally encoded as the array $[C_0,\ldots,C_h]$.
Now the question, is how to encode the subset $C_i=\{x_1,\ldots,x_\ell\}$ of $\IEM{1}{n}$?
A naive solution consists in use an array $[x_1,\ldots,x_\ell]$ of integers.
As we perform lots of union and intersection of subsets $\IEM{1}{n}$, this approach needs lots of memory accesses, pouring the efficiency of our algorithms.

The solution we have chosen is to use one-hot encoding for implementation of the parts of initial chains. 
For this we have to fix a positive integer $N$ and assume that all Schroeder trees we will consider have at most $N$ angles. 
With this assumptions each part of an initial chain is a subset of $\IEM{1}{N}$ and is characterized by a unique integer $\IEM{0}{2^N-1}$ thanks to the map
\[
\eta_N:\left\lbrace\begin{array}{rcl}
\mathcal{P}(\IEM{1}{N})&\to&\IEM{0}{2^N-1}\\
A & \mapsto & \displaystyle\sum_{i \in A} 2^{i-1}
\end{array}\right.
\]
To fit hardware integers format available on personal computers, typical values of $N$ are $16$, $32$ or~$64$.
Basic logical instructions on integers allow very efficient operations on subsets of $\IEM{1}{N}$ :   
\[
\eta_N(A\cap B) = \eta_N(A)\  \texttt{and}\ \eta_N(B), \quad \eta(A\cup B) = \eta_N(A)\ \texttt{xor}\ \eta_N(B).
\]
More specific machine instructions can be used in this context. For instance, $\texttt{popcount}(\eta_N(A))$ returns the size of the set $A$. 
 
\subsection{Abstract module}

Two abstract vector spaces arise from the tridendriform algebra: $\K\Sch$ and $\K\Sch \otimes \K\Sch$.
Let us focus on the case $\K\Sch$. 
As the typical elements of $\K\Sch$  that we will use contain only few non zero terms, we code elements of $\K\Sch$ with sparse vector.
Hence elements of $\K\Sch$ will be naturally coded using an \texttt{unordered\_map} container. 

Such a container can be seen as a dictionary mapping keys to values.
For our applications, values will be integers.
In order to use an \texttt{unordered\_map} to encode $\K\Sch$ we have to design an hash function on Schroeder tree with values on $64$ bits integers.
This will be done using the $\eta_N$ map defined on the previous section.

Let $t$ be a Schroeder tree with initial chain $C=(C_0,\ldots,C_h)$.
The hash value of $t$ is given
\[
\text{hash}(t)=\sum_{i=0}^h \eta_N(C_i).
\]  
We implement elements of $\K\Sch\otimes \K\Sch$ the same way except that keys are couples of Schroeder tree.
The hash value of $(t_1,t_2)$ is then defined to be $\text{hash}(t_1) + 2^{32}\times \text{hash}(t_2)$.
We multiply the hash of value of $t_2$ by $2^{32}$ to break symmetries.

\subsection{Quasi-shuffle enumeration}

Consider two Schroeder trees $t$ and $s$ and denote by $k$ (respectively $l$) the right tree length of $t$ (respectively the left tree length of $s$), see definition \ref{defi:tree_length}.
Algorithm~\ref{A:AtomicProduct} returns the atomic product of $t$ and $s$ relatively to a given element $\sigma\in\batc(k,l)$.
To compute $t \odot s$ for $\odot\in\{\pleft,\pmid,\pright,\preceq,\succeq,\ast\}$, we have to enumerate specific subsets of $\batc(k,l)$. We define
\begin{align*}
 \batc_{L}(k,l) &\coloneqq \{\sigma \in \batc(k,l),\, \sigma^{-1}(\{1\})=\{1\}\}\\
 \batc_{M}(k,l) &\coloneqq \{\sigma \in \batc(k,l),\, \sigma^{-1}(\{1\})=\{1,k+1\}\}\\
 \batc_{R}(k,l) &\coloneqq \{\sigma \in \batc(k,l),\, \sigma^{-1}(\{1\})=\{k+1\}\}
 \end{align*}
The following table gives subset attached to each product:
\[
\begin{array}{|c|c|}
\hline
 \text{product} & \text{subset of $\batc(k,l)$} \\
 \hline \hline
 t \pleft s & \batc_L(k,l) \\
 \hline
 t \pmid s & \batc_M(k,l) \\
 \hline
t \pright s & \batc_R(k,l) \\
\hline
t \preceq s &  \batc_L(k,l) \sqcup \batc_M(k,l)\\
\hline
t \succeq s &  \batc_M(k,l) \sqcup \batc_R(k,l)\\
\hline
t \ast s &  \batc_L(k,l) \sqcup \batc_M(k,l) \sqcup \batc_R(k,l)\\
\hline
 \end{array}
\]
Thanks to Lemma~\ref{lem:Indution_batc}, we can enumerate sets $\batc_{L}(k,l)$,  $\batc_{M}(k,l)$ and $\batc_{R}(k,l)$ from sets $\batc(k-1,l)$, $\batc(k-1,l-1)$ and $\batc(k,l-1)$ respectively.
Our enumeration of element $\sigma\in\batc(k,l)$ to compute $t\odot l$ is based on the scheme of figure~\ref{fig:scheme}.
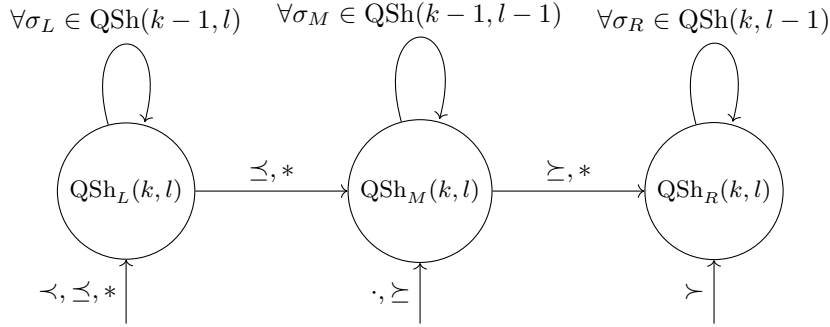
\begin{figure}[h]
    \centering
    \begin{tikzpicture}[x=5pt, y=5pt]
\node[state,minimum size=36pt] (L) at (0,0) {\small$\batc_{L}(k,l)$};
\node[state,minimum size=36pt] (M) at (22,0) {\small$\batc_{M}(k,l)$};
\node[state,minimum size=36pt] (R) at (44,0) {\small$\batc_{R}(k,l)$};
\path[->] (L) edge [loop above] node {$\forall \sigma_L\in\batc(k-1,l)$} ();
\path[->] (M) edge [loop above] node {$\forall \sigma_M\in\batc(k-1,l-1)$} ();
\path[->] (R) edge [loop above] node {$\forall \sigma_R\in\batc(k,l-1)$} ();
\path[->] (L) edge node [above] {$\preceq,\ast$} (M);
\path[->] (M) edge node [above] {$\succeq,\ast$} (R);
\path[->] (0,-10) edge node [left] {$\pleft,\preceq,\ast$} (L);
\path[->] (22,-10) edge node [left] {$\pmid,\succeq$} (M);
\path[->] (44,-10) edge node [left] {$\pright$} (R);
\end{tikzpicture}
    \caption{Scheme to enumerate quasi-shuffles}
    \label{fig:scheme}
\end{figure}
For instance, to compute $s \preceq t$ we follow the arrows labelled $\preceq$ starting with the vertical one.
Elements of $\batc_{L}(k,l)$ are build from elements of $\batc(k-1, l)$.
Once we have consumed all elements of $\batc(k-1, l)$, we go to the next right state and we construct all the elements of $\batc_{M}(k,l)$ from an enumeration of $\batc(k-1, l -1)$.
Once this is done the enumeration of elements of $\batc(k,l)$ used to compute $t \preceq s$ is done.
Summing all the atomic products obtained with Algorithm~\ref{A:AtomicProduct} and using the sparse vector structure defined in the previous section we obtain $t \preceq s \in \K\Sch$.

\subsection{Primitive elements generation}

Finally, we can reach our main objective and encode the $\Omega$ map detailed in theorem~\ref{thm:génération2} (a simplified version of the work~\cite{euleridem}) in order to be able to compute a large amount of primitives elements. Let us remind that the process to compute them is detailled in figure~\ref{fig:algo_idea}.

For this purpose, given any $n\in\NN$ we want to compute $\Prim_{\Coass}(n)$. Suppose we know $\Prim_{\Coass}(k)$ for all $n<k$ and $\Prim_{\Codend}(n)$. To compute a new primitive from what we know, we need to choose an element from the set $W_n$ of equation~\ref{eq:Wn}:
\[
W_n=\Prim_{\Codend}(\K\Sch)_n \oplus \left\langle w \, \middle| \, w\in T^k(\Prim_{\Coass}(\K\Sch))(n) \text{ with } k>1 \right\rangle.
\]
So, an overall description of the algorithm computing all the primitives of degree $n$ is:
\begin{itemize}
	\item first, we generate an enumeration of ordered partition of $n=a_1+\dots +a_l$ of length $l$ , denoted $\mathbf{a}$, for all $l\in\IEM{1}{n}$;
	\item if $\mathbf{a}=(n)$, we apply the identity from $\Prim_{\Codend}(\K\Sch(n))$ into $\Prim_{\Coass}(\K\Sch(n))$;
	\item else according to this partition $\mathbf{a}=(a_1,\dots, a_l)$ with $l\geq 2$, we initialize $l$ generators $e_i$ for each $\Prim_{\Coass}(\K\Sch(n))$ given an enumeration. Then, for each state of $(e_1,\dots,e_l)$ (until each generator comes to its end) it gives a primitive element $\Omega(a_1\otimes a_2 \otimes \ldots \otimes a_l)$;
	\item we put in memory the results in a matrix, fixing an enumeration of \Schw trees of degree $n$. Columns are indexed by this enumeration and the $i$th row is the $i$th result obtained by this algorithm;
	\item Once the enumeration of ordered partitions is finished, we compute a generating family of the image space (this is an ad hoc choice) of this matrix that we can display.
\end{itemize}
Some computations following this process are given in appendix~\ref{appen:A}.

\section{Conclusion}

Thanks to this work, we have made effective the computation of all primitives elements implementing the whole Hopf algebraic structure of the free tridendriform algebra for reasonable degrees. We also spotted a link between \Schw{} trees and some particular packed that might be relevant to study.

The code in {\ttfamily SageMaths} is available online~\cite{github} and will be customized and updated, adding new features like the graded dual of this algebra $\TSym$~\cite{Bergeron_23} and bug corrections. Hence, one can look for a non-inductive formula in order to produce all the primitive elements of $\K\Sch{}$ relying in this work or to perform any experimental test he wants about \Schw{} trees.

In future works, one can mimic such a work to implement the free Rota-Baxter algebra the same way to develop it for the study of arborified multiple zeta values in number theory~\cite{Catoire_2025}.

\appendix

\tikzset{x=0.50cm,y=0.50cm}
\section{Primitives elements}\label{appen:A}

\subsection{A basis of $\Prim_{\Coass}(\K\Sch(1))$}

\begin{enumerate}
	\item % [inline block 1: 125 envs, 101199 chars in 4 pieces, piece 1 here, a bare % at each other -> data_tex | \begin{tikzpicture} 		\draw (0.000000,0.250000) -- (0.125000,0.000000); ...]

	
\end{enumerate}

\subsection{A basis of $\Prim_{\Coass}(\K\Sch(2))$}

\begin{enumerate}
	\item -\,%
	
\end{enumerate}

\subsection{A basis of $\Prim_{\Coass}(\K\Sch(3))$}\label{appen:sec:prim3}

\begin{enumerate}
	\item -\,%
	
\end{enumerate}

\subsection{A basis of $\Prim_{\Coass}(\K\Sch(4))$} \label{appen:sec:prim4}

\begin{enumerate}
	
	\item %
	
\end{enumerate}

\bibliographystyle{plain}
\bibliography{article.bib}

\end{document}